\setlist[enumerate,1]{label=\bf (\roman*)}
\setlist{
  leftmargin=.65cm,
}
\newtcolorbox{standout}{
  colback=gray!15,
  boxrule=0pt,
  left=.3cm,
  right=.3cm,
  top=.18cm,
  bottom=.18cm,
  boxsep=0pt
}
\newcommand{\bracket}[3]{%
  \stretchleftright
    {#1}
    {\ensurestackMath{\addstackgap[1pt]{#2}}}
    {#3}%
}
\newcommand{\bracketmid}[4]{%
  \stretchleftright{#1}{%
    \ensurestackMath{%
      \addstackgap[2pt]{#2}%
      \,\stretchrel*{|}{\addstackgap[2pt]{#2#3}}\,%
      \addstackgap[2pt]{#3}%
    }%
  }{#4}%
}
\theoremstyle{plain}
\newtheorem{theorem}{Theorem}[section]
\newtheorem{lemma}[theorem]{Lemma}
\newtheorem{proposition}[theorem]{Proposition}
\theoremstyle{definition}
\newtheorem{definition}[theorem]{Definition}
\newtheorem{example}[theorem]{Example}
\theoremstyle{remark}
\newtheorem{remark}[theorem]{Remark}
\crefname{equation}{}{}
\crefname{section}{\S}{\S\S}
\crefname{subsection}{\S}{\S\S}
\crefname{subsubsection}{\S}{\S\S}
\crefname{definition}{Def.}{Defs.}
\crefname{theorem}{Thm.}{Thms.}
\crefname{corollary}{Cor.}{Cors.}
\crefname{lemma}{Lem.}{Lems.}
\crefname{proposition}{Prop.}{Props.}
\crefname{remark}{Rem.}{Rems.}
\crefname{notation}{Ntn.}{Ntns.}
\crefname{fact}{Fact}{Fact}
\crefname{example}{Ex.}{Exs.}
\crefname{figure}{Fig.}{Figs.}
\crefname{table}{Tab.}{Tabs.}
\crefname{footnote}{ftn.}{ftns.}
\Crefname{footnote}{Ftn.}{Ftns.}
\definecolor{darkblue}{rgb}{0.05,0.25,0.65}
\definecolor{darkgreen}{RGB}{20,140,10}
\definecolor{lightgray}{rgb}{0.9,0.9,0.9}
\definecolor{darkorange}{RGB}{200,100,5}
\definecolor{darkyellow}{rgb}{.91,.91,0}
\definecolor{lightolive}{RGB}{225, 220, 185}
\let\originalsslash\sslash
\renewcommand{\sslash}{\mathord{\originalsslash}}
\newcommand{\cpt}{\mathpalette\cpt@inner\relax}
\newcommand{\cpt@inner}[2]{%
  \scalebox{0.5}[0.9]{$#1\cup$}
  #1\{\infty\}
}
\newcommand{\grayunderbrace}[2]{\mathcolor{gray}{\underbrace{\mathcolor{black}{#1}}}_{\mathcolor{gray}{#2}}}
\newcommand{\grayoverbrace}[2]{\mathcolor{gray}{\overbrace{\mathcolor{black}{#1}}}^{\mathcolor{gray}{#2}}}
\tikzset{
  snake left/.style={
    rounded corners,
    to path={
      let \p1 = (\tikztostart.east),
          \p2 = (\tikztotarget.west),
          \p3 = ($(\p1)!0.5!(\p2)$),
          \n1 = {8pt} 
      in
      (\p1)
      -- (\x1 + \n1, \y1)
      -- (\x1 + \n1, \y3)
      -- (\x2 - \n1, \y3) \tikztonodes
      -- (\x2 - \n1, \y2)
      -- (\p2)
    }
  }
}
\tikzset{
  uphordown/.style={
    rounded corners,
    to path={
      let \p1 = (\tikztostart.north),
          \p2 = (\tikztotarget.north),
          \n1 = {max(\y1,\y2) + 8pt}
      in
      (\p1)
      -- (\x1, \n1)
      -- (\x2, \n1) \tikztonodes 
      -- (\p2)
    }
  }
}
\tikzset{
  downhorup/.style={
    rounded corners,
    to path={
      let \p1 = (\tikztostart.south),
          \p2 = (\tikztotarget.south),
          \n1 = {min(\y1,\y2) - 8pt}
      in
      (\p1)
      -- (\x1, \n1)
      -- (\x2, \n1) \tikztonodes 
      -- (\p2)
    }
  }
}
\tikzset{
  rightvertleft/.style={
    rounded corners,
    to path={
      let \p1 = (\tikztostart.east),
          \p2 = (\tikztotarget.east),
          \n1 = {max(\x1,\x2) + 8pt}
      in
      (\p1)
      -- (\n1, \y1)
      -- (\n1, \y2) \tikztonodes 
      -- (\p2)
    }
  }
}
\tikzset{
  leftvertright/.style={
    rounded corners,
    to path={
      let \p1 = (\tikztostart.west),
          \p2 = (\tikztotarget.west),
          \n1 = {min(\x1,\x2) - 8pt}
      in
      (\p1)
      -- (\n1, \y1)
      -- (\n1, \y2) \tikztonodes 
      -- (\p2)
    }
  }
}
\newcommand{\inlinetikzcd}[1]{\begin{tikzcd}[sep=small, ampersand replacement=\&]#1\end{tikzcd}}
\newcommand{\CyclicGroup}[1]{\mathbb{Z}_{/#1}}
\newcommand{\CyclicReals}[1]{\mathbb{R}_{/#1}}
\newcommand{\proofstep}[1]{\scalebox{.7}{#1}}
\newcommand{\defneq}{\equiv}
\newcommand{\HilbertSpace}{%
  \mathcal{H}%
}
\newcommand{\Maps}{\mathrm{Map}}
\newcommand{\MapsComponent}[1]{\mathrm{Map}_{#1}}
\newcommand{\PointedMaps}{\mathrm{Map}^\ast}
\newcommand{\PointedMapsComponent}[1]{\mathrm{Map}^\ast_{#1}}
\begin{document}

$\,$
\vspace{-2cm} 
\title
[
  Higher Anyons via Higher Cohomotopy
]
{
  Higher-Dimensional Anyons 
  \\ 
  via Higher Cohomotopy
}

\thanks{\emph{Funding} by Tamkeen UAE under the 
NYU Abu Dhabi Research Institute grant {\tt CG008}.}

\author{Sadok Kallel}
\address{Department of Mathematics and Statistics, American University of Sharjah, UAE}
\curraddr{}
\email{skallel@aus.edu}
\thanks{}

\author{Hisham Sati}
\address{Mathematics Program and Center for Quantum and Topological Systems, New York University Abu Dhabi, UAE}
\curraddr{}
\email{hsati@nyu.edu}
\thanks{}

\author{Urs Schreiber           }
\address{Mathematics Program and Center for Quantum and Topological Systems, New York University Abu Dhabi, UAE}
\curraddr{}
\email{us13@nyu.edu}
\thanks{}

\subjclass[2020]{%
Primary: 
55Q55, 
20F18, 
55Q15, 
55Q25, 
81V27; 
Secondary:
81V70, 
55P35
}

\keywords{%
  integer Heisenberg groups,
  Cohomotopy,
  Samelson/Whitehead products,
  anyons,
  fractional quantum Hall systems,
  flux quantization, 
  Hypothesis H%
}

\date{\today}

\dedicatory{
  \href{https://ncatlab.org/nlab/show/Center+for+Quantum+and+Topological+Systems}{\includegraphics[width=3.1cm]{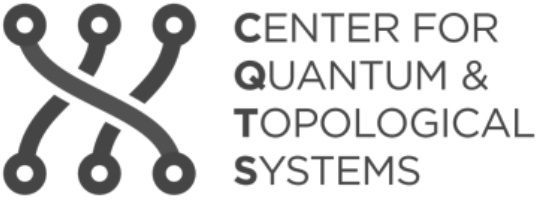}}
}

\begin{abstract}
We highlight that integer Heisenberg groups at level 2 underlie topological quantum phenomena: their group algebras coincide with the algebras of quantum observables of abelian anyons in fractional quantum Hall (FQH) systems on closed surfaces. Decades ago, these groups were shown to arise as the fundamental groups of the space of maps from the surface to the 2-sphere --- which has recently been understood as reflecting an effective FQH flux quantization in 2-Cohomotopy. Here we streamline and generalize this theorem using the homotopy theory of H-groups, showing that for $k \in \{1,2,4\}$, the non-torsion part of $\pi_1\, \Maps\bracket({(S^{2k-1})^2, S^{2k}})$ is an integer Heisenberg group of level 2,  where we identify this level with 2 divided by the Hopf invariant of the generator of $\pi_{4k-1}(S^{2k})$. This result implies the existence of higher-dimensional analogs of FQH anyons in the cohomotopical completion of 11D supergravity (``Hypothesis H'').  
\end{abstract}

\maketitle

\tableofcontents

\section{Introduction}

Here we prove a curious result in elementary homotopy theory (on the latter cf. \cite{Whitehead1978}) with a striking relation to contemporary questions in quantum materials research, specifically in fractional quantum Hall systems (on the latter cf. \parencites{Stormer99}{Tong2016}{nLab:QuantumHallEffect}) relevant for questions in topological quantum computing (for which cf. \parencites{FreedmanKitaevLarsenWang2003}{Nayak2008}{Stanescu2020}{SatiValera2025}).

\subsection{Background}

\subsubsection{The original theorem}

Back in 1974, Hansen \cite{Hansen1974} investigated the fundamental groups of the space of maps $\Maps\,(-,-)$ (see \cref{MappingSpace}) from the torus $T^2 = (S^1)^2$ to the 2-sphere $S^2$. He found them --- in the connected component $\MapsComponent{n}\bracket({-,-})$ of winding number $n \in \mathbb{Z}$ ---  to be central extensions of $\mathbb{Z}^2$ by the cyclic group $\CyclicGroup{2 n}$:
\begin{equation}
  \label{HansenExtension}
  \begin{tikzcd}[sep=small]
  1
  \ar[r]
  &
  \CyclicGroup{2 n}
  \ar[rr]
  &&
  \pi_1
  \,
  \MapsComponent{n}\bracket({
    T^2
    ,\,
    S^2
  })
  \ar[rr]
  &&
  \mathbb{Z}^2
  \ar[r]
  \ar[
    ll, 
    shift right=5pt
  ]
  &
  1
  \mathrlap{\,.}
  \end{tikzcd}
\end{equation}
Since such central extensions are classified by the cohomology group
\begin{equation}
  \label{GroupOfCentralExtensionsOfZ2}
  H^2_{\mathrm{grp}}(\mathbb{Z}^2; \CyclicGroup{n}) 
    \simeq
  H^2(T^2; \CyclicGroup{n}) 
    \simeq 
  \CyclicGroup{n}
  \mathrlap{\,,}
\end{equation}
Hansen's result determined these fundamental groups up to a \emph{level} $\ell_n \in \CyclicGroup{2 n}$. The groups arising this way (often considered only for unit level $\ell  = 1$) are also known as \emph{integer Heisenberg groups} (cf. \cite{nLab:IntegerHeisenbergGroup} and \cref{OnGroupTheory} below). 

In 1980, this problem was picked up by Larmore \& Thomas, who could show \cite[Thm. 1]{LarmoreThomas1980} that the level in \cref{HansenExtension} is in fact equal to $\ell = 2$, in all components.
In 2001, another proof of this fact was given by one of the authors  \cite[Prop. 1.5, Cor. 6.14]{Kallel2001}.
\footnote{
  These authors considered more generally the mapping space out of any closed oriented surface of genus $g$ (\cref{HigherGenusSurface}). But this turns out to be a fairly straightforward generalization of the toroidal case, $g = 1$, which we recover as \cref{Pi1OfMapsOutOfSurfaceIntoSphere} in \cref{OnGeneralizations} (where the statement is generalized further to higher dimensions). Focus on the case of the torus ($g = 1$) is also motivated by the physics application (further discussed in  \cref{OnOrdinaryFHQAnyonsIn2Cohomotopy}) where the experimental realization of FQH anyons becomes unfeasible for higher $g$, and where FQAH systems necessarily have  $g = 1$ (the \emph{Brillouin torus} of crystal momenta). 
}

\subsubsection{A hint of quantum physics}

We may make the following observations about this result:

While the abstract group theory literature typically considers the integer Heisenberg groups at level $\ell = 1$, the integer Heisenberg groups are subgroups of \emph{actual} Heisenberg groups $\mathrm{Heis}_3(\mathbb{R}, h)$ --- the hallmark structures of quantum mechanics (cf. \parencites[p. 7]{Rosenberg2004}) at \emph{Planck constant} $h \in \mathbb{R}$ (cf. \cref{OrdinaryHeisenbergGroup} below) to which Heisenberg's name is referring here --- exactly at this level $\ell = 2$.
Here the cocycle that classifies the central extension is the restriction to the integers of the canonical \emph{symplectic form} $\omega$ on $\mathbb{R}^2$:
\begin{equation}
  \label{TheSymplecticForm}
  \begin{tikzcd}[
    row sep=-3pt
  ]
    \mathbb{R}^2
    \times
    \mathbb{R}^2
    \ar[
      r,
      "{ \omega }"
    ]
    &
    \mathbb{R}
    \\
    \bracket({
      (q,p),
      (q',p')
    })
    \ar[
      r,
      |->,
      shorten=5pt
    ]
    &
    q p' - p q'
    \mathrlap{.}
  \end{tikzcd}
\end{equation}
It is the two summands on the right of \cref{TheSymplecticForm} that, when restricting $\omega$ to a $\CyclicGroup{n}$-valued group 2-cocycle on $\mathbb{Z}^2$, cause its class to be twice that of the generating class in \cref{GroupOfCentralExtensionsOfZ2}.

This quantum-mechanically natural level $\ell = 2$ for integer Heisenberg groups is the case we consider here by default, whence the result of \parencites{LarmoreThomas1980,Kallel2001} reads in our notation (cf. \cref{IntegerHeisenbergGroupAtLevel2,FinalTheorem}  below):
\begin{equation}
  \label{TheOriginalTheorem}
  \pi_1
  \,
  \MapsComponent{n}(
    T^2
    ,\,
    S^2
  )
  \,\simeq\,
  \mathrm{Heis}_3\bracket({\mathbb{Z},2n})
  \subset
  \mathrm{Heis}_3\bracket({\mathbb{R},2n})  
  \mathrlap{.}
\end{equation}

In \cref{OnTheTheorem} we give a new proof of this result \cref{TheOriginalTheorem}, shorter and more transparent than the previous arguments (by \cref{FundamentalGroupOfMapsFromTorusToSphere} below, invoking the classical theory of Samelson brackets, cf.  \cref{OnHGroupTheory}), and generalizing the statement to higher dimensions.

\subsubsection{Relation to FQH Anyons}
\label{OnRelationToFQHAnyons}

It is worth further expanding on this striking appearance of quantum mechanical structures in algebraic topology and homotopy theory (cf. \parencites{SS25-Srni}{SS25-FQH}): 

We recall that the group $C^\ast$-algebra of the ordinary Heisenberg group $\mathrm{Heis}_3(\mathbb{R},h)$ is essentially the \emph{Weyl algebra of quantum observables} on a 1-dimensional quantum system (cf. \cite[(3)]{Derezinski2006}) --- such as the famous but idealized ``particle on the line'' (or more realistically: a Josephson junction between superconductors in the \emph{Transmon regime}, where $q$ is the practically continuous phase difference of superconducting order parameters across the junction).

Similarly, the group algebra of the level $\ell = 2$ integer Heisenberg group $\mathrm{Heis}_3(\mathbb{Z}, h)$ is essentially the algebra of observables on a quantum system whose \emph{canonical coordinates} $q$ and \emph{canonical momenta} $p$ are constrained to discrete values. 

This is notably the case for quantum observables on \emph{anyons} in \emph{fractional quantum Hall systems} (FQH) on a torus (cf. \parencites[(4.9)]{WenNiu1990}[(4.14)]{IengoLechner1992}[(5.28)]{Tong2016}).
These FQH systems are 2-dimensional electron gases in a transverse magnetic field that is sufficiently strong and yet so fine-tuned that there is an exact integer (generally: rational) multiple of magnetic flux quanta per electron (the inverse \emph{filling fraction}). On such a backdrop, the FQH anyons are elementary vortices in the electron gas associated with surplus magnetic flux quanta on top of the exact filling fraction. Here it is the \emph{flux quantization} of magnetic flux (now in the sense of \emph{discretization}, cf. \cite{SS25-Flux}) that makes the relevant Heisenberg group discrete.

What the discrete Heisenberg group here expresses is an intrinsic quantum mechanical \emph{braiding phase} $\zeta$ picked up by the quantum state $\vert \psi \rangle$ of the system as the anyons are moved around each other (cf. \cref{HeisenbergGroupCommutatorIllustrated}). 

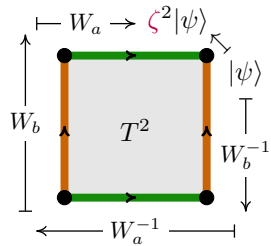
\begin{SCfigure}[1.95][htb]
\caption{
  \label{HeisenbergGroupCommutatorIllustrated}
  Illustrating the non-trivial group commutator 
  $
    [W_a, W_b]
    =
    \zeta^2
  $
  \cref{GroupCommutatorOfIntegerHeisAtL=2} in 
  the integer Heisenberg group 
  $\mathrm{Heis}_3(\mathbb{Z},h)$ at level $\ell = 2$ (\cref{IntegerHeisenbergGroupAtLevel2}).
  With the group elements understood as observables on FQH quantum systems (cf. \cref{OnOrdinaryFHQAnyonsIn2Cohomotopy}), the central element $\zeta$ is represented on their quantum states $\vert \psi \rangle$ by multiplication with a complex root of unity known as an \emph{anyon braiding phase} (cf. \parencites[(5.28)]{Tong2016}[Fig. 4]{SS25-FQAH}). 
}
\begin{tikzpicture}[scale=0.945,
  >={
    Computer Modern Rightarrow[
      length=4pt, width=4pt
    ]
  }
]

\draw[
  fill=lightgray,
]
  (0,0) rectangle (2,2);

\node at (1,.8) {%
  \clap{\smash{$T^2$}}%
};

\draw[
  line width=3,
  color=darkorange
] 
  (0,0) --
  (0,2);
\draw[
  ->,
  line width=1
]
  (0,1) --
  (0,1.01);

\draw[
  line width=3,
  color=darkorange
] 
  (2,0) --
  (2,2);
\draw[
  ->,
  line width=1
]
  (2,1) --
  (2,1.01);

\draw[
  line width=3,
  color=darkgreen
] 
  (0,2) --
  (2,2);
\draw[
  ->,
  line width=1
]
  (1,2) --
  (1.01,2);

\draw[
  line width=3,
  color=darkgreen
] 
  (0,0) --
  (2,0);
\draw[
  ->,
  line width=1
]
  (1,0) --
  (1.01,0);

\draw[
  fill=black
] 
(0,0) circle (.1);
\draw[
  fill=black
] 
(2,0) circle (.1);
\draw[
  fill=black
] 
(0,2) circle (.1);
(2,0) circle (.1);
\draw[
  fill=black
] 
(2,2) circle (.1);

\node at (2-.4,2+.5) {
  $\mathcolor{purple}{\zeta}^2\vert \psi \rangle$
};

\node[
  rotate=+135  
] at (2+.2, 2+.2) {%
  \clap{$\mapsto$}%
};

\node at (2.55,2-.25) {
  $\vert \psi \rangle$
};

\draw[
  line width=.5,
  |->,
]
  (2.55, 1.4) --
  node[
    xshift=0pt,
    scale=.9
  ] {\colorbox{white}{$
    W_b
      ^{-1}
  $}}
  (2.55,-.2);

\draw[
  line width=.5,
  |->,
]
  (2.4, -.47) --
  node[
    yshift=0pt,
    scale=.9
  ]{\colorbox{white}{$
    W_a^{-1}
  $}}
  (-.4, -.47);

\draw[
  line width=.5,
  |->,
]
  (-.55, -.2) --
  node[
    xshift=0pt,
    scale=.9
  ] {\colorbox{white}{$
    W_b
  $}}
  (-.55,2.3);

\draw[
  line width=.5,
  |->,
]
  (-.4, 2.45) --
  node[
    scale=.9,
  ]{\colorbox{white}{$
    W_a
  $}}
  (1, 2.45);
  
\end{tikzpicture}
\end{SCfigure}

Remarkably, the FQH anyon braiding phase $\zeta$ has been experimentally observed (for the moment not in toroidal but in planar electron gases, though) in recent years by various groups (starting with \cite{Nakamura2020,Bartolomei2020}, for further developments see \parencites{Veillon2024}{Ghosh2025}). This is of considerable technological interest, since it is the first and currently only observed case of fundamental anyonic \emph{topological quantum} phenomena (cf. \parencites{Stanescu2020}{Simon2023}). These are plausibly necessary (cf. \cite{DasSarma2022Hype}) for the future construction of \emph{quantum computers} (cf. \cite{Buyya2025Quantum}) of useful scale, namely for the engineering of \emph{topological quantum gates} (cf. \parencites{FreedmanKitaevLarsenWang2003}[\S 3]{MySS2024}{SatiValera2025}) which would be \emph{intrinsically} protected against the decohering noise that jeopardizes all quantum circuits.

However, even from a theoretical standpoint, there have remained open questions about the nature of FQH anyons (cf. \cite[\S A.1]{SS25-FQH}). The remarkable relation \cref{TheOriginalTheorem} to homotopy theory may help shed some light on these:

\subsubsection{Flux Quantization in Cohomotopy}

The above situation \cref{OnRelationToFQHAnyons} points to a curious physical interpretation \cite{SS25-FQH} of the result \cref{TheOriginalTheorem}: 

If we understand $\inlinetikzcd{S^2 \ar[r, hook] \& \! B \mathrm{U}(1)}$ as a deformation (namely the 3-skeleton) of the usual classifying space $B \mathrm{U}(1) \simeq B^2 \mathbb{Z}$ for magnetic charge, then \cref{TheOriginalTheorem} says that anyons in FQH systems may be understood as surplus magnetic flux quanta whose interaction with the 2D electron gas effectively deforms their \emph{flux quantization law} \cite[\S 3]{SS25-Flux} from ordinary cohomology $H^n(-;\mathbb{Z})$ to \emph{Cohomotopy} $\pi^n(-)$ (cf. \parencites[\S VII]{STHu59}[Ex. 2.7]{FSS23-Char}) in degree $n =2$:
$$
  \begin{aligned}
    H^n(-;\mathbb{Z})
    & \simeq
    \pi_0 
    \,
    \Maps\bracket({-, B^n \mathbb{Z}})
    \\
    \pi^n(-)
    \;
    & \defneq
    \pi_0 
    \,
    \Maps\bracket({-, S^n})
    \mathrlap{\,.}
  \end{aligned}
$$

This suggests that FQH anyons may generally be understood as surplus magnetic flux quantized in 2-Cohomotopy (``Hypothesis h'', \parencites{SS25-AbelianAnyons}{SS25-WilsonLoops}{SS25-FQH}) which provides a new algebro-topological theory for the effective behavior of FQH anyons. For example, it predicts \cite[Fig. D \& \S 3.8]{SS25-FQH} that superconducting islands in FQH materials may support \emph{non-abelian} defect anyons, the realization of which is the holy grail of research on topological quantum materials.

In fact, flux quantization in Cohomotopy was first recognized as a physical possibility in higher-dimensional quantum field theory, specifically in 11-dimensional supergravity, whose ``C-field'' flux admits quantization in 4-Cohomotopy (``Hypothesis H'', \parencites[\S 2.5]{Sati2018}{FSS20-H}{FSS21-Hopf}{SS23-Mf}). We expand on these matters at the end in \cref{OnApplications}.

\subsection{Overview}

This motivates us to ask whether an analogue of \cref{TheOriginalTheorem} may hold in higher dimensions. After briefly establishing the relevant context in \cref{OnPreliminaries}, we show in \cref{OnTheTheorem} (\cref{FinalTheorem}) that:
\begin{equation}
  \pi_1
  \,
  \MapsComponent{[n]}\bracket({
    (S^3)^2
    ,\,
    S^4
  })
  \simeq
  \mathrm{Heis}_3(\mathbb{Z}, 0)
  \times
  \CyclicGroup{12} \,,
\end{equation}
as well as
\begin{equation}
  \pi_1
  \,
  \MapsComponent{[n]}\bracket({
    (S^7)^2
    ,\,
    S^8
  })
  \simeq
  \mathrm{Heis}_3(\mathbb{Z}, 0)
  \times
  \CyclicGroup{120}
\end{equation}
(where in both cases the components are labeled by $[n] \in \CyclicGroup{2}$).

Our novel move in proving this is the intermediate 
\cref{FundamentalGroupOfMapsFromTorusToSphere}, which brings the classical theory of Samelson products to bear on the problem (cf. \cref{SamelsonProductProperty}).

The proof strategy immediately generalizes to other situations, discussed in \cref{OnGeneralizations}.

We close in \cref{OnApplications} by expanding on the potential implications of this result on contemporary questions in the physics and engineering of anyonic quantum materials.

\section{Preliminaries}
\label{OnPreliminaries}

Before we come to the main theorem in \cref{OnTheTheorem}, here we briefly set up the context.

\subsection{Group theory}
\label{OnGroupTheory}

The ordinary \emph{Heisenberg group} owes its name to its role as a cornerstone of basic quantum mechanics (cf. \parencites[p. 7]{Rosenberg2004}), where it reflects the fact that the elementary \emph{phase space} $\mathbb{R}^2$ with its canonical symplectic form \cref{TheSymplecticForm} becomes ``noncommutative'':
\begin{definition}[Ordinary Heisenberg group]
\label[definition]
 {OrdinaryHeisenbergGroup}
  For $h \in \mathbb{R}$ (``Planck's constant''),
  the underlying smooth manifold is 
  \begin{equation}
    \label{UnderlyingManifoldOfRealHeis3}
    \mathrm{Heis}_3(\mathbb{R}, h)
    :=_{{}_{\mathrm{SmthMfd}}}
    \mathbb{R}^2 
      \times 
    \CyclicReals{h}
    \mathrlap{\,,}
  \end{equation}
  with generating elements to be denoted
  $$
    \left.
    \begin{aligned}
     W_a^q 
     & 
     := 
     \bracket({
       (q,0)
       ,
       [0]
     })
     \\
     W_b^p 
     & 
     := 
     \bracket({
       (0,p)
       ,
       [0]
     })
     \\
     \zeta^z 
       & :=
     \bracket({
       (0,0)
       ,
       [z]
     })
    \end{aligned}
    \right\}
    \in 
    \mathbb{R}^2 
      \times 
    \CyclicReals{h}
    \mathrlap{\,,}
  $$
  on which the only nontrivial group commutators are
  $$
    [
      W_a^q
      \,,\,
      W_b^p
    ]
    =
    \zeta^{2 q p}
    \mathrlap{.}
  $$

  Similarly, there are the \emph{higher-dimensional Heisenberg groups}, for $g \in \mathbb{N}_{\geq 1}$,
  \begin{equation}
    \label{UnderlyingManifoldOfHigherDimRealHeis}
    \mathrm{Heis}_{2g+1}\bracket({
      \mathbb{R}
      ,
      h
    })
    :=_{{}_{\mathrm{SmthMfd}}}
    \mathbb{R}^{2g}
    \times
    \CyclicReals{h}
    \mathrlap{\,,}
  \end{equation}
  with generators $\bracket({W^q_{a_i}, W^p_{b_i}})_{i =1}^g$ and $\zeta^z$ on which the only nontrivial group commutators are 
  $$
    \bracket[{
      W^q_{a_i}
      ,
      W^p_{b_i}
    }]
    =
    \zeta^{2qp}
    \,,\;\;\;\;
    \text{for $1 \leq i \leq g$}
    \mathrlap{.}
  $$
\end{definition}

\begin{definition}[Integer Heisenberg group at level $\ell=2$]
\label[definition]{IntegerHeisenbergGroupAtLevel2}
For $h \in \mathbb{Z}$, the underlying set is
\begin{equation}
  \label{UnderlyingSetOfIntegerHeisenbergGrp}
  \mathrm{Heis}_3(\mathbb{Z},h)
  :=_{\mathrm{Set}}
  \mathbb{Z}^2 
    \times 
  \CyclicGroup{h}
  \mathrlap{\,,}
\end{equation}
with generating elements
\begin{equation}
  \label{GeneratorsOfHeisenbergGrp}
  \left.
  \begin{aligned}
    W_a 
    & := 
    \bracket({
      (1,0), 
      [0]
    })
    \\
    W_b 
      & := 
    \bracket({
      (0,1), 
      [0]
    })
    \\
    \zeta 
      & := 
    \bracket({
      (0,0), 
      [1]
    })
  \end{aligned}
  \right\}
  \in
  \mathbb{Z}^2
    \times 
  \CyclicGroup{h}
  \mathrlap{\,,}
\end{equation}
on which the only non-trivial group commutator is
\begin{equation}
  \label{GroupCommutatorOfIntegerHeisAtL=2}
  [W_a, W_b] 
    = 
  \zeta^2
  \mathrlap{.}
\end{equation}

Similarly, there are the \emph{higher-dimensional integer Heisenberg groups}, for $g \in \mathbb{N}_{\geq 1}$,
\begin{equation}
\label{UnderlyingSetOfHigherDimIntegerHeis}
\mathrm{Heis}_{2g+1}\bracket({
  \mathbb{Z}
  ,
  h
})
:=_{{}_{\mathrm{SmthMfd}}}
\mathbb{Z}^{2g}
\times
\CyclicGroup{h}
\mathrlap{\,,}
\end{equation}
with generators $\bracket({W_{a_i}, W_{b_i}})_{i =1}^g$ and $\zeta$ on which the only nontrivial group commutators are 
$$
\bracket[{
  W_{a_i}
  ,
  W_{b_i}
}]
=
\zeta^{2}
\,,\;\;\;\;
\text{for $1 \leq i \leq g$}
\mathrlap{.}
$$
\end{definition}
\begin{remark}[Other levels]
Over the integers, non-isomorphic groups are obtained by instead taking the above group commutator \cref{GroupCommutatorOfIntegerHeisAtL=2} to be
$$
  [W_a, W_b] 
    = 
  \zeta^n,
$$
for some 
$$
  n 
    \in 
  \mathbb{Z}
  \simeq
  H^2_{\mathrm{Grp}}(
    \mathbb{Z}^2
    ;\,
    \mathbb{Z}
  )
  \mathrlap{\,,}
$$ 
which we call the \emph{level}, since it is the central extension class. Authors in group theory are usually concerned with the case $n = 1$ (cf. \cite{nLab:IntegerHeisenbergGroup}); but for our purpose, the case $n = 2$ is of paramount importance, so we take this to be understood by default \cref{GroupCommutatorOfIntegerHeisAtL=2}.
\end{remark}

\subsection{H-Group theory}
\label{OnHGroupTheory}

We need the following concepts and results from the theory of homotopy groups (for background see \cref{OnSomeHomotopyTheory}) which are classical but not always readily citable (such as the crucial \cref{SamelsonProductProperty} below).

\begin{definition}[{cf. \cite[\S X.5]{Whitehead1978}}]
\label[definition]{SamelsonProduct}
The \emph{Samelson product} $[ -,-]_{\mathrm{Sam}}$ on a loop space $\Omega X$ (\cref{LoopSpaceAsHGroup}, and generally on an H-group, \cref{HGroup}, and from there induced on its homotopy groups) is (up to homotopy) the \emph{H-group commutator} descended from the Cartesian product to the smash product:
\begin{equation}
  \label{SamelsonProductOnLoopSpace}
  \begin{tikzcd}[
    row sep=0pt,
    column sep=25pt
  ]
  (\ell_2, \ell_1)
  \ar[
      rr,
      shorten=5pt,
      |->
    ]
    &&
    (\ell_2 
      \star 
    \ell_1)
      \star
    (\overline{\ell_2} \star
    \overline{\ell_1})
    \\
    (\Omega X)
    \times 
    (\Omega X)
    \ar[
      rr,
      "{
        [-,-]
      }"
    ]
    \ar[
      d,
      ->>
    ]
    &&
    \Omega X
    \mathrlap{\,.}
    \\[15pt]
    (\Omega X)
    \wedge
    (\Omega X)
    \ar[
      urr,
      "{
        [ -,- ]_{\mathrm{Sam}}
      }"{swap, sloped, pos=.4}
    ]
  \end{tikzcd}
\end{equation}
\end{definition}

\begin{definition}[cf. {\cite[p. 176-7]{FHT2000}}]
\label[definition]{WhiteheadProduct}
  For $X \in \mathrm{TopSp}^\ast$, the \emph{Whitehead} product on its homotopy group (in degrees $n_1, n_2 \in \mathbb{N}_{\geq 1}$),
  $$
    \begin{tikzcd}
      \pi_{n_1}(X)
      \otimes_{{}_{\mathbb{Z}}}
      \pi_{n_2}(X)
      \ar[
        rr,
        "{ [-,-]_{\mathrm{Wh}} }"
      ]
      &&
      \pi_{n_1 + n_2 - 1 }(X)
      \mathrlap{\,,}
    \end{tikzcd}
  $$
  is given on a pair of representatives
  $$
    [\,
    \inlinetikzcd{
      \phi_i 
        :
      S^{n_i}
      \ar[r]
      \&
      X
    }
    \,]
    \in
    \pi_{n_i}(X)
    \,,
    \;\;
    i \in \{1,2\}
  $$
  by the homotopy class of the top composite map of the following diagram:
  $$
    \begin{tikzcd}[
      column sep=55pt
    ]
      S^{n_1 + n_2 - 1}
      \ar[
        r,
        "{ f_{n_1, n_2} }"
      ]
      \ar[
        rr,
        uphordown,
        "{
          [\phi_1,\phi_2]_{{}_{\mathrm{Wh}}}
        }"{description}
      ]
      \ar[d]
      \ar[
        dr,
        phantom,
        "{ \ulcorner }"{pos=.9}
      ]
      &
      S^{n_1} \vee S^{n_2}
      \ar[
        r,
        "{ (\phi_1, \phi_2) }"
      ]
      \ar[d]
      &
      X
      \\
      D^{n_1 + n_2}
      \ar[
        r
      ]
      &
      S^{n_1} \times S^{n_2}
      \mathrlap{\,,}
    \end{tikzcd}
  $$
  where $f_{n_1, n_2}$ is the attaching map for the cell attachment to the wedge sum of spheres that gives the product of spheres, as shown.
\end{definition}

\begin{proposition}[cf. {\cite[Thm. 7.10 on p. 476]{Whitehead1978}}]
\label{WhiteheadBracketRelatingToSamelson}
For $X \in \mathrm{TopSp}^\ast$, the  \emph{Whitehead bracket} $[-,-]_{\mathrm{Wh}}$ on $\pi_{\bullet+1}(X)$ 
\textup{(\cref{WhiteheadProduct})}
is given by the Samelson product on $\pi_\bullet(\Omega X)$ \textup{(\cref{SamelsonProduct})}
as:
\begin{equation}
  \label{WhiteheadProductAsSamelsonProduct}
  \widetilde{%
    [\alpha,\alpha_2]%
  }_{\mathrm{Wh}}
  =
  (-1)^{\mathrm{deg}(\alpha_1)} 
  \bracket[{
    \widetilde{\alpha}_1
    ,\,
    \widetilde{\alpha}_2 
    \,
  }]_{\mathrm{Sam}}
  \mathrlap{\,,}
\end{equation}
where 
$
  \begin{tikzcd}[]
    \pi_{\bullet + 1}(X)
    \ar[
      <->,
      r, 
      "{\sim}"{swap},
      "{ \widetilde{(-)} }"
    ]
    &
    \pi_\bullet(\Omega X)
  \end{tikzcd}
$ is induced from the hom-isomorphism \cref{SuspensionLoopAdjunction}:
$$
  \begin{tikzcd}[
    column sep=50pt
  ]
    \pi_0\,
    \PointedMaps\bracket({
      S^{n_1}
      ,\,
      \Omega X
    })
    \times
    \pi_0\,
    \PointedMaps\bracket({
      S^{n_2}
      ,\,
      \Omega X
    })
    \ar[
      r,
      "{
        [-,-]_{\mathrm{Sam}}
      }"
    ]
    \ar[
      d,
      <->,
      "{ 
        \widetilde{(-)} 
        \times
        \widetilde{(-)} 
      }"
    ]
    &
    \pi_0\,
    \PointedMaps\bracket({
      S^{n_1 + n_2}
      ,\,
      \Omega X
    })
    \ar[
      d,
      <->,
      "{ \widetilde{(-)} }"
    ]
    \\
    \pi_0\,
    \PointedMaps(
      S^{n_1 + 1}
      ,\,
      X
    )
    \times
    \pi_0\,
    \PointedMaps(
      S^{n_2 + 1}
      ,\,
      X
    )
    \ar[
      r,
      "{
        \pm [-,-]_{\mathrm{Wh}}
      }"
    ]
    &
    \pi_0\,
    \PointedMaps(
      S^{n_1 + n_2 + 1}
      ,\,
      X
    )
    \mathrlap{.}
  \end{tikzcd}
$$
\end{proposition}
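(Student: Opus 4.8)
The plan is to reduce \cref{WhiteheadProductAsSamelsonProduct} to a single universal example by naturality, to recognize both sides there as $\pm$ the attaching map of the top cell of a product of spheres, and then to pin down the sign by tracking suspension coordinates.

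\emph{Reduction to the universal case.} Each of the three ingredients of \cref{WhiteheadProductAsSamelsonProduct} is natural under based maps: the Whitehead bracket by its construction in \cref{WhiteheadProduct}; the Samelson bracket because it is the H-group commutator pushed to the smash product (\cref{SamelsonProduct}) and looping a based map is a morphism of H-groups up to homotopy; and the hom-isomorphism $\widetilde{(-)}$ because it is $\pi_0$ of the loop--suspension adjunction \cref{SuspensionLoopAdjunction}. Hence, representing $\alpha_i \in \pi_{n_i+1}(X)$ by maps $\phi_i \colon S^{n_i+1}\to X$ and setting $\phi \defneq \phi_1\vee\phi_2 \colon S^{n_1+1}\vee S^{n_2+1}\to X$, it suffices to prove \cref{WhiteheadProductAsSamelsonProduct} for $X = Z \defneq S^{n_1+1}\vee S^{n_2+1}$ with $\alpha_i = [\iota_i]$ the two wedge inclusions; the general statement then follows by applying $\phi_\ast$ and its looped version $(\Omega\phi)_\ast$ --- which $\widetilde{(-)}$ intertwines --- to the two sides of the universal identity.

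\emph{Both sides as the top-cell attaching map.} Write $S^{n_i+1} = \Sigma S^{n_i}$. On the one hand, by \cref{WhiteheadProduct} the class $[\iota_1,\iota_2]_{\mathrm{Wh}} \in \pi_{n_1+n_2+1}(Z)$ is exactly that of the attaching map $w$ through which $S^{n_1+1}\times S^{n_2+1}$ is built from its $(n_1+n_2+1)$-skeleton $Z$ by attaching the top cell $D^{n_1+1}\times D^{n_2+1}$; that is, $Z \hookrightarrow S^{n_1+1}\times S^{n_2+1}$ lies in a cofibre sequence with first map $w \colon \Sigma(S^{n_1}\wedge S^{n_2}) \to Z$. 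On the other hand, the preimage under $\widetilde{(-)}$ of the universal Samelson product $[\widetilde\iota_1,\widetilde\iota_2]_{\mathrm{Sam}} \colon S^{n_1}\wedge S^{n_2} \to \Omega Z$ is, by the adjunction formula, the composite
$$
  \Sigma(S^{n_1}\wedge S^{n_2})
  \xrightarrow{\ \Sigma(\widetilde\iota_1 \wedge \widetilde\iota_2)\ }
  \Sigma(\Omega Z \wedge \Omega Z)
  \xrightarrow{\ \Sigma\,[-,-]_{\mathrm{Sam}}\ }
  \Sigma\,\Omega Z
  \xrightarrow{\ \mathrm{ev}_Z\ }
  Z \,.
$$
The heart of the argument is to identify this composite with $\pm w$. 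Here I would invoke the James-construction analysis of $\Omega Z = \Omega\Sigma(S^{n_1}\vee S^{n_2})$: by the Hilton--Milnor theorem it splits as a weak product over basic products formed from the canonical classes $\widetilde\iota_1,\widetilde\iota_2$ by iterated H-group commutators, the weight-two factor being $\Omega\Sigma(S^{n_1}\wedge S^{n_2})$, included via $\Omega w$; by the very construction of that splitting the generator of this factor is sent to the H-group commutator of $\widetilde\iota_1$ and $\widetilde\iota_2$, i.e.\ to $[\widetilde\iota_1,\widetilde\iota_2]_{\mathrm{Sam}}$. Adjointing back (naturality of $\mathrm{ev}$ plus the triangle identity) turns $\Omega w$ into $w$, so the preimage of $[\widetilde\iota_1,\widetilde\iota_2]_{\mathrm{Sam}}$ is $\pm w = \pm[\iota_1,\iota_2]_{\mathrm{Wh}}$. (Alternatively one bypasses Hilton--Milnor and argues as in \cite[\S X.7]{Whitehead1978}: describe $w$ explicitly as the map $\partial(D^{n_1+1}\times D^{n_2+1}) \to Z$ collapsing the two boundary faces onto the respective wedge summands, and construct a cellular homotopy from its adjoint to the commutator composite $S^{n_1}\wedge S^{n_2}\to\Omega Z\wedge\Omega Z\to\Omega Z$; this is the step that carries all the genuine content.)

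\emph{The sign.} Matching the $\partial(D^{n_1+1}\times D^{n_2+1})$-coordinates used for the domain of $[\iota_1,\iota_2]_{\mathrm{Wh}}$ in \cref{WhiteheadProduct} with the smash-plus-suspension coordinates $S^1\wedge S^{n_1}\wedge S^{n_2}$ on the domain of the composite above forces one suspension coordinate to be transposed past the $S^{n_1}$-block; that transposition of sphere coordinates has degree $(-1)^{n_1}$, which is the factor $(-1)^{\deg\alpha_1}$ in \cref{WhiteheadProductAsSamelsonProduct}. I expect the genuine obstacle to be precisely the non-formal identification in the previous step --- that the universal Samelson product is, up to this sign, the top-cell attaching map --- while the rest is naturality bookkeeping together with a careful once-and-for-all choice of coordinate conventions, so that the exponent appearing is indeed $\deg\alpha_1$ and not $\deg\alpha_2$ or $\deg\alpha_1+\deg\alpha_2$.
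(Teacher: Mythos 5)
First, note that the paper does not prove this proposition at all: it is stated with the attribution ``cf.\ \cite[Thm.\ 7.10 on p.\ 476]{Whitehead1978}'' and no proof environment follows, so any complete argument here goes beyond what the paper itself supplies. Your overall strategy --- reduce by naturality to the universal case $Z = S^{n_1+1}\vee S^{n_2+1}$ with $\alpha_i = [\iota_i]$, recognize the Whitehead side as the top-cell attaching map $w$ of $S^{n_1+1}\times S^{n_2+1}$, and identify the adjoint of the universal Samelson product with $\pm w$ --- is indeed the standard shape of Whitehead's own proof, and your sign bookkeeping ($(-1)^{n_1}$ from transposing the suspension coordinate past the $S^{n_1}$-block, consistent with Whitehead's $(-1)^p$ for $p=\deg\widetilde\alpha_1$) is correct.

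The genuine gap is in the step you yourself flag as carrying all the content, and the first route you offer for it is circular. In the Hilton--Milnor splitting of $\Omega\Sigma(S^{n_1}\vee S^{n_2})$, the inclusion of the weight-two factor $\Omega\Sigma(S^{n_1}\wedge S^{n_2})$ is \emph{defined} to be $\Omega w$, i.e.\ the loop of the Whitehead-product attaching map; the proof that the resulting product map is a weak equivalence (via the James filtration and homology) does not, by itself, identify the image of the bottom cell of that factor with the H-group commutator $[\widetilde\iota_1,\widetilde\iota_2]_{\mathrm{Sam}}$. That identification \emph{is} the universal case of \cref{WhiteheadProductAsSamelsonProduct}, so invoking ``the very construction of that splitting'' assumes the conclusion. (In treatments where the splitting is instead built from combinatorial commutators in the James construction, the dual identification --- that the weight-two inclusion agrees with $\Omega w$ --- becomes the theorem; one cannot have both by construction.) Your fallback, the explicit cellular homotopy between the adjoint of $w$, described as the face-collapse map on $\partial(D^{n_1+1}\times D^{n_2+1})$, and the commutator composite $S^{n_1}\wedge S^{n_2}\to \Omega Z\wedge\Omega Z\to\Omega Z$, is the actual proof, but it is only named, not constructed. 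As it stands the proposal is a correct reduction plus a correct sign analysis wrapped around an unproven core; to complete it you must carry out that homotopy (or simply cite \cite[Thm.\ 7.10]{Whitehead1978} outright, as the paper does).
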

\begin{lemma}[cf. {\cite[Thm. 8.20 on p. 485]{Whitehead1978}}\footnote{Beware that \cite{Whitehead1978} writes ``$E$'' for the suspension functor, a notation introduced on p. 369 there.}]
\label[lemma]
  {SuspensionOfWhiteheadVanishes}
  The suspension \cref{Suspension} of any Whitehead bracket \eqref{WhiteheadProductAsSamelsonProduct} vanishes:
  \begin{equation}
    \begin{tikzcd}[row sep=-2pt]
      \pi_n(S^m)
      \ar[r, "{ \Sigma }"]
      &
      \pi_{n+1}(S^{m+1})
      \\
      {[\alpha, \beta]}_{\mathrm{Wh}}
      \ar[
        r,
        |->, 
        shorten=5pt
      ]
      &
      0
      \mathrlap{\,.}
    \end{tikzcd}
  \end{equation}
\end{lemma}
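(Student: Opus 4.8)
The plan is to reduce the statement, through the loop--suspension adjunction, to the classical fact that Whitehead brackets vanish identically in any H-group. Concretely, for arbitrary $X \in \mathrm{TopSp}^\ast$ I will show that $\Sigma\colon \pi_n(X)\to\pi_{n+1}(\Sigma X)$ annihilates every Whitehead bracket, then specialise to $X = S^m$. The first step is to rewrite the suspension: after transporting along the isomorphism $\widetilde{(-)}\colon\pi_{n+1}(\Sigma X)\xrightarrow{\ \sim\ }\pi_n(\Omega\Sigma X)$ of \cref{WhiteheadBracketRelatingToSamelson} (induced by the hom-isomorphism of \cref{SuspensionLoopAdjunction}), naturality of the adjunction unit $\eta$ against a representative $\varphi\colon S^n\to X$ gives $\widetilde{\Sigma\varphi}=\Omega(\Sigma\varphi)\circ\eta_{S^n}=\eta_X\circ\varphi$; hence $\widetilde{(-)}\circ\Sigma=(\eta_X)_\ast$, and since $\widetilde{(-)}$ is an isomorphism it suffices to show that postcomposition with $\eta_X\colon X\to\Omega\Sigma X$ kills Whitehead brackets.

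For the second step I would invoke naturality of the Whitehead bracket, which is immediate from its presentation $[\alpha,\beta]_{\mathrm{Wh}}=(\alpha,\beta)\circ f_{n_1,n_2}$ in \cref{WhiteheadProduct}: for any pointed $g$ one has $g_\ast[\alpha,\beta]_{\mathrm{Wh}}=[g_\ast\alpha,g_\ast\beta]_{\mathrm{Wh}}$, so with $g=\eta_X$ the claim becomes the vanishing of $[(\eta_X)_\ast\alpha,(\eta_X)_\ast\beta]_{\mathrm{Wh}}$, now a bracket in $\pi_n(\Omega\Sigma X)$. The third step is the general lemma that Whitehead brackets vanish in an H-group $G$ (here $G=\Omega\Sigma X$, cf.\ \cref{LoopSpaceAsHGroup}, \cref{HGroup}): the multiplication of $G$ extends $(u,v)\colon S^{n_1}\vee S^{n_2}\to G$ over the product as $(x,y)\mapsto u(x)\cdot v(y)$, so $[u,v]_{\mathrm{Wh}}=(u,v)\circ f_{n_1,n_2}$ factors through $S^{n_1+n_2-1}\xrightarrow{f_{n_1,n_2}}S^{n_1}\vee S^{n_2}\hookrightarrow S^{n_1}\times S^{n_2}$, which is null because $f_{n_1,n_2}$ bounds the top cell of $S^{n_1}\times S^{n_2}$ (the pushout square of \cref{WhiteheadProduct}); hence $[u,v]_{\mathrm{Wh}}=0$. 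Composing the three reductions yields $\Sigma[\alpha,\beta]_{\mathrm{Wh}}=0$, and the case $X=S^m$ is the Lemma.

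I do not expect a genuine obstacle, since all three ingredients are standard; the point needing care is the first reduction --- pinning down that $\widetilde{(-)}\circ\Sigma$ is exactly $(\eta_X)_\ast$ with the basepoint/sign conventions of \cref{SuspensionLoopAdjunction} and \cref{WhiteheadBracketRelatingToSamelson} --- together with the minor technicality that the extension-over-the-product in the third step is literal (not merely up to homotopy) only when $G=\Omega\Sigma X$ is modelled by Moore loops, so that its unit is strict. As a variant bypassing H-groups, naturality also reduces the statement to showing $\Sigma[\iota_{n_1},\iota_{n_2}]_{\mathrm{Wh}}=0$ in $\pi_{n_1+n_2}(S^{n_1+1}\vee S^{n_2+1})$; this holds because $[\iota_{n_1},\iota_{n_2}]_{\mathrm{Wh}}$ is null in $S^{n_1}\times S^{n_2}$, while the inclusion $S^{n_1+1}\vee S^{n_2+1}\hookrightarrow S^{n_1+1}\times S^{n_2+1}$, obtained by attaching a single cell of dimension $n_1+n_2+2$, induces an isomorphism on $\pi_{n_1+n_2}$, so the suspended class --- null in the product --- is already null in the wedge.
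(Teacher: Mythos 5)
The paper gives no proof of this lemma at all --- it is quoted verbatim from Whitehead's book (Thm.\ X.8.20), so there is nothing to compare against except the classical argument behind that citation, which is exactly what you reconstruct: under the loop--suspension adjunction the suspension homomorphism becomes postcomposition with the unit $\eta_X \colon X \to \Omega\Sigma X$, Whitehead brackets are natural under postcomposition, and they vanish in any H-group, in particular in $\Omega\Sigma X$. Your three steps are all correct, and you rightly flag the one technical point (the extension of $(u,v)$ over $S^{n_1}\times S^{n_2}$ via the H-multiplication agrees with $(u,v)$ on the wedge only up to homotopy unless the unit is strict; the homotopy extension property of the CW-pair repairs this). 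One small gap in the closing \emph{variant}: the inference from ``$[\iota_{n_1},\iota_{n_2}]_{\mathrm{Wh}}$ is null in $S^{n_1}\times S^{n_2}$'' to ``the suspended class is null in $S^{n_1+1}\times S^{n_2+1}$'' is not immediate, since $\Sigma(S^{n_1}\times S^{n_2})$ and $S^{n_1+1}\times S^{n_2+1}$ are different spaces; you need either the comparison map $(\Sigma\mathrm{pr}_1,\Sigma\mathrm{pr}_2)\colon \Sigma(S^{n_1}\times S^{n_2}) \to S^{n_1+1}\times S^{n_2+1}$ through which the wedge inclusion $j$ factors, or, more directly, the observation that both projections of $j_\ast\Sigma[\iota_{n_1},\iota_{n_2}]_{\mathrm{Wh}}$ onto the sphere factors are suspensions of Whitehead brackets with one trivial entry, hence zero, so the class dies in the product and therefore (by your isomorphism on $\pi_{n_1+n_2}$) already in the wedge.
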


\begin{proposition}
\label[proposition]
 {SamelsonProductProperty}
For $k \in \mathbb{N}$, the following diagram commutes up to homotopy:
\begin{equation}
  \label{SamelsonRelationAsCommutingDiagram}
  \begin{tikzcd}[row sep=8pt, column sep=55pt]
    &
    (\Omega S^{2k}) 
      \wedge 
    (\Omega S^{2k})
    \ar[
      dr,
      "{
        [-,-]_{\mathrm{Sam}}
      }"{sloped}
    ]
    \\
    S^{4k-2}
    \ar[
      ur,
      "{
        \widetilde{\mathrm{id}_{2k}}
        \,\wedge\,
        \widetilde{\mathrm{id}_{2k}}
      }"{sloped, pos=.4}
    ]
    \ar[
      rr,
      "{
        2 \, \widetilde{h_{\mathbb{K}}}
      }"
    ]
    &&
    \Omega S^{2k}.
  \end{tikzcd}
\end{equation}
\end{proposition}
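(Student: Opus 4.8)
The plan is to reduce the statement, via the Samelson--Whitehead comparison \cref{WhiteheadBracketRelatingToSamelson}, to the classical computation of the Hopf invariant of the Whitehead square of the identity on $S^{2k}$. Write $\iota := [\mathrm{id}_{S^{2k}}] \in \pi_{2k}(S^{2k})$, so that $\widetilde{\mathrm{id}_{2k}} \in \pi_{2k-1}(\Omega S^{2k})$ is its adjoint. Unwinding \cref{SamelsonRelationAsCommutingDiagram}, the upper composite $S^{4k-2} \xrightarrow{\widetilde{\mathrm{id}_{2k}} \wedge \widetilde{\mathrm{id}_{2k}}} (\Omega S^{2k}) \wedge (\Omega S^{2k}) \xrightarrow{[-,-]_{\mathrm{Sam}}} \Omega S^{2k}$ represents, by the definition of the Samelson product on homotopy groups (\cref{SamelsonProduct}), the Samelson square $[\widetilde{\iota}, \widetilde{\iota}]_{\mathrm{Sam}} \in \pi_{4k-2}(\Omega S^{2k}) \cong \pi_{4k-1}(S^{2k})$. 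Applying the comparison with $\alpha_1 = \alpha_2 = \iota$ --- whose degree $2k$ is even, so the sign $(-1)^{\deg \alpha_1}$ equals $+1$ --- identifies this Samelson square with the adjoint of the Whitehead square $[\iota, \iota]_{\mathrm{Wh}} \in \pi_{4k-1}(S^{2k})$. Thus the proposition becomes the identity $[\iota, \iota]_{\mathrm{Wh}} = 2\, h_{\mathbb{K}}$ in $\pi_{4k-1}(S^{2k})$; the first task is to show $H([\iota,\iota]_{\mathrm{Wh}}) = 2$, where $H \colon \pi_{4k-1}(S^{2k}) \to \mathbb{Z}$ is the Hopf invariant, and the second is to control the remaining torsion.

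For the Hopf-invariant computation we would use the cell structure exhibited in \cref{WhiteheadProduct}: $S^{2k} \times S^{2k} \simeq (S^{2k} \vee S^{2k}) \cup_{f_{2k, 2k}} e^{4k}$, where $f_{2k,2k}$ represents the Whitehead product $[\iota_1, \iota_2]_{\mathrm{Wh}}$ of the two inclusions $S^{2k} \hookrightarrow S^{2k} \vee S^{2k}$. The fold map $\nabla \colon S^{2k} \vee S^{2k} \to S^{2k}$ carries $[\iota_1, \iota_2]_{\mathrm{Wh}}$ to $[\iota, \iota]_{\mathrm{Wh}}$, so it extends over the top cell to a map $\psi \colon S^{2k} \times S^{2k} \to S^{2k} \cup_{[\iota,\iota]_{\mathrm{Wh}}} e^{4k} =: C$ that may be taken to be the identity on the attached $4k$-cell, hence an isomorphism on $H^{4k}$. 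Writing $x \in H^{2k}(C;\mathbb{Z})$, $y \in H^{4k}(C;\mathbb{Z})$ for the generators and $a, b$ for the two factor classes in $H^{2k}(S^{2k} \times S^{2k};\mathbb{Z})$, one obtains $\psi^\ast x = a + b$ and $\psi^\ast y = ab$; since $2k$ is even, $a^2 = b^2 = 0$ and $ab = ba$, whence $(\psi^\ast x)^2 = (a+b)^2 = 2ab = 2\,\psi^\ast y$, and comparison with the defining relation $x^2 = H([\iota,\iota]_{\mathrm{Wh}})\cdot y$ forces $H([\iota,\iota]_{\mathrm{Wh}}) = \pm 2$ (and $+2$ with compatible orientations) --- the classical identity $H([\iota_n, \iota_n]_{\mathrm{Wh}}) = 1 + (-1)^n$. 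In particular the free part of $[\iota,\iota]_{\mathrm{Wh}}$ is twice the Hopf-invariant-one generator $h_{\mathbb{K}}$, which for $k \in \{1,2,4\}$ is represented by the complex, quaternionic, or octonionic Hopf map.

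It then remains to identify $[\iota,\iota]_{\mathrm{Wh}}$ with $2\, h_{\mathbb{K}}$ exactly, and this is the step we expect to be the real obstacle: the computation above only pins down the infinite-cyclic quotient of $\pi_{4k-1}(S^{2k})$, whereas for $k \in \{2,4\}$ this group also carries torsion ($\CyclicGroup{12}$, resp.\ $\CyclicGroup{120}$) that $H$ cannot see, so the torsion component of the Whitehead square must be determined separately. We would do this using \cref{SuspensionOfWhiteheadVanishes} --- the Whitehead square suspends to zero, hence lies in $\ker\big(\Sigma \colon \pi_{4k-1}(S^{2k}) \to \pi_{4k}(S^{2k+1})\big)$ --- together with the James (EHP) fibration $S^{2k-1} \to \Omega S^{2k} \to \Omega S^{4k-1}$, whose long exact sequence identifies that kernel as the cyclic subgroup generated by $[\iota,\iota]_{\mathrm{Wh}}$ itself; combined with the Hopf-invariant value $\pm 2$, this fixes the class (for $k=1$, $\pi_3(S^2) = \mathbb{Z}$ is torsion-free and $[\iota_2,\iota_2]_{\mathrm{Wh}} = 2\eta$ at once). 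Re-adjointing the resulting identity back along the loop--suspension hom-isomorphism then yields the claimed homotopy-commutativity of \cref{SamelsonRelationAsCommutingDiagram}.
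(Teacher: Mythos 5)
Your overall route is the same as the paper's: adjoint the diagram via \cref{WhiteheadBracketRelatingToSamelson} down to the single identity $[\iota,\iota]_{\mathrm{Wh}} = 2\,h_{\mathbb{K}}$ in $\pi_{4k-1}(S^{2k})$, and attack that identity through the Hopf invariant. Your cup-product computation of $H([\iota,\iota]_{\mathrm{Wh}}) = 1+(-1)^{2k} = 2$ is a correct unwinding of the reference that the paper merely cites, so up to that point the two arguments agree in substance, yours being more self-contained.

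The gap is in your last step --- and, notably, you have put your finger on exactly the weak point of the claim itself. Knowing that $[\iota,\iota]_{\mathrm{Wh}}$ has Hopf invariant $2$ and generates $\ker\Sigma$ does not ``fix the class'' to be $2h_{\mathbb{K}}$: it determines it only modulo the torsion of $\pi_{4k-1}(S^{2k})$, and the two constraints you invoke in fact single out a \emph{different} element whenever that torsion is nonzero. Concretely, for $k=2$: by \cref{SuspensionOfWhiteheadVanishes} one has $\Sigma[\iota_4,\iota_4]_{\mathrm{Wh}} = 0$, whereas $\Sigma(2h_{\mathbb{H}}) = 2\nu_5$ has order $12$ in $\pi_8(S^5)\simeq\CyclicGroup{24}$; hence $[\iota_4,\iota_4]_{\mathrm{Wh}} \neq 2[h_{\mathbb{H}}]$. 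The classical formula is $[\iota_4,\iota_4]_{\mathrm{Wh}} = \pm\bracket({2h_{\mathbb{H}} - \Sigma\nu'})$ with $\Sigma\nu'$ generating the $\CyclicGroup{12}$-summand (an odd multiple of the generator is forced, since the cokernel of $\Sigma$ must be the cyclic group $\CyclicGroup{24}$), and likewise $[\iota_8,\iota_8]_{\mathrm{Wh}} = \pm\bracket({2h_{\mathbb{O}} - \Sigma\sigma'})$ for $k=4$. So carrying out your EHP analysis honestly refutes, rather than confirms, the commutativity of \cref{SamelsonRelationAsCommutingDiagram} for $k\in\{2,4\}$; only for $k=1$, where $\pi_3(S^2)$ is torsion-free, does the Hopf invariant alone settle the matter. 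You should know that the paper's own proof has the same defect --- it asserts that the only element of Hopf invariant $2$ is $2[h_{\mathbb{K}}]$, which ignores precisely the torsion ambiguity you flagged, so \cref{WhiteheadSquareRootsOfHopfFibrations} is wrong as stated for $\mathbb{K}\in\{\mathbb{H},\mathbb{O}\}$. The issue is therefore not with your instincts but with the statement; the honest conclusion of your argument is that for $k\in\{2,4\}$ the commutator $[W_a,W_b]$ in \cref{FundamentalGroupOfMapsFromTorusToSphere} acquires a nontrivial torsion contribution, which affects the claimed direct-product decomposition there.
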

\begin{proof}
The Whitehead bracket of the generator $[\mathrm{id}_{2k}] := 1 \in \mathbb{Z} \simeq \pi_n(S^{2k})$ with itself has Hopf invariant 2 (cf. \cite[Thm. 2.5 on p. 495]{Whitehead1978}):
$$
  H\bracket({
    \bracket[{\mathrm{id}_{2k}, \mathrm{id}_{2k}}]
      _{\mathrm{Wh}}
  })
  =
  2
  \,.
$$
But for $k \in \{1,2,4\}$ the only element in $\pi^{4k-1}(S^{2k})$ of Hopf invariant 2 is twice the class of the corresponding Hopf fibration $h_{\mathbb{K}}$.
(That $H(h_{\mathbb{K}}) = 1$ is due to \cite{Hopf1935}; while torsion elements have vanishing Hopf invariant, due to the homomorphy of $H(-)$, cf. \cite[(2.1) on p. 495]{Whitehead1978}.) Hence:
\begin{equation}
\label{WhiteheadSquareRootsOfHopfFibrations}
\begin{aligned}
  \bracket[{
    [\mathrm{id}_2],
    [\mathrm{id}_2]
  }]_{\mathrm{Wh}}
  & 
    = 2 [h_{\mathbb{C}}]
  \\
  \bracket[{
    [\mathrm{id}_4],
    [\mathrm{id}_4]
  }]_{\mathrm{Wh}}
  & 
    = 2 [h_{\mathbb{H}}]
  \\
  \bracket[{
    [\mathrm{id}_8],
    [\mathrm{id}_8]
  }]_{\mathrm{Wh}}
  & 
    = 2 [h_{\mathbb{O}}]
  \mathrlap{\,.}
\end{aligned}
\end{equation}
The combination of \cref{WhiteheadProductAsSamelsonProduct,WhiteheadSquareRootsOfHopfFibrations} proves the claim.
\end{proof}

\begin{remark}
  We will see that the factor of 2 appearing in \cref{WhiteheadSquareRootsOfHopfFibrations} is the origin both of the level of the Heisenberg group being 2 (this is the content of our new proof \cref{FundamentalGroupOfMapsFromTorusToSphere}), as well as its ``Planck constant'' being $2n$ when $k = 1$ (by \cref{FundamentalGroupsOfMapsS2ToS2} below, which was observed before). 
\end{remark}

\section{The Theorem}
\label{OnTheTheorem}

We compute here the fundamental groups of the mapping spaces from $(S^{2k-1})^2$ to $S^{2k}$, for $k \in \{1,2,4\}$, using the H-group theory of \cref{OnHGroupTheory} and find them to essentially be the integer Heisenberg groups of \cref{OnGroupTheory}.

\subsection{The level two}

This section contains our main new observation (proof of \cref{FundamentalGroupOfMapsFromTorusToSphere} below).
First, we make these basic observations:
\begin{remark}[Cell structure of the squared sphere]
\,

{\vspace{-2mm} }
\begin{enumerate}
 
 \item The canonical CW-complex structure of $(S^{2k-1})^2$ has, besides a 0-cell, two $(2k-1)$-cells and a single $(4k-2)$-cell, attached via the Whitehead product of their identity maps (by \cref{WhiteheadProduct}), as shown by the following pushout square on the left:
  \begin{equation}
    \label{CWComplexStructureOnSquaredSphere}
    \begin{tikzcd}[column sep=45pt]
      S^{4k-3} 
      \ar[
        rr,
        "{ 
          [
            \mathrm{id}_{2k-1}, 
            \mathrm{id}_{2k-1}
          ]_{\mathrm{Wh}} 
        }"
      ]
      \ar[
        d,
        hook
      ]
      &{}
      \ar[
        dr,
        phantom,
        "{ \ulcorner }"{pos=.9}
      ]
      &
      S^{2k-1} \vee S^{2k-1}
      \ar[
        d,
        hook
      ]
      \ar[r]
      \ar[
        dr,
        phantom,
        "{ \ulcorner }"{pos=.9}
      ]
      &
      \ast
      \ar[d]
      \\
      D^{4k-2}
      \ar[
        rr,
        "{
          i_{4k-2}
        }"
      ]
      &&
      S^{2k-1} \times S^{2k-1}
      \ar[
        r,
        "{
          \mathrm{pr}_c
        }"
      ]
      \ar[
        d,
        shift left=5pt,
        "{
          \mathrm{pr}_{b}
        }"
      ]
      \ar[
        d,
        shift right=5pt,
        "{
          \mathrm{pr}_{a}
        }"{swap}
      ]
      &
      S^{4k-2}
      \\[+5pt]
      &&
      S^{2k-1}
      \mathrlap{\,.}
    \end{tikzcd}
  \end{equation}
  The pushout square on the right forms the smash product of the two $(2k-1)$-cells and thereby exhibits a canonical map denoted $\mathrm{pr}_c$. The two projections onto the $(2k-1)$-sphere factors are shown at the bottom of the above diagram, which we denote by $\mathrm{pr}_a$ and $\mathrm{pr}_b$, respectively.

\item  
  Note that the following diagram commutes:
  \begin{equation}
    \label{IdentityBetweenProjections}
    \begin{tikzcd}[
      row sep=5pt, 
      column sep=40pt
    ]
      & 
      (S^{2k-1})^2\wedge (S^{2k-1})^2
      \ar[
        dr,
        "{ 
          \mathrm{pr}_a
          \wedge
          \mathrm{pr}_b
        }"{sloped}
      ]
      \\
      (S^{2k-1})^2
      \ar[
        rr,
        "{ \mathrm{pr}_c }"
      ]
      \ar[
        ur,
        "{ \Delta }"{sloped, pos=.4}
      ]
      &&
      S^{4k-2}
      \mathrlap{\,.}
    \end{tikzcd}
  \end{equation}
  (This follows by inspection; alternatively, it is a special case of \cref{SmashOfProjectionMaps} below).

\item  
  The suspension of these projection maps yields the three coprojections out of the \emph{stable splitting} (using \cref{SuspensionOfWhiteheadVanishes} or \cite[Prop. 4I.1]{Hatcher2002}) of the squared sphere:
\begin{equation}
  \label{StableSplittingOfSuspendedSquaredSphere}
  \begin{tikzcd}[
    column sep=-2pt
  ]
  \Sigma(S^{2k-1})^2
  &
  \underset{
  \mathrm{hmtp}
  }{\simeq}
  &
  S^{2k}_a 
  \ar[
   d,
   "{ \Sigma \mathrm{pr}_a }"
   {description}
  ]
    &\vee& 
  S^{2k}_b 
  \ar[
   d,
   "{ \Sigma \mathrm{pr}_b }"{description}
  ]
    &\vee& 
  S^{4k-1}
  \ar[
   d,
   "{ \Sigma \mathrm{pr}_c }"{description}
  ]
  \\
  &&
  S^{2k}_a
  &&
  S^{2k}_b
  &&
  S^{4k-1}
  \mathrlap{\,.}
  \end{tikzcd}
\end{equation}
Notice that this homotopy equivalence does \emph{not} respect the H-cogroup structure on suspensions (from \cref{LoopSpaceAsHGroup}).
\end{enumerate} 
\end{remark}

\begin{proposition}
\label[proposition]
  {OnConnectedComponentsOfTheMappingSpace}
  The connected components of the pointed mapping space are:
  \begin{equation}
    \label{ConnectedComponentsOfMappingSpace}
    \pi_0\,
    \PointedMaps\bracket({
      (S^{2k-1})^2
      ,\,
      S^{2k}
    })
    \simeq
    \begin{cases}
      \mathbb{Z}
      & \text{if } k = 1
      \\
      \CyclicGroup{2}
      & \text{if } k = 2
      \\
      \CyclicGroup{2}
      & \text{if } k = 4 
      \mathrlap{\,.}
    \end{cases}
  \end{equation}
\end{proposition}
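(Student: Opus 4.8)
The plan is to compute $\pi_0\,\PointedMaps\bracket({(S^{2k-1})^2,\, S^{2k}})$ directly from the canonical CW-structure of $(S^{2k-1})^2$ recorded in \cref{CWComplexStructureOnSquaredSphere}, whose $(2k-1)$-skeleton is $S^{2k-1} \vee S^{2k-1}$ and whose single top cell $D^{4k-2}$ is attached along the Whitehead product $[\mathrm{id}_{2k-1},\mathrm{id}_{2k-1}]_{\mathrm{Wh}}$. First I would apply $\PointedMaps(-,S^{2k})$ to the cofiber sequence $S^{2k-1}\vee S^{2k-1}\xrightarrow{\,i\,}(S^{2k-1})^2\xrightarrow{\,\mathrm{pr}_c\,}S^{4k-2}$ and pass to the Puppe exact sequence of pointed sets, the relevant stretch of which is
\[
\begin{aligned}
  \bigl[\Sigma(S^{2k-1})^2, S^{2k}\bigr]
  &\xrightarrow{\ (\Sigma i)^\ast\ } \pi_{2k}(S^{2k})^2
  \xrightarrow{\ \delta^\ast\ } \pi_{4k-2}(S^{2k}) \\
  &\xrightarrow{\ \mathrm{pr}_c^\ast\ } \pi_0\,\PointedMaps\bracket({(S^{2k-1})^2, S^{2k}})
  \xrightarrow{\ i^\ast\ } \pi_{2k-1}(S^{2k})^2
  \mathrlap{\,,}
\end{aligned}
\]
where $\delta\colon S^{4k-2}\to\Sigma(S^{2k-1}\vee S^{2k-1})=S^{2k}\vee S^{2k}$ is (up to sign) the suspension of the Whitehead attaching map, and where I have used $\PointedMaps(S^m,S^n)\simeq\Omega^m S^n$ to identify the outer terms.

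Next I would run the exact sequence. Since $2k-1<2k$ the last term $\pi_{2k-1}(S^{2k})^2$ vanishes, so $\mathrm{pr}_c^\ast$ is surjective: every pointed map $(S^{2k-1})^2\to S^{2k}$ is homotopic to one factoring through the collapse $\mathrm{pr}_c$ onto the top cell. For injectivity of $\mathrm{pr}_c^\ast$ it suffices that the coaction of $\pi_{2k}(S^{2k})^2\simeq\mathbb{Z}^2$ on $\pi_{4k-2}(S^{2k})$ --- which sends $(m_1,m_2)\cdot u$ to $u+\delta^\ast(m_1,m_2)$ --- be trivial, and this holds because $\delta$ is nullhomotopic: it is a suspended Whitehead product (\cref{SuspensionOfWhiteheadVanishes}), equivalently the attaching map of the sphere $S^{4k-1}=\Sigma S^{4k-2}$ that splits off in the stable splitting \cref{StableSplittingOfSuspendedSquaredSphere}. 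Hence $\delta^\ast=0$, the coaction is trivial, $\mathrm{pr}_c^\ast$ is a bijection, and
\[
  \pi_0\,\PointedMaps\bracket({(S^{2k-1})^2, S^{2k}})\;\simeq\;\pi_{4k-2}(S^{2k})
  \mathrlap{\,.}
\]

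Finally I would insert the classical homotopy groups of spheres for $k\in\{1,2,4\}$: $\pi_2(S^2)\simeq\mathbb{Z}$, $\pi_6(S^4)\simeq\CyclicGroup{2}$, and $\pi_{14}(S^8)\simeq\CyclicGroup{2}$ (the last being the stable $6$-stem, since $8\geq 6+2$), which are exactly the three cases of the claim. I note that the hypothesis $k\in\{1,2,4\}$ is not actually used in this statement --- it enters only later, via \cref{SamelsonProductProperty}, in the computation of $\pi_1$. The one step I expect to require care is the injectivity of $\mathrm{pr}_c^\ast$: one has to follow the Puppe connecting map $\delta$ and see that it vanishes (as a suspended Whitehead product, equivalently from the stable splitting \cref{StableSplittingOfSuspendedSquaredSphere}), so that the a priori coset ambiguity in extending a map over the top $(4k-2)$-cell collapses; everything else is bookkeeping with known homotopy groups of spheres.
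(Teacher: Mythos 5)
Your proof is correct and takes essentially the same route as the paper: collapse the $(2k-1)$-skeleton of \cref{CWComplexStructureOnSquaredSphere} to reduce the computation to $\pi_{4k-2}(S^{2k})$ and then insert the known homotopy groups of spheres. In fact your treatment is slightly more complete than the paper's, which only asserts that maps to the $(2k-1)$-connected target factor through the quotient by the skeleton (giving surjectivity of $\mathrm{pr}_c^\ast$ on homotopy classes), whereas you also justify injectivity by observing that the Puppe connecting map is a suspended Whitehead product and hence nullhomotopic by \cref{SuspensionOfWhiteheadVanishes}, so the coset ambiguity vanishes.
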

\begin{proof}
  Generally, maps to a $(2k-1)$-connected space like $S^{2k}$ factor through the quotient of the domain by its $(2k-1)$-skeleton. 
  But, by the CW-structure \cref{CWComplexStructureOnSquaredSphere},
  we have:
  $$
    (S^{2k-1})^2
    \big/
    \mathrm{sk}_{2k-1}
    (S^{2k-1})^2
    \simeq
    (S^{2k-1})^2
    \big/
    (S^{2k-1} \vee S^{2k-1})
    \simeq
    S^{4k-2}
    \mathrlap{\,,}
  $$
  %
  %
  whence:
  $$
    \pi_0\,
    \PointedMaps\bracket({
      (S^{2k-1})^2
      ,\,
      S^{2k}
    })
    \simeq
    \pi_0\,
    \PointedMaps\bracket({
      S^{4k-2}
      ,\,
      S^{2k}
    })    
    \defneq
    \pi_{4k-2}(S^{2k})
    \mathrlap{\,.}
  $$
  With the standard homotopy group of spheres \cref{SomeUnstableHomotopyGroupsOfSpheres}, this proves the claim.
\end{proof}

Our central observation now is the proof of the following:
\begin{theorem}
  \label[theorem]
    {FundamentalGroupOfMapsFromTorusToSphere}
  For $k \in \{1,2,4\}$,
  we have a group isomorphism
  \begin{equation}
  \label{IdentifyingFundamentalGroupOfMapsFromTorusToSphere}
    \pi_1 
    \,
    \PointedMapsComponent{0}\bracket({
      (S^{2k-1})^2
      ,\,
      S^{2k}
    })
    \simeq
    \mathrm{Heis}_3(\mathbb{Z}, 0)
    \times
    \begin{cases}
      1 
      & \text{\rm if } k = 1
      \\
      \CyclicGroup{12} 
      & \text{\rm if } k = 2
      \\
      \CyclicGroup{120} 
      & \text{\rm if } k = 4
      \mathrlap{\,.}
    \end{cases}
  \end{equation}
\end{theorem}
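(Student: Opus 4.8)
The plan is to compute this fundamental group directly, as a group of pointed homotopy classes into a loop space. Since $\PointedMapsComponent{0}$ is the connected component of the constant map, $\pi_1$ there (based at the constant map) is, by the standard (twofold) hom-adjunction,
\[
  \pi_1\,\PointedMapsComponent{0}\bracket({(S^{2k-1})^2,\, S^{2k}})
  \;\cong\;
  \bigl[\Sigma (S^{2k-1})^2,\; S^{2k}\bigr]^{\ast}
  \;\cong\;
  \bigl[(S^{2k-1})^2,\; \Omega S^{2k}\bigr]^{\ast},
\]
the group structure on the right being the one induced by the H-group $\Omega S^{2k}$ (\cref{SamelsonProduct}). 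First I would apply the group-valued functor $[-,\Omega S^{2k}]^{\ast}$ to the cofibre sequence $S^{2k-1}\vee S^{2k-1}\hookrightarrow(S^{2k-1})^2\xrightarrow{\ \mathrm{pr}_c\ }S^{4k-2}$ underlying the cell structure of \cref{CWComplexStructureOnSquaredSphere}, writing $\iota$ for the skeleton inclusion. The attaching class $[\mathrm{id}_{2k-1},\mathrm{id}_{2k-1}]_{\mathrm{Wh}}$ becomes trivial after postcomposition into the H-space $\Omega S^{2k}$ (Whitehead products vanish in H-spaces), so $\iota^{\ast}$ is onto $[S^{2k-1}\vee S^{2k-1},\Omega S^{2k}]^{\ast}\cong\pi_{2k}(S^{2k})^{2}\cong\mathbb{Z}^2$; and it suspends to zero (the suspension of a Whitehead product always vanishes, cf.\ \cref{SuspensionOfWhiteheadVanishes} and the stable splitting \cref{StableSplittingOfSuspendedSquaredSphere}), so the preceding map in the Puppe sequence is zero and $\mathrm{pr}_c^{\ast}$ is injective. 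With $[S^{4k-2},\Omega S^{2k}]^{\ast}=\pi_{4k-1}(S^{2k})$ this yields a short exact sequence of groups
\[
  1 \longrightarrow \pi_{4k-1}(S^{2k}) \xrightarrow{\ \mathrm{pr}_c^{\ast}\ } \pi_1\,\PointedMapsComponent{0}\bracket({(S^{2k-1})^2, S^{2k}}) \xrightarrow{\ \iota^{\ast}\ } \mathbb{Z}^2 \longrightarrow 1.
\]

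The core of the argument is to identify this extension. Let $\widetilde{\mathrm{id}_{2k}}\colon S^{2k-1}\to\Omega S^{2k}$ be the adjunct of the identity class of $S^{2k}$ (a generator of $\pi_{2k-1}\Omega S^{2k}$), and set $W_a:=\widetilde{\mathrm{id}_{2k}}\circ\mathrm{pr}_a$, $W_b:=\widetilde{\mathrm{id}_{2k}}\circ\mathrm{pr}_b$; these map under $\iota^{\ast}$ to the standard basis of $\mathbb{Z}^2$, and together with the image of $\mathrm{pr}_c^{\ast}$ they generate the whole group. Now I would compute the group commutator $[W_a,W_b]$: because $(\mathrm{pr}_a,\mathrm{pr}_b)=\mathrm{id}_{(S^{2k-1})^2}$ the pair $(W_a,W_b)$ equals $\widetilde{\mathrm{id}_{2k}}\times\widetilde{\mathrm{id}_{2k}}$; the H-group commutator $(\Omega S^{2k})^{2}\to\Omega S^{2k}$ descends along the collapse onto the smash product to the Samelson product (\cref{SamelsonProduct}); and by naturality of the collapse $(-)\times(-)\to(-)\wedge(-)$, together with \cref{IdentityBetweenProjections}, precomposition with $\widetilde{\mathrm{id}_{2k}}\times\widetilde{\mathrm{id}_{2k}}$ converts that collapse into $(\widetilde{\mathrm{id}_{2k}}\wedge\widetilde{\mathrm{id}_{2k}})\circ\mathrm{pr}_c$. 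Hence
\[
  [W_a,W_b]
  \;=\;
  \bigl([-,-]_{\mathrm{Sam}}\circ(\widetilde{\mathrm{id}_{2k}}\wedge\widetilde{\mathrm{id}_{2k}})\bigr)\circ\mathrm{pr}_c
  \;=\;
  2\,\widetilde{h_{\mathbb{K}}}\circ\mathrm{pr}_c
  \;=\;
  \mathrm{pr}_c^{\ast}\bigl(2\,\widetilde{h_{\mathbb{K}}}\bigr),
\]
the middle equality being precisely \cref{SamelsonProductProperty} --- the one place where the restriction $k\in\{1,2,4\}$ (Hopf invariant one) enters.

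To finish, I would check centrality and read off the group. For $\xi\in\pi_{4k-1}(S^{2k})$, the commutator $[W_a,\mathrm{pr}_c^{\ast}\xi]$ is again a Samelson product precomposed with a map that factors through $S^{2k-1}\wedge S^{4k-2}=S^{6k-3}$, hence is null since $\dim(S^{2k-1})^2=4k-2<6k-3$; so the image of $\mathrm{pr}_c^{\ast}$ commutes with $W_a$ and $W_b$ and, being abelian, is central. Since a central extension of $\mathbb{Z}^2$ by an abelian group $N$ is classified by its commutator pairing, i.e.\ by $H^2_{\mathrm{Grp}}(\mathbb{Z}^2;N)\cong N$ (as in \cref{GroupOfCentralExtensionsOfZ2}), the group in question is the central extension of $\mathbb{Z}^2$ by $\pi_{4k-1}(S^{2k})$ of class $2\,\widetilde{h_{\mathbb{K}}}$. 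As $h_{\mathbb{K}}$ has Hopf invariant $1$ it spans a free $\mathbb{Z}$-summand of $\pi_{4k-1}(S^{2k})$ with complement the torsion subgroup, which is $1$, $\CyclicGroup{12}$, $\CyclicGroup{120}$ for $k=1,2,4$ respectively, and in this decomposition the class $2\,\widetilde{h_{\mathbb{K}}}$ is $(2,0)$; hence the torsion summand splits off as a direct central factor, while the $\mathbb{Z}$-summand carrying commutator class $2$ is exactly the level-two integer Heisenberg group $\mathrm{Heis}_3(\mathbb{Z},0)$ (\cref{IntegerHeisenbergGroupAtLevel2}), which is the claimed isomorphism.

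The step I expect to be the main obstacle is the commutator computation: one must carry the H-group commutator --- a priori defined on the Cartesian square $(\Omega S^{2k})^2$ --- first through its descent to the smash (the Samelson product) and then, via naturality of the collapse map, onto the single top smash cell $S^{4k-2}$ of $(S^{2k-1})^2$, so as to be in a position to apply \cref{SamelsonProductProperty}. It is essential here not to shortcut this via the stable splitting $\Sigma(S^{2k-1})^2\simeq S^{2k}\vee S^{2k}\vee S^{4k-1}$ at the level of group structure, since that equivalence does not respect the co-H-group structure on the suspension and would therefore wash out exactly the Heisenberg twist recorded by $[W_a,W_b]$.
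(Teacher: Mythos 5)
Your argument is correct, and its computational core is exactly the paper's: the commutator $[W_a,W_b]$ is pushed through the descent of the H-group commutator to the Samelson product, then through the identity $(\mathrm{pr}_a\wedge\mathrm{pr}_b)\circ\Delta=\mathrm{pr}_c$ of \cref{IdentityBetweenProjections}, and finally evaluated by \cref{SamelsonProductProperty}; centrality of the image of $\mathrm{pr}_c^\ast$ is likewise killed by the same dimension count the paper uses. The only genuine divergence is in how the ambient extension is set up. The paper obtains the underlying set by the stable splitting \cref{StableSplittingOfSuspendedSquaredSphere} (explicitly flagging that this is a bijection of sets, not of groups) and then reads the group law off from the computed commutators of the named generators. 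You instead apply $[-,\Omega S^{2k}]^\ast$ to the Puppe sequence of the cell structure \cref{CWComplexStructureOnSquaredSphere}, using vanishing of Whitehead products in H-spaces for surjectivity of $\iota^\ast$ and \cref{SuspensionOfWhiteheadVanishes} for injectivity of $\mathrm{pr}_c^\ast$, to get an honest short exact sequence $1\to\pi_{4k-1}(S^{2k})\to\pi_1\PointedMapsComponent{0}\bracket({(S^{2k-1})^2,S^{2k}})\to\mathbb{Z}^2\to 1$, and then invoke the classification of central extensions of $\mathbb{Z}^2$ by their commutator pairing to split off the torsion and identify the free part as $\mathrm{Heis}_3(\mathbb{Z},0)$. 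This buys a cleaner, purely group-theoretic final step (the paper leaves the passage from relations to the isomorphism somewhat implicit), at the cost of one extra standard input (Whitehead products vanish on H-spaces) that the paper does not need to cite; your closing caution about not using the stable splitting for the group structure is exactly the point the paper makes in its parenthetical remark.
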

\begin{proof}
  First, we observe the bijection \cref{IdentifyingFundamentalGroupOfMapsFromTorusToSphere} on underlying sets:
  \begin{align*}
    \pi_1\, \PointedMapsComponent{0}\bracket({
      (S^{2k-1})^2
      ,\,
      S^{2k}
    })
    &
    \simeq
    \pi_0\, \PointedMaps\bracket({
      \Sigma(S^{2k-1})^2
      ,\,
      S^{2k}
    })    
    &&
    \proofstep{%
      by \cref{IncaranationsOfFundGrpdOfMappingSpace}
    }
    \\
    &
    \simeq
    \pi_0\, \PointedMaps\bracket({
      S^{2k}
        \vee
      S^{2k}
        \vee
      S^{4k-1}
      ,\,
      S^{2k}
    })
    &&
    \proofstep{%
      by \cref{StableSplittingOfSuspendedSquaredSphere}
    }
    \\
    &
    \simeq
    \pi_{2k}(S^{2k})^2
    \times
    \pi_{4k-1}(S^{2k})
    &&
    \proofstep{%
      by \cref{PointedMapsOutOfWedgeSum,HomotopyGroupsViaMappingSpace}
    }
    \\
    &
    \simeq
    \mathbb{Z}^2
    \times
    \mathbb{Z} \times 
    \begin{cases}
      1 & \text{if } k = 1
      \\
      \CyclicGroup{12} & \text{if } k = 2
      \\
      \CyclicGroup{120} & \text{if } k = 4
    \end{cases}
    &&
    \proofstep{
      by \cref{SomeUnstableHomotopyGroupsOfSpheres}
    }.
  \end{align*}
  (This is a sequence of bijections of sets, but not of homomorphisms of groups, since the stable splitting used in the second step is not an H-cogroup homomorphism.) \\
Therefore, the candidate Heisenberg group generators \cref{GeneratorsOfHeisenbergGrp} are represented by the projectors \cref{StableSplittingOfSuspendedSquaredSphere} and the homotopy group generators \cref{SomeUnstableHomotopyGroupsOfSpheres} as:
\begin{equation}
  \begin{tikzcd}[
    column sep=-2pt,
    row sep=0pt
  ]
    W_{a/b}
    &:=&
    \big[
      \Sigma(S^{2k-1})^2
      \ar[
        r,
        "{ 
          \Sigma\,\mathrm{pr}_{a/b} 
        \;}"
      ]
      &[30pt]
      S^{2k}
      \ar[r, "{ \mathrm{id} }"]
      &[25pt]
      S^{2k}
    \big]
    \\
    \zeta
    &:=&
    \big[
      \Sigma(S^{2k-1})^2
      \ar[
        r,
        "{ 
          \Sigma\,\mathrm{pr}_{c} 
        }"
      ]
      &[30pt]
      S^{4k-1}
      \ar[
        r, 
        "{ h_{\mathbb{K}} }"
      ]
      &[25pt]
      S^{2k}
    \big]
    \mathrlap{\,,}
  \end{tikzcd}
\end{equation}
and adjointly as:
\begin{equation}
  \begin{tikzcd}[
    column sep=1pt,
    row sep=0pt
  ]
    \widetilde{W}_{a/b}
    &=&
    \big[
      (S^{2k-1})^2
      \ar[
        r, 
        "{ \mathrm{pr}_{a/b} \,}"
      ]
      &[25pt]
      S^{2k-1}
      \ar[
        r,
        "{ \widetilde{\mathrm{id}} }"
      ]
      &[20pt]
      \Omega S^{2k}
    \big]
    \\
    \widetilde{\zeta}
    &=&
    \big[
    (S^{2k-1})^2
    \ar[
      r,
      "{ \mathrm{pr}_c }"
    ]
    &
    S^{4k-2}
    \ar[
      r,
      "{
        \widetilde{h_{\mathbb{K}}}
      \;}"
    ]
    &
    \Omega S^{2k}
    \big]
    \mathrlap{\,,}
  \end{tikzcd}
\end{equation}
while the generator of the further torsion group factor is represented as:
$$
  \begin{tikzcd}[
    column sep=1pt
  ]
    R
    :=
    \big[
    \Sigma (S^{2k-1})^2
    \ar[
      r,
      "{ \Sigma\,\mathrm{pr}_c \;}"
    ]
    &[25pt]
    S^{4k-1}
    \ar[
      r,
      "{ r_{\mathbb{K}} }"
    ]
    &[20pt]
    S^{2k}
    \big]
    \mathrlap{\,.}
  \end{tikzcd}
$$
  Then, by \cref{GroupStructureOnComponents,SamelsonProduct}, the group commutator $[W_a, W_b]$ is represented by the outer part of the following diagram:
  \begin{equation}
    \label{FirstDiagramForDerivingHeisenberg}
    \begin{tikzcd}[
      column sep=45pt
    ]
      (S^{2k-1})^2
      \ar[
        d,
        "{ \Delta }"{swap, pos=.4}
      ]
      \ar[
        rr,
        "{
          \widetilde{%
            [W_a, W_b]%
          }
        }"
      ]
      \ar[
        dr,
        "{
          \mathrm{pr}_c
        }"{description}
      ]
      &&
      \Omega S^{2k}
      \\
      (S^{2k-1})^2
      \wedge
      (S^{2k-1})^2
      \ar[
        rr,
        downhorup,
        "{
          \widetilde{W}_a
          \,\wedge\,
          \widetilde{W}_b
        }"{description}
      ]
      \ar[
        r,
        "{
          \mathrm{pr}_a
          \wedge 
          \mathrm{pr}_b
        }"{swap}
      ]
      &
      S^{2k-1} 
        \wedge 
      S^{2k-1}
      \ar[
        r,
        "{
          \widetilde{\mathrm{id}_{2k}}
          \,\wedge\,
          \widetilde{\mathrm{id}_{2k}}
        \;}"{swap}
      ]
      \ar[
        ur,
        "{
          2 \, \widetilde{h_{\mathbb{K}}}
        }"{description}
      ]
      &
      (\Omega S^{2k})
      \wedge
      (\Omega S^{2k})
      \mathrlap{\,.}
      \ar[
        u,
        "{
          [-,-]_{\mathrm{Sam}}
        }"{swap}
      ]
    \end{tikzcd}
  \end{equation}
  Inside this diagram, we identify (as shown)
  \begin{enumerate}
  \item
  the left diagonal map by \cref{IdentityBetweenProjections},
  \item
  the right diagonal map by \cref{SamelsonRelationAsCommutingDiagram}.
  \end{enumerate}
  Hence, the top inner triangle in the diagram commutes and proves 
  $[W_a, W_b] = \zeta^2$.

  Similarly, $[W_{a/b}, \zeta]$ is represented by the outer part of the following diagram
  \begin{equation}
    \label{SecondDiagramForDerivingHeisenberg}
    \begin{tikzcd}[
      column sep=45pt
    ]
      (S^{2k-1})^2
      \ar[
        d,
        "{ \Delta }"{swap}
      ]
      \ar[
        dr,
        "{ 0 }"{description}
      ]
      \ar[
        rr,
        "{
          \widetilde{[W_{a/b}, \zeta]}
        }"
      ]
      &&
      \Omega S^{2k}
      \\
      (S^{2k-1})^2
      \wedge
      (S^{2k-1})^2
      \ar[
        rr,
        downhorup,
        "{
          \widetilde{W}_{a/b}
          \,\wedge\,
          \widetilde{\zeta}
        }"{description}
      ]
      \ar[
        r,
        "{
          \mathrm{pr}_{a/b}
          \wedge 
          \mathrm{pr}_c
        }"{swap}
      ]
      &
      S^{2k-1}
        \!\wedge\!
      S^{4k-2}
      \ar[
        r,
        "{
          \widetilde{\mathrm{id}_{2k}}
          \,\wedge\,
          \widetilde{h_{\mathbb{K}}}
        }"{swap}
      ]
      &
      (\Omega S^{2k})
      \wedge
      (\Omega S^{2k})
      \mathrlap{\,.}
      \ar[
        u,
        "{
          [-,-]_{\mathrm{Sam}}
        }"{swap}
      ]
    \end{tikzcd}
  \end{equation}
  But here the left diagonal map is null-homotopic, as indicated, already by degree reasons. This shows that $[W_{a/b}, \zeta] = \mathrm{e}$, and hence completes the proof.
\end{proof}

This result generalizes to the other connected components by the following argument adapted from \cite[Prop. 1]{Hansen1974}:
\begin{lemma}
\label[lemma]
 {ConnectedComponentsOfMappingSpaceAreEquiv}
  The connected components of 
  $\PointedMaps\bracket({
      (S^{2k-1})^2,
      S^{2k}
    })$ are all homotopy equivalent.
\end{lemma}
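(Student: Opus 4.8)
\section{Proof plan for \cref{ConnectedComponentsOfMappingSpaceAreEquiv}}

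The plan is to equip the pointed mapping space with an action up to homotopy of the group $\pi_{4k-2}(S^{2k})$ by self-homotopy-equivalences, inducing the translation action on $\pi_0\,\PointedMaps\bracket({(S^{2k-1})^2, S^{2k}}) \simeq \pi_{4k-2}(S^{2k})$ (the latter identification by \cref{OnConnectedComponentsOfTheMappingSpace}); the group element carrying one component's class to another's then restricts to a homotopy equivalence between those two components. This is the higher-dimensional analogue of the argument in \cite[Prop. 1]{Hansen1974}.

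The action comes from the coaction of the top cell. By the cell structure \cref{CWComplexStructureOnSquaredSphere}, $(S^{2k-1})^2$ is the mapping cone of $[\mathrm{id}_{2k-1}, \mathrm{id}_{2k-1}]_{\mathrm{Wh}}\colon S^{4k-3} \to S^{2k-1} \vee S^{2k-1}$, so collapsing an equatorial $S^{4k-3}$ inside its $(4k-2)$-cell produces the standard coaction
$$
  \mu
  \;\colon\;
  (S^{2k-1})^2
  \longrightarrow
  (S^{2k-1})^2 \vee S^{4k-2}
  \mathrlap{\,,}
$$
which --- as for the mapping cone of any map --- is counital and coassociative up to pointed homotopy and satisfies $(\mathrm{pr}_c \vee \mathrm{id}_{S^{4k-2}}) \circ \mu \simeq \sigma \circ \mathrm{pr}_c$, where $\mathrm{pr}_c$ is the pinch map of \cref{CWComplexStructureOnSquaredSphere} and $\sigma\colon S^{4k-2} \to S^{4k-2} \vee S^{4k-2}$ is the suspension comultiplication. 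For each class $\beta \in \pi_{4k-2}(S^{2k})$, represented by some $\beta\colon S^{4k-2} \to S^{2k}$, precomposition with $\mu$ yields a continuous self-map $T_\beta \colon f \longmapsto (f \vee \beta) \circ \mu$ of the mapping space.

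Counitality of $\mu$ gives $T_0 \simeq \mathrm{id}$, and coassociativity gives $T_\beta \circ T_{\beta'} \simeq T_{\beta + \beta'}$; hence each $T_\beta$ is a homotopy equivalence with homotopy inverse $T_{-\beta}$. On $\pi_0$ this action is transitive: since $S^{2k}$ is $(2k-1)$-connected, every $f$ factors up to homotopy as $f \simeq \bar f \circ \mathrm{pr}_c$ (exactly as in the proof of \cref{OnConnectedComponentsOfTheMappingSpace}), so that
$$
  T_\beta(f)
  \;=\;
  (\bar f \vee \beta) \circ (\mathrm{pr}_c \vee \mathrm{id}) \circ \mu
  \;\simeq\;
  (\bar f \vee \beta) \circ \sigma \circ \mathrm{pr}_c
  \;=\;
  (\bar f + \beta) \circ \mathrm{pr}_c
  \mathrlap{\,,}
$$
i.e. $[T_\beta(f)] = [f] + \beta$ under $\pi_0\,\PointedMaps\bracket({(S^{2k-1})^2, S^{2k}}) \simeq \pi_{4k-2}(S^{2k})$. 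Consequently $T_\beta$ maps the component over any class into the component over that class $+\,\beta$ and, being a homotopy equivalence of the ambient space with inverse $T_{-\beta}$ (which maps back), it restricts to a homotopy equivalence between those two components. Since $\beta$ is arbitrary, all components are homotopy equivalent.

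The only point requiring care is the bookkeeping for the coaction $\mu$ --- its counitality, coassociativity, and compatibility with the pinch $\mathrm{pr}_c$, all up to \emph{pointed} homotopy --- but these are standard properties of the co-action attached to a cofiber sequence (the Eckmann--Hilton dual of the usual loop-space action on homotopy fibers), and in any case only the resulting transitivity of the $\pi_0$-action is needed, which is insensitive to the precise conventions. The remainder of the argument is purely formal.
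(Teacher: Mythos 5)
Your proposal is correct and follows essentially the same route as the paper: the paper also uses the pinch map on the top $(4k-2)$-cell (its \cref{PinchingHCoaction}) to obtain an H-action of $\PointedMaps(S^{4k-2},S^{2k})$ by homotopy auto-equivalences, and then invokes \cref{OnConnectedComponentsOfTheMappingSpace} to see that this action permutes the components transitively. You merely spell out in more detail the counitality/coassociativity bookkeeping and the explicit $\pi_0$-computation $[T_\beta(f)]=[f]+\beta$, which the paper leaves implicit.
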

\begin{proof}
  Consider the H-action of the H-group $ \PointedMaps(S^{4k-2}, S^{2k})$ (\cref{HGroupStructureOnMappingSpace}) on the mapping space, which is induced by the pinching map (from \cref{PinchingHCoaction}):
  \begin{equation}
    \begin{tikzcd}[
     column sep=18pt
    ]
      \PointedMaps\bracket({
        S^{4k-2}
        , 
        S^{2k}
      })
      \times
      \PointedMaps\bracket({
        (S^{2k-1})^2
        ,\,
        S^{2k}
      })
      \ar[
        r,
        "{ 
          \Theta 
        }"
      ]
      &
      \PointedMaps\bracket({
        (S^{2k-1})^2
        ,\,
        S^{2k}
      }).
    \end{tikzcd}
  \end{equation}
  Being an H-group action, each $\Theta(f,-)$ is a homotopy auto-equivalence of the mapping space. But then for $[f]$ a generator of $\pi_{4k-2}(S^{2k})$, \cref{OnConnectedComponentsOfTheMappingSpace} shows that $\Theta(f,-)$ restricts to a sequence of homotopy equivalences between all its connected components.
\end{proof}

Finally, we adapt these results to the unpointed mapping space:
\begin{theorem}
\label[theorem]{FinalTheorem}
  The fundamental groups of the unpointed mapping spaces from $(S^{2k-1})^2$ to $S^{2k}$ are integer Heisenberg groups at level $\ell = 2$ \textup{(\cref{IntegerHeisenbergGroupAtLevel2})}, as follows:
  \vspace{-2mm}
  \begin{subequations}
  \begin{align}
    \label{EndResultFork=1}
    \pi_1\,
    \MapsComponent{n}\bracket({
      (S^1)^2
      ,\,
      S^2
    })
    & \simeq
    \mathrm{Heis}_3(\mathbb{Z}, 2n)
    \mathrlap{\,,}
    \\
    \label{EndResultFork=2}
    \pi_1\,
    \MapsComponent{[n]}\bracket({
      (S^3)^2
      ,\,
      S^4
    })
    & \simeq
    \mathrm{Heis}_3(\mathbb{Z}, 0)
    \times
    \CyclicGroup{12}
    \mathrlap{\,,}
    \\
    \label{EndResultFork=4}
    \pi_1\,
    \MapsComponent{[n]}\bracket({
      (S^7)^2
      ,\,
      S^8
    })
    & \simeq
    \mathrm{Heis}_3(\mathbb{Z}, 0)
    \times
    \CyclicGroup{120}
    \mathrlap{\,,}
  \end{align}
  \end{subequations}
  for all $n \in \mathbb{Z}$ and $[n] \in \CyclicGroup{2}$, respectively \cref{ConnectedComponentsOfMappingSpace}.
\end{theorem}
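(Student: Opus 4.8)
The plan is to deduce the unpointed statement from the pointed computation of \cref{FundamentalGroupOfMapsFromTorusToSphere} (together with \cref{ConnectedComponentsOfMappingSpaceAreEquiv}) by means of the evaluation fibration. Since the basepoint inclusion $\ast \hookrightarrow (S^{2k-1})^2$ is a CW cofibration, evaluation at the basepoint is a Hurewicz fibration
\[
  \PointedMaps\big((S^{2k-1})^2,\, S^{2k}\big)
  \;\hookrightarrow\;
  \Maps\big((S^{2k-1})^2,\, S^{2k}\big)
  \;\xrightarrow{\;\mathrm{ev}\;}\;
  S^{2k}
  \,.
\]
As $S^{2k}$ is simply connected, the fibre inclusion induces a bijection on $\pi_0$ (so the components of both spaces are labelled by $\pi_{4k-2}(S^{2k})$ as in \cref{ConnectedComponentsOfMappingSpace}, via \cref{OnConnectedComponentsOfTheMappingSpace}), and the exact homotopy sequence of the fibration restricted over a fixed component $m$ reads in degree $1$:
\[
  \pi_2(S^{2k})
  \xrightarrow{\;\partial_m\;}
  \pi_1\,\PointedMapsComponent{m}\big((S^{2k-1})^2,\, S^{2k}\big)
  \longrightarrow
  \pi_1\,\MapsComponent{m}\big((S^{2k-1})^2,\, S^{2k}\big)
  \longrightarrow
  \pi_1(S^{2k}) = 1
  \,,
\]
whence $\pi_1\,\MapsComponent{m} \simeq \pi_1\,\PointedMapsComponent{m}\big/\mathrm{im}(\partial_m)$, the numerator being $\mathrm{Heis}_3(\mathbb{Z},0)$ times the relevant finite cyclic group for \emph{every} $m$ (by \cref{ConnectedComponentsOfMappingSpaceAreEquiv} and \cref{FundamentalGroupOfMapsFromTorusToSphere}).

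For $k \in \{2,4\}$ this already finishes: $\pi_2(S^{2k}) = 1$, so $\partial_m$ vanishes and $\pi_1\,\MapsComponent{[n]} \simeq \mathrm{Heis}_3(\mathbb{Z},0)\times\CyclicGroup{12}$, resp. $\mathrm{Heis}_3(\mathbb{Z},0)\times\CyclicGroup{120}$, which is \cref{EndResultFork=2} and \cref{EndResultFork=4}.

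For $k = 1$ the entire content is the identification $\mathrm{im}(\partial_n) = \langle\zeta^{2n}\rangle$, where $\zeta \in \pi_1\,\PointedMapsComponent{n}(T^2,S^2) \simeq \mathrm{Heis}_3(\mathbb{Z},0)$ is the generator built in the proof of \cref{FundamentalGroupOfMapsFromTorusToSphere} from the $S^3$-summand of \cref{StableSplittingOfSuspendedSquaredSphere} post-composed with the Hopf map $h_{\mathbb{C}}$. I would obtain this by naturality of the evaluation fibration under precomposition with the collapse $q := \mathrm{pr}_c\colon T^2 \to S^2$ of the $1$-skeleton (from \cref{CWComplexStructureOnSquaredSphere}). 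Since $q$ is pointed, $q^\ast$ is a morphism of fibrations over $\mathrm{id}_{S^2}$ from the evaluation fibration of $\Maps(S^2,S^2)$ to that of $\Maps(T^2,S^2)$ and respects the $\mathbb{Z}$-labelling of components (degree $\mapsto$ winding number), so $\partial_n = (q^\ast)_\ast\circ\partial^{S^2}_n$ on $\pi_2(S^2)$. Here $\PointedMaps(S^2,S^2) = \Omega^2 S^2$ and the $\pi_1$-connecting map $\partial^{S^2}_n\colon \pi_2(S^2) \to \pi_1(\Omega^2 S^2) = \pi_3(S^2)$ is multiplication by $\pm 2n$ — equivalently the classical $\pi_1\,\MapsComponent{n}(S^2,S^2)\simeq\CyclicGroup{2n}$ — the factor $2$ here being once more the Hopf invariant of the Whitehead square $[[\mathrm{id}_2],[\mathrm{id}_2]]_{\mathrm{Wh}} = 2[h_{\mathbb{C}}]$ of \cref{WhiteheadSquareRootsOfHopfFibrations}. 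Finally $(q^\ast)_\ast\colon \pi_3(S^2) \to \pi_1\,\PointedMapsComponent{n}(T^2,S^2)$ sends $[h_{\mathbb{C}}]$ to $\zeta$: for $n = 0$ this is immediate, as $(q^\ast)_\ast$ is precomposition with $\Sigma q = \Sigma\mathrm{pr}_c$ and the latter is exactly the coprojection onto the $S^3$-summand of \cref{StableSplittingOfSuspendedSquaredSphere}; for $n \neq 0$ transport this along the equivalences of \cref{ConnectedComponentsOfMappingSpaceAreEquiv}, which are $q^\ast$-compatible because the pinch coactions on $T^2$ and on $S^2$ agree under $q$. Thus $\partial_n([\mathrm{id}_2]) = \zeta^{\pm 2n}$, giving
\[
  \pi_1\,\MapsComponent{n}(T^2,S^2)
  \;\simeq\;
  \mathrm{Heis}_3(\mathbb{Z},0)\big/\langle\zeta^{2n}\rangle
  \;\simeq\;
  \mathrm{Heis}_3(\mathbb{Z},2n)
  \,,
\]
the last isomorphism being immediate from \cref{IntegerHeisenbergGroupAtLevel2} (and, for $n = 0$, reading $\CyclicGroup{0} = \mathbb{Z}$, so that the quotient is trivial). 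This is \cref{EndResultFork=1}.

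The decisive — and only non-formal — step is this last computation of $\partial_n$ for $k = 1$: getting the index exactly $2n$, and seeing that $\partial_n$ lands on the \emph{central} generator $\zeta$ rather than on some $W_a^i W_b^j \zeta^\ell$, together with the compatibility bookkeeping across components. Routing through $\Maps(S^2,S^2)$, where $\pi_1\simeq\CyclicGroup{2n}$ is classical, keeps the argument short; the alternative is a direct cell-by-cell analysis of $\PointedMaps(T^2,S^2)$ along the lines of \cite{Hansen1974,LarmoreThomas1980}.
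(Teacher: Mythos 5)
Your proof is correct, and its first half is exactly the paper's argument: for $k\in\{2,4\}$ the long exact sequence of the evaluation fibration truncates because $\pi_2(S^{2k})=1$, so \cref{FundamentalGroupOfMapsFromTorusToSphere} together with \cref{ConnectedComponentsOfMappingSpaceAreEquiv} finishes immediately. For $k=1$, however, you take a genuinely different route to the Planck constant $2n$. The paper (\cref{TheShortExactSequence}, following Hansen) maps the cofiber sequence $S^1\to S^1\vee S^1\to T^2\to S^2$ into $S^2$ to obtain the restriction fibration $\Maps(S^2,S^2)\to\Maps(T^2,S^2)\to\Maps(S^1\vee S^1,S^2)$, whose long exact sequence (via \cref{FundamentalGroupsOfMapsS2ToS2,MappingSpacesOutOfWedgeSumOfCirclesIntoS2}) yields the short exact sequence $1\to\CyclicGroup{2n}\to\pi_1\,\MapsComponent{n}(T^2,S^2)\to\mathbb{Z}^2\to 1$, which is then combined with the pointed computation to fix the extension. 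You instead stay with the evaluation fibration of $\Maps(T^2,S^2)$ itself and compute its connecting map $\partial_n$ by naturality along the collapse $\mathrm{pr}_c\colon T^2\to S^2$, pulling back the known connecting map for $\Maps(S^2,S^2)$ --- so both arguments ultimately rest on the same classical input, the Whitehead-product formula underlying \cref{FundamentalGroupsOfMapsS2ToS2}. Your route has the advantage of exhibiting the killed subgroup explicitly as $\langle\zeta^{2n}\rangle$ inside the already-computed $\mathrm{Heis}_3(\mathbb{Z},0)$, so that centrality and the identification of the quotient with $\mathrm{Heis}_3(\mathbb{Z},2n)$ are immediate; the price is the bookkeeping you correctly flag, namely that the component-translation equivalences (pinch coactions) on $\Maps(S^2,S^2)$ and $\Maps(T^2,S^2)$ intertwine with $\mathrm{pr}_c^\ast$, which is needed to transport $(q^\ast)_\ast[h_{\mathbb{C}}]=\zeta$ from the degree-$0$ component to degree $n$. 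That check is routine (the pinched disk may be chosen in the interior of the top cell, away from the $1$-skeleton), so your argument is complete.
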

\begin{proof}
  For $k \geq 2$ the homotopy long exact sequence of the basepoint evaluation fibration truncates: 
  \begin{equation}
    \begin{tikzcd}[
      row sep=0pt,
      column sep=16pt
    ]
      \PointedMapsComponent{[n]}\bracket({%
        (S^{2k-1})^2%
        ,\,%
        S^{2k}
      })
      \ar[
        r,
        "{
          \mathrm{fib}_{\ast}
        }"
      ]
      &
      \MapsComponent{[n]}\bracket({%
        (S^{2k-1})^2%
        ,\,%
        S^{2k}
      })
      \ar[
        r,
        "{ \mathrm{ev}_{\ast} }"
      ]
      &
      S^{2k}
      \\[-6pt]
      &&
      \grayoverbrace{
        \pi_2 S^{2k}
      }{ 1 }
      \ar[
        dll,
        snake left
      ]
      \\[+6pt]
      \pi_1
      \PointedMapsComponent{[k]}\bracket({%
        (S^{2k+1})^2%
        ,\,%
        S^{2k}
      })
      \ar[
        r,
        "{ \sim }"
      ]
      &
      \pi_1
      \MapsComponent{[n]}\bracket({%
        (S^{2k-1})^2%
        ,\,%
        S^{2k}
      })
      \ar[
        r,
        "{ \mathrm{ev}_{\ast} }"
      ]
      &
      \grayunderbrace{
      \pi_1 S^{2k}}{1}
      \mathrlap{\,,}
    \end{tikzcd}
  \end{equation}
  whence the claim for $k \geq 2$ follows immediately by \cref{FundamentalGroupOfMapsFromTorusToSphere,ConnectedComponentsOfMappingSpaceAreEquiv}. \\
  In the case $k = 1$, the presence of $\pi_2(S^2) \simeq \mathbb{Z}$ obstructs this direct argument, and our \cref{FundamentalGroupOfMapsFromTorusToSphere,ConnectedComponentsOfMappingSpaceAreEquiv} gives \cref{EndResultFork=1} without determining the \emph{Planck constant} to be $2n$.
  But this value is fixed by \cref{TheShortExactSequence} below.
\end{proof}

\begin{remark}
  Readers familiar with \emph{rational homotopy theory} (cf. \parencites{FHT2000}[\S III]{FSS23-Char}) may find it instructive to reproduce the statement of \cref{FinalTheorem} at the level of \emph{minimal Sullivan models}:
  For a $X$ connected nilpotent space of finite rational type, we denote its minimal Sullivan model (over some rational ground field $\mathbb{K}$) by: 
  \begin{equation}
    \mathrm{CE}\bracket({\mathfrak{l}X}) 
      \in 
    \mathrm{dgCAlg}_{\mathbb{\mathbb{K}}}
  \end{equation}
  (being the \emph{Chevalley-Eilenberg algebra} $\mathrm{CE}(-)$ of the \emph{rational Whitehead bracket $L_\infty$-algebra} $\mathfrak{l}X$ of $X$, cf. \cite[Prop. 5.11 \& Rem. 5.4]{FSS23-Char}).\\
  The minimal models of positive even-dimensional spheres are (cf. \cite[\S 1.2]{Menichi2015})%
  \footnote{
    Here $\mathbb{K}_{{}_{\mathrm{d}}}\![L]$ denotes the free differential graded-commutative algebra on a list $L$ of graded generators, and the quotient divides out the shown differential ideal. 
  }
  $$
    \mathrm{CE}\bracket({\mathfrak{l}S^{2k}})
    \simeq
    \mathbb{\mathbb{K}}_{{}_{\mathrm{d}}}
    \!\!
    \left[{\!
      \def\arraystretch{1.3}
      \begin{array}{l}
        f
        \\
        h
      \end{array}
    \!}\right]
    \!\Big/\!
    \left({
    \begin{aligned}
      \mathrm{d}\, f      & = 0
      \\
      \mathrm{d}\, h
      & = 
      \tfrac{1}{2}
      f \wedge f
    \end{aligned}
    }\right)
    \,,
    \;\;\;
    \begin{aligned}
      \mathrm{deg}(f)
      & = 2k
      \\
      \mathrm{deg}(h)
      & = 4k-1
      \mathrlap{\,,}
    \end{aligned}
  $$
  and the generators of the minimal model for the mapping space out of a squared odd-dimensional sphere are (cf. \emph{toroidification} in \parencites[p. 9]{SatiVoronov2025}[p. 35]{GSS25-TD}) given by: 
  $$
    \mathrm{CE}\bracket({
      \mathfrak{l}
      \mathrm{Map}\bracket({
        (S^{2k-1})^2
        ,
        S^{2k}
      })
    })
    \simeq
    \mathbb{K}_{_{\mathrm{d}}}
    \!\!
    \left[{
    \def\arraystretch{1.25}
    \begin{array}{l}
      f
      \\
      h
      \\
      \overset{a}{\mathrm{s}}
      f
      \\
      \overset{b}{\mathrm{s}}
      f
      \\
      \overset{a}{\mathrm{s}}
      h
      \\
      \overset{b}{\mathrm{s}}
      h
      \\
      \overset{a}{\mathrm{s}}
      \overset{b}{\mathrm{s}}
      h
    \end{array}%
    \!\!}\right]
    \!\bigg/\!
    \left(
    \begin{aligned}
      \mathrm{d}\, f 
        & = 0
      \\[-2pt]
      \mathrm{d}\, h 
        & = \tfrac{1}{2}
        f \wedge f
      \\[-2pt]
      \mathrm{d}\, 
      \overset{a}{\mathrm{s}}
      f
      &=
      0
      \\[-2pt]
      \mathrm{d}\, 
      \overset{b}{\mathrm{s}}
      f
      & = 
      0
      \\[-2pt]
      \mathrm{d}\, 
      \overset{a}{\mathrm{s}}
      h
      & =
      f 
        \wedge 
      \overset{a}{\mathrm{s}}f
      \\[-2pt]
      \mathrm{d}\, 
      \overset{b}{\mathrm{s}}
      h
      & =
      f 
        \wedge 
      \overset{b}{\mathrm{s}}f
      \\[-2pt]
      \mathrm{d}\, 
      \overset{a}{\mathrm{s}}
      \overset{b}{\mathrm{s}}
      h
      & =
      \overset{a}{\mathrm{s}}f
      \wedge
      \overset{b}{\mathrm{s}}f
    \end{aligned}
    \right)
    ,
    \begin{array}{l}
    \mathrm{deg}\bracket({
      \overset{\bullet}{\mathrm{s}}
      (\text{-})
    })
    =
    \\
    \;\; \mathrm{deg}({\text{-}})
    -
    2k + 1.
    \end{array}
  $$
  These co-binary differentials dually exhibit the Whitehead brackets over $\mathbb{K}$ (cf. \parencites[Thm. 6.1]{AndrewsArkowitz1978}[Prop. 13.16]{FHT2000}).
  In particular, the last line dually is the Whitehead bracket shown above in \cref{FirstDiagramForDerivingHeisenberg}, now over $\mathbb{K}$. For $\mathbb{K} \defneq \mathbb{R}$, this is the nontrivial Lie bracket of the Lie algebra of the ordinary Heisenberg group (\cref{OrdinaryHeisenbergGroup}). Of course, the crucial level $\ell = 2$ of this Whitehead bracket, which we established above, cannot be discerned in rational homotopy theory (where all non-vanishing factors correspond to isomorphic models).  
\end{remark}

\subsection{Planck's constant}

For completeness, we recall in streamlined form the argument (following \cite{Hansen1974}) for the order $2n$ of the integer Heisenberg group extension (called \emph{Planck's constant} in \cref{OnGroupTheory}) in the case $k = 1$, \cref{EndResultFork=1} in \cref{FinalTheorem}.

\begin{lemma}
\label[lemma]
  {MappingSpacesOutOfWedgeSumOfCirclesIntoS2}
  The following forgetful map \textup{(from the fundamental groups of the pointed to that of the unpointed mapping space)} is an isomorphism:
  $$
    \begin{tikzcd}
      \underbrace{
      \pi_1
      \PointedMaps\bracket({
        S^1 \vee S^1
        ,
        S^2
      })
      }_{ \mathbb{Z}^2 }
      \ar[r, "{ \sim }"]
      &
      \pi_1
      \Maps\bracket({
        S^1 \vee S^1
        ,
        S^2
      })
      \mathrlap{\,.}
    \end{tikzcd}
  $$
\end{lemma}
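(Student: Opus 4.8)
The plan is to realize the forgetful map as the fiber inclusion of the basepoint evaluation fibration over $\PointedMaps(S^1\vee S^1,S^2)$ and then to note that this fibration is split, so that its long exact homotopy sequence collapses to the asserted isomorphism; this is the argument of \cite{Hansen1974}, streamlined. First I would identify the source: since $S^1\vee S^1$ is a wedge, $\PointedMaps(S^1\vee S^1,S^2)\cong\PointedMaps(S^1,S^2)\times\PointedMaps(S^1,S^2)=(\Omega S^2)^2$, which is path-connected (because $\pi_0\,\Omega S^2=\pi_1 S^2=0$) with $\pi_1\cong(\pi_2 S^2)^2\cong\mathbb{Z}^2$, matching the underbrace in the statement. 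Because $\{x_0\}\hookrightarrow S^1\vee S^1$ is a cofibration, the basepoint evaluation $\mathrm{ev}_\ast\colon\Maps(S^1\vee S^1,S^2)\to S^2$ is a Hurewicz fibration whose fiber over the basepoint is $\PointedMaps(S^1\vee S^1,S^2)$; its total space is therefore also path-connected (connected fiber over connected base), so ``$\pi_1\Maps(S^1\vee S^1,S^2)$'' is unambiguous, and under these identifications the forgetful map of the lemma is precisely the fiber inclusion.

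The key step is that $\mathrm{ev}_\ast$ admits a section: the assignment $y\mapsto\mathrm{const}_y$ (the adjoint of the projection $S^2\times(S^1\vee S^1)\to S^2$) satisfies $\mathrm{ev}_\ast(\mathrm{const}_y)=y$. Hence $\mathrm{ev}_\ast$ is split surjective on all homotopy groups, so the connecting homomorphism of the long exact sequence vanishes identically; in particular the segment
\[
  \pi_2 S^2\;\xrightarrow{\,0\,}\;\pi_1\PointedMaps(S^1\vee S^1,S^2)\;\longrightarrow\;\pi_1\Maps(S^1\vee S^1,S^2)\;\longrightarrow\;\pi_1 S^2=0
\]
exhibits the forgetful map as injective (since $\partial=0$) and as surjective (since $\pi_1 S^2=0$), hence as an isomorphism.

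There is no serious obstacle here; the only points needing a word of care are the path-connectedness of both mapping spaces (so that the fundamental groups are unambiguous) and the fact that $\mathrm{ev}_\ast$ is a genuine fibration. It is worth noting that the nontriviality of $\pi_2 S^2$ --- which is exactly what blocks the analogous shortcut for $\Maps(T^2,S^2)$ and thereby forces the separate determination of the ``Planck constant'' $2n$ --- is harmless for the domain $S^1\vee S^1$ precisely because the section forces $\partial$ to vanish.
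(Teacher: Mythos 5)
Your proposal is correct and follows essentially the same route as the paper: both arguments use the constant-maps section of the basepoint-evaluation fibration to kill the connecting homomorphism out of $\pi_2 S^2$, and then read off the isomorphism from the long exact sequence together with $\pi_1 S^2 = 1$. The extra care you take (identifying the fiber as $(\Omega S^2)^2$, checking path-connectedness, and noting the cofibration hypothesis making $\mathrm{ev}_\ast$ a fibration) is sound and only makes explicit what the paper leaves implicit.
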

\begin{proof}
  The evaluation fiber sequence has a section (landing in the single connected component of $\mathrm{Map}(S^1 \vee S^1, S^2)$),
  whence the connecting homomorphism vanishes,
  $$
    \begin{tikzcd}[
      column sep=20pt,
      row sep=10pt
    ]
     &
     \phantom{---------}
     \ar[r, ->>]
     &
     \pi_2 S^2
     \ar[
       dll,
       snake left,
       "{ 0 }"{description}
     ]
     \\
     \pi_1\PointedMaps\bracket({
       S^1 \vee S^1
       ,
       S^2
     })
     \ar[r, "{ \sim }"]
     &
     \pi_1\Maps\bracket({
       S^1 \vee S^1
       ,
       S^2
     })
     \ar[r]
     &
     \underbrace{
       \pi_1 S^2
     }_{1}
     \mathrlap{\,,}
    \end{tikzcd}
  $$
  because the previous map is thus split surjective.
\end{proof}

\begin{lemma}[{\parencites[Thm. 5.3(i)]{STHu1946}[Lem. 3.9]{Koh1960}}]
\label[lemma]{FundamentalGroupsOfMapsS2ToS2}
  For all $n \in \mathbb{Z}$, we have
  $$
    \pi_1 \MapsComponent{n}\bracket({
      S^2,
      S^2
    })
    \simeq
    \CyclicGroup{2n}
    \,.
  $$
\end{lemma}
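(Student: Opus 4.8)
The plan is to run the basepoint-evaluation fibration
$$
  \PointedMapsComponent{n}(S^2,S^2)
  \longrightarrow
  \MapsComponent{n}(S^2,S^2)
  \xrightarrow{\mathrm{ev}_{\ast}}
  S^2
  \mathrlap{\,,}
$$
the same one already used in the proof of \cref{FinalTheorem}, but now retaining the piece that was truncated there because $\pi_2(S^{2k}) = 1$ holds only for $k \geq 2$. From its long exact homotopy sequence, using $\pi_1(S^2) = 1$, one extracts the exact sequence
$$
  \pi_2(S^2)
  \xrightarrow{\partial}
  \pi_1\,\PointedMapsComponent{n}(S^2,S^2)
  \longrightarrow
  \pi_1\,\MapsComponent{n}(S^2,S^2)
  \longrightarrow
  1
  \mathrlap{\,,}
$$
so that $\pi_1\,\MapsComponent{n}(S^2,S^2) \simeq \mathrm{coker}(\partial)$, and the problem splits into (a) identifying the middle group and (b) computing $\mathrm{im}(\partial)$.

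For (a): the pointed mapping space $\PointedMaps(S^2,S^2) \simeq \Omega^2 S^2$ is a double loop space, hence a homotopy-commutative H-group all of whose path-components are pairwise homotopy equivalent (by translation, since $\pi_0$ is a group); thus $\pi_1\,\PointedMapsComponent{n}(S^2,S^2) \simeq \pi_1(\Omega^2 S^2) \simeq \pi_3(S^2) \simeq \mathbb{Z}\langle [h_{\mathbb{C}}]\rangle$ for every $n$ (\cref{SomeUnstableHomotopyGroupsOfSpheres}), generated by the complex Hopf fibration. For (b): the connecting homomorphism of an evaluation fibration $\Maps(S^m,Y) \to Y$ is, up to sign, the Whitehead product with the homotopy class of the component under consideration — this is classical and underlies e.g. \cite[Lem. 3.9]{Koh1960} (cf. also \cite{Whitehead1978}). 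Here $m = 2$, $Y = S^2$, and the degree-$n$ component has class $n[\mathrm{id}_2] \in \pi_2(S^2) \simeq \mathbb{Z}$; since $\partial$ is a homomorphism out of $\pi_2(S^2) \simeq \mathbb{Z}$ and Whitehead products are biadditive in degrees $\geq 2$,
$$
  \partial([\mathrm{id}_2])
  =
  \pm\,\bigl[\,[\mathrm{id}_2],\; n[\mathrm{id}_2]\,\bigr]_{\mathrm{Wh}}
  =
  \pm\, n\,\bigl[\,[\mathrm{id}_2],\; [\mathrm{id}_2]\,\bigr]_{\mathrm{Wh}}
  =
  \pm\, 2 n\,[h_{\mathbb{C}}]
  \mathrlap{\,,}
$$
the last equality being the $k = 1$ instance of \cref{WhiteheadSquareRootsOfHopfFibrations} (\cref{SamelsonProductProperty}) — precisely the point at which the factor $2$, and hence the Heisenberg \emph{level}, enters. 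Hence $\mathrm{im}(\partial) = 2 n\,\mathbb{Z} \subseteq \pi_3(S^2) \simeq \mathbb{Z}$, and the exact sequence yields $\pi_1\,\MapsComponent{n}(S^2,S^2) \simeq \CyclicGroup{2n}$ (with $\CyclicGroup{0} = \mathbb{Z}$ for $n = 0$), recovering \cite[Thm. 5.3(i)]{STHu1946} and \cite[Lem. 3.9]{Koh1960}.

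The crux is step (b): making the connecting-map computation rigorous rather than folkloric. Two points need care. First, the identification in (a) uses the H-translation carrying the constant component of $\Omega^2 S^2$ onto the degree-$n$ component, and one must verify that under this identification $\partial$ genuinely acquires the $n$-\emph{dependent} form above — i.e. that it is the Whitehead product against the \emph{variable} class $n[\mathrm{id}_2]$ of the component, not against a fixed class. The safest route is to unwind $\partial([\mathrm{id}_2])$ directly from the definition of the connecting homomorphism: represent it by a null-homotopy of the identity map "dragged once around $S^2$", which produces a map out of $\Sigma S^2$ and hence a class in $\pi_0\,\PointedMaps(\Sigma S^2,S^2) \simeq \pi_3(S^2)$, and recognize this class as $n$ times a representative of the Whitehead square $\bigl[\,[\mathrm{id}_2],[\mathrm{id}_2]\,\bigr]_{\mathrm{Wh}}$ — the same style of diagram chase already carried out, for the torus, in the proof of \cref{FundamentalGroupOfMapsFromTorusToSphere}. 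Second, one should note that only torsion-free input is involved: $\pi_2(S^2) \simeq \mathbb{Z}$ is torsion-free, so no Hopf-invariant-zero class can interfere, which is exactly where the Hopf-invariant-$2$ content of \cref{SamelsonProductProperty} does its work. If one is content to import Hu's and Koh's results verbatim the above is merely a streamlined re-derivation; for self-containedness I would still spell out the exact-sequence reduction and the Whitehead-product evaluation in full.
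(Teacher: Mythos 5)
Your proof is correct and follows essentially the same route as the paper's: the basepoint-evaluation fibration, the identification of the connecting homomorphism as the Whitehead product with $n\cdot[\mathrm{id}_2]$ (which the paper imports from \cite[(3.4)${}^\ast$]{Whitehead1946} rather than re-deriving), and the reduction to $\bigl[[\mathrm{id}_2],[\mathrm{id}_2]\bigr]_{\mathrm{Wh}} = 2[h_{\mathbb{C}}]$ via \cref{WhiteheadSquareRootsOfHopfFibrations} to get multiplication by $2n$. Your extra remarks on justifying $\pi_1\,\PointedMapsComponent{n}(S^2,S^2)\simeq\pi_3(S^2)$ by H-group translation and on the $n$-dependence of $\partial$ are sound but address points the paper handles by citation.
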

\begin{proof}
  The evaluation homotopy fiber sequence yields the following homotopy long exact sequence:
  $$
    \begin{tikzcd}[
      column sep=20pt,
      row sep=0pt
    ]
      \PointedMapsComponent{n}\bracket({
        S^2
        ,
        S^2
      })
      \ar[r]
      &
      \MapsComponent{n}\bracket({
        S^2
        ,
        S^2
      })
      \ar[
        r, 
        "{ \mathrm{ev}_\ast }"
      ]
      &
      S^2
      \\[-5pt]
      &&
      \overbrace{
        \pi_2(S^2)
      }^{\mathbb{Z}}
      \ar[
        dll,
        snake left,
        "{ 
          \bracket[{
           [n \cdot \mathrm{id}_2]
           ,
           -
          }]_{\mathrm{Wh}} 
        }"{swap}
      ]
      \\[+10pt]
      \underbrace{
      \pi_1
      \PointedMapsComponent{n}\bracket({
        S^2
        ,
        S^2
      })
      }_{
        \pi_3(S^2) \,\simeq\, \mathbb{Z}
      }
      \ar[r]
      &
      \pi_1
      \MapsComponent{n}\bracket({
        S^2
        ,
        S^2
      })
      \ar[r]
      &
      \underbrace{
      \pi_1(S^1)
      }_{1}
      \mathrlap{\,,}
    \end{tikzcd}
  $$
  where the connecting homomorphism is given by the Whitehead bracket with $n \cdot \mathrm{id}_2$, by \cite[(3.4)${}^\ast$]{Whitehead1946}. But,
  due to linearity of the Whitehead bracket and by \cref{WhiteheadSquareRootsOfHopfFibrations}, this means that the connecting homomorphism is given by multiplication with $2n$:
  $$
    \begin{aligned}
    1 = [\mathrm{id}_2]
    \;\; \longmapsto  \;\; 
    \bracket[{
      \bracket[{n\cdot \mathrm{id}_2}]
      ,
      \bracket[{\mathrm{id}_2}]
    }]_{\mathrm{Wh}}
     &= 
    n \cdot \bracket[{
      \bracket[{\mathrm{id}_2}]
      ,
      \bracket[{\mathrm{id}_2}]
    }]_{\mathrm{Wh}}
    \\
    & = 
    n \cdot [2 h_{\mathbb{C}}]
    \\
    & = 
    2n \;\; \in \mathbb{Z} \simeq \pi_3(S^2)
    \mathrlap{\,.}
    \end{aligned}
  $$
  From this, the claim follows by exactness of the above sequence.
\end{proof}

\begin{proposition}[{following \cite[Prop. 2]{Hansen1974}}]
\label[proposition]{TheShortExactSequence}
  For $n \in \mathbb{Z}$, we have a short exact sequence of groups of this form:
  $$
    \begin{tikzcd}[
      column sep=10pt
    ]
      1 
      \ar[r]
      &
      \CyclicGroup{2n}
      \ar[rr]
      &&
      \pi_1 \MapsComponent{n}\bracket({
        (S^1)^2
        ,\,
        S^2
      })
      \ar[rr]
      &&
      \mathbb{Z}^2
      \ar[r]
      &
      1
      \mathrlap{\,.}
    \end{tikzcd}
  $$
\end{proposition}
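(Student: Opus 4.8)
The plan is to feed \cref{FundamentalGroupOfMapsFromTorusToSphere} (case $k=1$) through the homotopy long exact sequence of the basepoint-evaluation fibration over the torus, as in the proof of \cref{FundamentalGroupsOfMapsS2ToS2}. Since $\pi_1(S^2) = 1$, the exact tail of the fibration $\PointedMapsComponent{n}((S^1)^2, S^2) \to \MapsComponent{n}((S^1)^2, S^2) \xrightarrow{\mathrm{ev}_\ast} S^2$ reads
$$
  \pi_2(S^2)
  \xrightarrow{\;\partial\;}
  \pi_1 \PointedMapsComponent{n}((S^1)^2, S^2)
  \longrightarrow
  \pi_1 \MapsComponent{n}((S^1)^2, S^2)
  \longrightarrow
  1
  \mathrlap{\,,}
$$
so $\pi_1 \MapsComponent{n}((S^1)^2, S^2) \simeq \mathrm{coker}(\partial)$; and by \cref{FundamentalGroupOfMapsFromTorusToSphere,ConnectedComponentsOfMappingSpaceAreEquiv} the middle group is abstractly $\mathrm{Heis}_3(\mathbb{Z},0)$, which sits in the central extension $1 \to \langle\zeta\rangle \to \mathrm{Heis}_3(\mathbb{Z},0) \to \mathbb{Z}^2 \to 1$ with $\langle\zeta\rangle \cong \mathbb{Z}$. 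So the whole task reduces to locating $\mathrm{im}(\partial)$ inside this central $\langle\zeta\rangle$.

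To compute $\partial$, I would use naturality of the evaluation fibration under precomposition with the pointed collapse map $q \colon (S^1)^2 \to (S^1)^2/(S^1 \vee S^1) = S^2$ (which is $\mathrm{pr}_c$ of \cref{CWComplexStructureOnSquaredSphere} and carries winding number $n$ to degree $n$): the resulting ladder of exact sequences has $\mathrm{id}_{\pi_2(S^2)}$ as its left vertical, so $\partial([\mathrm{id}_2]) = q^\ast_\ast\bigl(\partial^{S^2}([\mathrm{id}_2])\bigr)$, where $\partial^{S^2}([\mathrm{id}_2]) = 2n\,[h_{\mathbb{C}}] \in \pi_3(S^2) = \pi_1\PointedMapsComponent{n}(S^2,S^2)$ was already computed in the proof of \cref{FundamentalGroupsOfMapsS2ToS2} (via $[[\mathrm{id}_2],[\mathrm{id}_2]]_{\mathrm{Wh}} = 2[h_{\mathbb{C}}]$, \cref{WhiteheadSquareRootsOfHopfFibrations}). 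It remains to pin down $q^\ast_\ast$: applying $\PointedMaps(-,S^2)$ to the cofiber sequence $S^1 \xrightarrow{[\mathrm{id}_1,\mathrm{id}_1]_{\mathrm{Wh}}} S^1\vee S^1 \to (S^1)^2 \xrightarrow{q} S^2 \to \cdots$ underlying the cell structure \cref{CWComplexStructureOnSquaredSphere} exhibits $q^\ast$ as the fiber inclusion of a fibration $\PointedMapsComponent{n}(S^2,S^2) \xrightarrow{q^\ast} \PointedMapsComponent{n}((S^1)^2,S^2) \to \PointedMaps(S^1\vee S^1,S^2)$ (with connected base) whose connecting map into the fiber is $\bigl(\Sigma[\mathrm{id}_1,\mathrm{id}_1]_{\mathrm{Wh}}\bigr)^\ast$, hence null-homotopic, the suspension of a Whitehead bracket vanishing (\cref{SuspensionOfWhiteheadVanishes}). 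Thus $q^\ast_\ast$ is injective, and a short computation in $\mathrm{Heis}_3(\mathbb{Z},0)$ — whose commutator subgroup $\langle\zeta^2\rangle$ must be contained in the normal subgroup $\mathrm{im}(q^\ast_\ast) \cong \mathbb{Z}$ with abelian quotient $\mathbb{Z}^2$, forcing $\mathrm{im}(q^\ast_\ast) = \langle\zeta\rangle$ — identifies $q^\ast_\ast([h_{\mathbb{C}}])$ with a generator $\zeta^{\pm1}$ of the center. Hence $\mathrm{im}(\partial) = \langle\zeta^{2n}\rangle$.

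Granting this, $\pi_1\MapsComponent{n}((S^1)^2, S^2) \simeq \mathrm{Heis}_3(\mathbb{Z},0)/\langle\zeta^{2n}\rangle$, which by \cref{IntegerHeisenbergGroupAtLevel2} is exactly $\mathrm{Heis}_3(\mathbb{Z},2n)$; since the image of $\langle\zeta\rangle$ therein is a central subgroup isomorphic to $\CyclicGroup{2n}$ with quotient $\mathbb{Z}^2$, this produces the asserted short exact sequence $1 \to \CyclicGroup{2n} \to \pi_1\MapsComponent{n}((S^1)^2, S^2) \to \mathbb{Z}^2 \to 1$, thereby also completing \cref{EndResultFork=1} (for $n=0$ one reads $\CyclicGroup{0} = \mathbb{Z}$, consistently with \cref{FundamentalGroupOfMapsFromTorusToSphere}). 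The step I expect to be genuinely delicate is the identification of $\partial$ itself: one must carefully track the comparison map $q^\ast$ through the non-trivial component $n$ and make sure that the identification $\pi_1\PointedMapsComponent{n} \simeq \mathrm{Heis}_3(\mathbb{Z},0)$ — transported from component $0$ via \cref{ConnectedComponentsOfMappingSpaceAreEquiv} — is compatible with the fiber-sequence bookkeeping above. A safe alternative, and the original route of \cite[Prop. 2]{Hansen1974}, is to compute $\partial$ directly from the classical identification of the connecting map of an evaluation fibration with a generalized Whitehead product with the chosen map $[f_n]$.
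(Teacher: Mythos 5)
Your argument is correct in substance, but it is a genuinely different route from the paper's. The paper maps out of the fundamental-polygon cofiber sequence $S^1 \xrightarrow{[\alpha,\beta]} S^1\vee S^1 \to (S^1)^2 \to S^2$ and works with the resulting restriction fibration of \emph{unpointed} mapping spaces: the $\CyclicGroup{2n}$ enters as $\pi_1\MapsComponent{n}(S^2,S^2)$ via \cref{FundamentalGroupsOfMapsS2ToS2}, the $\mathbb{Z}^2$ as $\pi_1\Maps(S^1\vee S^1, S^2)$ via \cref{MappingSpacesOutOfWedgeSumOfCirclesIntoS2}, and exactness of the three-term piece follows at one stroke because both adjacent connecting maps are precomposition with the commutator $[\alpha,\beta]$, which dies after passing to abelian homotopy groups. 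You instead fiber $\MapsComponent{n}((S^1)^2,S^2)$ over $S^2$ by basepoint evaluation and compute the connecting map $\pi_2(S^2)\to \pi_1\PointedMapsComponent{n}((S^1)^2,S^2)\simeq \mathrm{Heis}_3(\mathbb{Z},0)$ by naturality along the collapse $q$ together with the Whitehead-square computation \cref{WhiteheadSquareRootsOfHopfFibrations}. What your route buys is more than the Proposition asks: you obtain $\pi_1\MapsComponent{n}\simeq \mathrm{Heis}_3(\mathbb{Z},0)/\langle\zeta^{2n}\rangle \simeq \mathrm{Heis}_3(\mathbb{Z},2n)$ outright, i.e.\ the full statement \cref{EndResultFork=1} including the level $2$ in every component, rather than just the abstract extension; the price is that you must presuppose \cref{FundamentalGroupOfMapsFromTorusToSphere,ConnectedComponentsOfMappingSpaceAreEquiv}, whereas the paper keeps \cref{TheShortExactSequence} logically independent of them (they are only combined afterwards in the proof of \cref{FinalTheorem}). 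One remark on the step you flag as delicate: the vanishing of the connecting map of the skeletal fibration at the nontrivial component (your appeal to $(\Sigma[\mathrm{id}_1,\mathrm{id}_1]_{\mathrm{Wh}})^\ast$) is actually dispensable --- since $\mathrm{im}(q^\ast_\ast)=\ker\bracket({\pi_1\PointedMapsComponent{n}((S^1)^2,S^2)\to \pi_1\PointedMaps(S^1\vee S^1,S^2)})$ is a cyclic subgroup containing the commutator subgroup $\langle\zeta^2\rangle\cong\mathbb{Z}$ with torsion-free quotient, it is forced to equal $\langle\zeta\rangle$, and injectivity of $q^\ast_\ast$ together with $q^\ast_\ast([h_{\mathbb{C}}])=\zeta^{\pm1}$ follows for free; only the abstract isomorphism type of $\pi_1\PointedMapsComponent{n}$ from \cref{ConnectedComponentsOfMappingSpaceAreEquiv} is needed, so the bookkeeping worry you raise at the end does not bite.
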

\begin{proof}
First note that the canonical CW-complex structure on $(S^1)^2$ gives rise to a homotopy cofiber sequence of the form
\begin{equation}
  \label{TorusCofiberSequence}
  \begin{tikzcd}[sep=17pt]
    S^1
    \ar[
      rr,
      "{ [\alpha,\,\beta] }"
    ]
    &&
    S^1 \vee S^1
    \ar[r]
    &
    (S^1)^2
    \ar[r]
    &
    S^2
    \mathrlap{,}
  \end{tikzcd}
\end{equation}
where $\alpha$ and $\beta$ denote a pair of representatives of a pair of generators of $\pi_1(S^1 \vee S^1)$, so that their group commutator $[\alpha,\beta]$ is the \emph{fundamental polygon} of the torus (as seen in \cref{HeisenbergGroupCommutatorIllustrated}).

Consider then the  homotopy long exact sequence induced by the homotopy fiber sequence obtained by mapping out of \cref{TorusCofiberSequence}:
$$
  \begin{tikzcd}[
    column sep=15pt,
    row sep=5pt
  ]
    &&
    \pi_2\Maps\bracket({
      S^1 \vee S^1
      ,
      S^2
    })
    \ar[
      dll,
      snake left
    ]
    \\
    \underbrace{
    \pi_1\MapsComponent{n}\bracket({
      S^2
      ,
      S^2
    })
    }_{ \CyclicGroup{2n} }
    \ar[r]
    &
    \pi_1\MapsComponent{n}\bracket({
      (S^1)^2
      ,
      S^2
    })
    \ar[r]
    &
    \underbrace{
    \pi_1\Maps\bracket({
      S^1 \vee S^1
      ,
      S^2
    })
    }_{ \mathbb{Z}^2 }
    \ar[
      dll,
      snake left
    ]
    \\[+7pt]
    \pi_0\MapsComponent{n}\bracket({
      S^2
      ,
      S^2
    })
    \mathrlap{\,,}
  \end{tikzcd}
$$
where under the braces we use \cref{FundamentalGroupsOfMapsS2ToS2,MappingSpacesOutOfWedgeSumOfCirclesIntoS2}.
But the connecting homomorphisms vanish, because by \cref{TorusCofiberSequence} they are given by precomposition with a group commutator but now of representatives of elements of \emph{abelian} groups. Therefore the middle piece shown is short short exact, as claimed.
\end{proof}

\section{Generalizations}
\label{OnGeneralizations}

We may generalize the situation in \cref{OnTheTheorem} from the case of ``squared spheres'' $(S^{2k-1})^2$ to more general CW-complexes $M$ (such as higher genus surfaces, \cref{HigherGenusSurface}), where the Whitehead product attaching map \cref{CWComplexStructureOnSquaredSphere} is allowed to have more arguments and to have multiplicities (\cref{MoreGeneralCWComplex} below). With \cref{SmashOfProjectionMaps} below established, the proofs in this generality follow the same logic as those in \cref{OnTheTheorem}, whence we may be more brief.

\begin{definition}
\label[definition]{MoreGeneralCWComplex}
For $k \in \mathbb{N}_{\geq 1}$ and $i,j$ ranging through some finite set $I$,
let $M$ be the CW-complex given by the following pushout square on the left
\begin{equation}
  \label{GeneralizedCWPushout}
  \begin{tikzcd}[row sep=12pt, column sep=large]
    S^{4k-3}
    \ar[
      r,
      "{
        \phi
      }"
    ]
    \ar[d]
    \ar[
      dr,
      phantom,
      "{ \ulcorner }"{pos=.9}
    ]
    &
    \bigvee_{i}
    S^{2k-1}
    \ar[d]
    \ar[r]
    \ar[
      dr,
      phantom,
      "{ \ulcorner }"{pos=.9}
    ]
    &
    \ast
    \ar[d]
    \\
    D^{4k-2}
    \ar[r]
    &
    M
    \ar[
      r,
      "{ 
        \mathrm{pr}_c 
      }"{swap}
    ]
    \ar[
      d,
      "{ 
        \mathrm{pr}_j 
    }"
    ]
    &
    S^{4k-2}
    \\[+3pt]
    & 
    S^{2k-1}
    \mathrlap{\,,}
  \end{tikzcd}
\end{equation}
where the attaching map is a product
\begin{equation}
  \label{GeneralizedAttachingMap}
  \phi
  :=
  \prod_{i < j}
  \bracket({
    \bracket[{
      \iota_i
      ,
      \iota_j
    }]_{\mathrm{Wh}}
  })^{\lambda_{i j}}
  ,
  \;\;\;
  \lambda_{i j} \in \mathbb{N}
  \,,
\end{equation}
of Whitehead products:
$$
  \begin{tikzcd}[column sep=large]
    S^{4k-3}
    \ar[r]
    \ar[d]
    \ar[
      dr,
      phantom,
      "{ \ulcorner }"{pos=.9}
    ]
    \ar[
      rrr,
      uphordown,
      "{
        [
          \iota_i
          ,\,
          \iota_j
        ]_{\mathrm{Wh}}
      }"
    ]
    &
    S^{2k-1} 
      \vee
    S^{2k-1}
    \ar[
      rr, 
      hook,
      "{
        (i_i, i_j)
      }"
    ]
    \ar[d]
    &&
    \bigvee_i S^{2k-1}
    \\
    D^{4k-2}
    \ar[r]
    &
    S^{2k-1}
      \times
    S^{2k-1}
    \mathrlap{\,,}
  \end{tikzcd}
$$
and where the map $\mathrm{pr}_j$ is induced by the evident projection onto the $j$th $(2k-1)$-cell due to the fact that $\phi$ \cref{GeneralizedAttachingMap} becomes null-homotopic under this projection:
\begin{equation}
  \label{ProjectionOnCellOfSkeleton}
  \begin{tikzcd}[row sep=small, column sep=large]
    S^{4k-3}
    \ar[
      r,
      "{
        \phi
      }"
    ]
    \ar[d]
    \ar[
      dr,
      phantom,
      "{ \ulcorner }"{pos=.9}
    ]
    &
    \bigvee_{i}
    S^{2k-1}
    \ar[d]
    \ar[
      ddr,
      bend left=10pt,
      "{ 
        (\delta_i^j)_{i \in I}
      }"
    ]
    \\
    D^{4k-2}
    \ar[r]
    \ar[
      drr,
      bend right=10pt,
      "{ \ast }"{swap}
    ]
    &
    M
    \ar[
      dr,
      "{
        \mathrm{pr}_j
      }"{description, pos=.45}
    ]
    \\
    && 
    S^{2k-1}
    \mathrlap{.}
  \end{tikzcd}
\end{equation}
\end{definition}

\begin{SCfigure}[1][htbp]
\caption{
  \label{FundamentalPolygons}
  Oriented surfaces arise by identifying boundary segments of their \emph{fundamental polygons} such that their 2-cell attaching map is an iterated Whitehead product \cref{2CellAttachmentForClosedSurfaces}.
}
\adjustbox{
  raise=-.3cm
}{
\adjustbox{
  scale=.8,
  raise=-1.55cm
}{
\begin{tikzpicture}
  \node  at (1,2.7)
  {
    \scalebox{.9}{
      \color{darkblue}
      \bf
      sphere
    }
  };
  \draw[
    line width=3,
    fill=lightgray
  ]
    (0,0)
    rectangle 
    (2,2);

\draw[fill=black]
  (0,0) circle (.08);

\draw[fill=black]
  (2,0) circle (.08);

\draw[fill=black]
  (0,2) circle (.08);

\draw[fill=black]
  (2,2) circle (.08);

\node
  at (1,.9) {$
    S^2_0 \simeq S^2
  $};

\end{tikzpicture}
}
\;\;\;
\adjustbox{
  scale=.8,
  raise=-1.6cm
}{
\begin{tikzpicture}
  \node  at (1,2.7)
  {
    \scalebox{.9}{
      \color{darkblue}
      \bf
      torus
    }
  };
  \draw[
    draw opacity=0,
    fill=lightgray
  ]
    (0,0)
    rectangle 
    (2,2);

\draw[
  dashed,
  color=darkgreen,
  line width=1.6,
]
  (0,2) -- 
  node[yshift=8pt]{
    \scalebox{1}{
      \color{black}
      $a$
    }
  }
  (2,2);
\draw[
  -Latex,
  darkgreen,
  line width=1.6
]
  (1.3-.01,2) -- 
  (1.3,2);

\draw[
  dashed,
  color=darkgreen,
  line width=1.6,
]
  (0,0) -- (2,0);
\draw[
  -Latex,
  darkgreen,
  line width=1.6
]
  (1.3-.01,0) -- 
  (1.3,0);

\draw[
  dashed,
  color=olive,
  line width=1.6,
]
  (0,0) -- (0,2);
\draw[
  -Latex,
  olive,
  line width=1.6
]
  (0, .8+.01) -- 
  (0,.8);

\draw[
  dashed,
  color=olive,
  line width=1.6,
]
  (2,0) -- 
  node[xshift=8pt]{
    \scalebox{1}{
      \color{black}
      $b$
    }
  }
  (2,2);
\draw[
  -Latex,
  olive,
  line width=1.6
]
  (2, .8+.01) -- 
  (2,.8);

\draw[fill=black]
  (0,0) circle (.08);

\draw[fill=black]
  (2,0) circle (.08);

\draw[fill=black]
  (0,2) circle (.08);

\draw[fill=black]
  (2,2) circle (.08);

\node
  at (1,.9)
  {
    $S^2_1 \simeq \mathbb{T}^2$
  };

\end{tikzpicture}
}
\hspace{-23pt}
\adjustbox{
  raise=-2.3cm
}{
  \begin{tikzpicture}[
    scale=.65
  ]

  \node[
    rotate=39
  ] at (-1.8,1.8) {
    \scalebox{.64}{
      \color{darkblue}
      \bf
      \def\arraystretch{.7}
      \begin{tabular}{c}
        2-holed
        \\
        torus
      \end{tabular}
    }
  };

  \draw[
    draw opacity=0,
    fill=lightgray
  ]
    (90-22.5:2) --
    (45+90-22.5:2) --
    (2*45+90-22.5:2) --
    (3*45+90-22.5:2) --
    (4*45+90-22.5:2) --
    (5*45+90-22.5:2) --
    (6*45+90-22.5:2) --
    (7*45+90-22.5:2) --
    cycle;

    \draw[
      line width=1.6,
      dashed,
      darkgreen
    ]
      (90-22.5:2)
      --
      (90+22.5:2);

    \draw[
      line width=1.6,
      dashed,
      olive
    ]
      (-45+90-22.5:2)
      --
      (-45+90+22.5:2);

    \draw[
      line width=1.6,
      dashed,
      darkgreen
    ]
      (-90+90-22.5:2)
      --
      (-90+90+22.5:2);

    \draw[
      line width=1.6,
      dashed,
      olive
    ]
      (-135+90-22.5:2)
      --
      (-135+90+22.5:2);

    \draw[
      line width=1.6,
      dashed,
      darkblue
    ]
      (-180+90-22.5:2)
      --
      (-180+90+22.5:2);

    \draw[
      line width=1.6,
      dashed,
      darkblue
    ]
      (-180+90-22.5:2)
      --
      (-180+90+22.5:2);

    \draw[
      line width=1.6,
      dashed,
      purple
    ]
      (-225+90-22.5:2)
      --
      (-225+90+22.5:2);

    \draw[
      line width=1.6,
      dashed,
      darkblue
    ]
      (-270+90-22.5:2)
      --
      (-270+90+22.5:2);

    \draw[
      line width=1.6,
      dashed,
      purple
    ]
      (-315+90-22.5:2)
      --
      (-315+90+22.5:2);

\draw[
  line width=1.3,
  -Latex,
  darkgreen
]
  (0.3,1.83) -- (.35,1.83);

\begin{scope}[
  rotate=-45
]
\draw[
  line width=1.3,
  -Latex,
  olive
]
  (0.24,1.83) -- 
  (.25,1.83);
\end{scope}
\begin{scope}[
  xscale=-1,
  rotate=+90
]
\draw[
  line width=1.3,
  -Latex,
  darkgreen
]
  (0.25,1.83) -- 
  (.3,1.83);
\end{scope}
\begin{scope}[
  xscale=-1,
  rotate=+135
]
\draw[
  line width=1.3,
  -Latex,
  olive
]
  (0.25,1.83) -- 
  (.3,1.83);
\end{scope}

\begin{scope}[
  rotate=+180
]
\draw[
  line width=1.3,
  -Latex,
  darkblue
]
  (.23,1.83) -- 
  (.33,1.83);
\end{scope}

\begin{scope}[
  rotate=-225
]
\draw[
  line width=1.3,
  -Latex,
  purple
]
  (.2,1.83) -- 
  (.3,1.83);
\end{scope}

\begin{scope}[
  rotate=-270
]
\draw[
  line width=1.3,
  -Latex,
  darkblue
]
  (-.20,1.83) -- 
  (-.25,1.83);
\end{scope}
\begin{scope}[
  rotate=-315
]
\draw[
  line width=1.3,
  -Latex,
  purple
]
  (-.20,1.83) -- 
  (-.25,1.83);
\end{scope}

\foreach \n in {1,...,8} {
  \draw[
    fill=black
  ]
    (22.5+\n*45:2)
    circle
    (.11);
};

\node[
  scale=.9
] at
  (0,2.3) {
    $a_1$
  };

\begin{scope}[
  rotate=-45
]
  \node[scale=.9] at
    (0,2.3) {
      $b_1$
    };
\end{scope}

\begin{scope}[
  rotate=-180
]
  \node[scale=.9] at
    (0,2.3) {
      $a_2$
    };
\end{scope}

\begin{scope}[
  rotate=-180-45
]
  \node[scale=.9] at
    (0,2.3) {
      $b_2$
    };
\end{scope}

\node
  at (0,0) {
    $S^2_2$
  };
  
\end{tikzpicture}
}
}
\end{SCfigure}
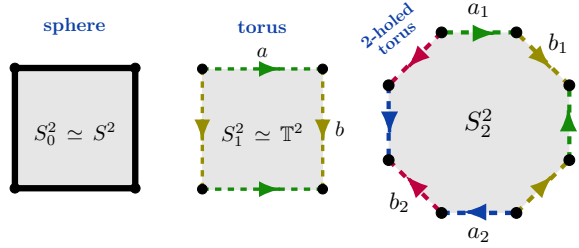

\begin{example}
\label[example]{HigherGenusSurface}
  The closed oriented \emph{surface} $S^2_g$ of genus $g \in \mathbb{N}$ is an instance of \cref{MoreGeneralCWComplex} for $k = 1$, where the pushout reflects the classical construction of $S^2_g$ by identifications among the boundary segments of a \emph{fundamental polygon} (cf. \cite[Thm. 2.8]{Giblin1977} and \cref{FundamentalPolygons}):
  \begin{equation}
    \label{2CellAttachmentForClosedSurfaces}
    \begin{tikzcd}[
      column sep=40pt
    ]
      S^1
      \ar[
        rr,
        "{
          [
            \iota_1
            ,
            \iota_{g+1}
          ]
          \cdots
          [
            \iota_g
            ,
            \iota_{2g}
          ]
        }"
      ]
      \ar[d]
      &
      {}
      \ar[
        dr,
        phantom,
        "{ \ulcorner }"{pos=.9}
      ]
      &
      \bigvee^{2g}
      S^1
      \ar[d]
      \\
      D^2
      \ar[rr]
      &&
      S^2_g
      \mathrlap{\,.}
    \end{tikzcd}
  \end{equation}
\end{example}
The key to generalizing the theorem of \cref{OnTheTheorem} is now the following observation, generalizing \cref{IdentityBetweenProjections}:
\begin{lemma}
\label[lemma]{SmashOfProjectionMaps}
With $M$ as in \cref{MoreGeneralCWComplex}, 
the following diagrams commute up to homotopy, for $j < j' \in I$:
\begin{equation}
  \begin{tikzcd}[
    row sep=12pt, 
    column sep=40pt
  ]
    &
    M \wedge M
    \ar[
      dr,
      "{
        \mathrm{pr}_j
        \wedge
        \mathrm{pr}_{j'}
      }"{sloped}
    ]
    \\
    M
    \ar[
      ur,
      "{ \Delta }"{sloped}
    ]
    \ar[
      rr,
      "{
        \lambda_{j j'}
        \cdot
        \mathrm{pr}_c
      }"
    ]
    &&
    S^{4k-2}
    \mathrlap{.}
  \end{tikzcd}
\end{equation}
\end{lemma}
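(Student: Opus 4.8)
The plan is to reduce the assertion to the already-settled case \cref{IdentityBetweenProjections} by collapsing $M$ onto a two-cell complex, together with a degree count on the top cell.

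First I would record that $M$ has cells only in dimensions $0$, $2k-1$ and $4k-2$, while $S^{4k-2}$ is $(4k-3)$-connected and $2k-1 < 4k-2$; hence $\mathrm{sk}_{2k-1}M = \bigvee_i S^{2k-1}$ maps nullhomotopically to $S^{4k-2}$, and — since the connecting map $\Sigma\phi$ of the Puppe sequence of $\mathrm{sk}_{2k-1}M\hookrightarrow M$ is nullhomotopic (a suspended combination of Whitehead brackets; cf.\ \cref{SuspensionOfWhiteheadVanishes}) — the collapse $\mathrm{pr}_c\colon M\to M/\mathrm{sk}_{2k-1}M\simeq S^{4k-2}$ induces a surjection $\mathbb{Z}\simeq[S^{4k-2},S^{4k-2}]\twoheadrightarrow[M,S^{4k-2}]$, with $\mathrm{pr}_c^\ast$ carrying a generator of $H^{4k-2}(S^{4k-2})$ to a generator $c^\ast$ of $H^{4k-2}(M)\simeq\mathbb{Z}$. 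In particular both maps in the Lemma factor up to homotopy through $\mathrm{pr}_c$, so it suffices to check that $(\mathrm{pr}_j\wedge\mathrm{pr}_{j'})\circ\Delta$ and $\lambda_{jj'}\cdot\mathrm{pr}_c$ induce the same homomorphism on $H^{4k-2}$.

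Next I would rewrite this in terms of cup products: the reduced diagonal realizes the reduced cup product, so $\big((\mathrm{pr}_j\wedge\mathrm{pr}_{j'})\circ\Delta\big)^\ast$ sends the generator of $H^{4k-2}(S^{4k-2})$ to $e_j^\ast\cup e_{j'}^\ast$, where $e_i^\ast := \mathrm{pr}_i^\ast(\mathrm{gen})\in H^{2k-1}(M)$ is the class dual to the $i$-th bottom cell, while $\lambda_{jj'}\cdot\mathrm{pr}_c$ sends it to $\lambda_{jj'}c^\ast$. So the claim becomes the classical identity $e_j^\ast\cup e_{j'}^\ast = \lambda_{jj'}\,c^\ast$. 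To prove it I would collapse every bottom cell other than the $j$-th and $j'$-th; since each basic Whitehead bracket in $\phi$ other than $[\iota_j,\iota_{j'}]_{\mathrm{Wh}}$ involves a collapsed summand and thus dies (using $[\,-,0\,]_{\mathrm{Wh}}=0$), this is a quotient map $M\to M' = (S^{2k-1}\vee S^{2k-1})\cup_{[\iota_j,\iota_{j'}]^{\lambda_{jj'}}}D^{4k-2}$, compatible with $\mathrm{pr}_j,\mathrm{pr}_{j'},\mathrm{pr}_c$ and inducing isomorphisms on $H^{4k-2}$ and on the $(j,j')$-part of $H^{2k-1}$, so the computation takes place in $M'$. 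There I would use that $[\iota_j,\iota_{j'}]_{\mathrm{Wh}}$ — being the Whitehead bracket of the two factor inclusions $S^{2k-1}\hookrightarrow S^{2k-1}\times S^{2k-1}$ — vanishes in $\pi_{4k-3}(S^{2k-1}\times S^{2k-1})$, hence so does its $\lambda_{jj'}$-fold multiple, so the skeleton inclusion extends to a map $\rho\colon M'\to S^{2k-1}\times S^{2k-1}$; choosing the extension via the $\lambda_{jj'}$-fold concatenation of the canonical nullhomotopy supplied by the product cell makes $\rho$ have degree $\lambda_{jj'}$ on the top cell, and pulling back the fundamental class $e_j^\ast\cup e_{j'}^\ast$ of $S^{2k-1}\times S^{2k-1}$ along $\rho$ then yields $\lambda_{jj'}c^\ast$. (For $k=1$ the ``products'' in $\phi$ are group commutators in $\pi_1(\bigvee S^1)$, but the same reductions apply, since commutators die in $H_1$ and in every abelian quotient.)

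The main obstacle is bookkeeping rather than mathematics: one must fix the orientation conventions (for $D^{4k-2}/S^{4k-3}\simeq S^{4k-2}$, for $S^{2k-1}\wedge S^{2k-1}\simeq S^{4k-2}$, for the Whitehead bracket, and for the reduced diagonal) coherently, so that the sign in $e_j^\ast\cup e_{j'}^\ast=\pm\lambda_{jj'}c^\ast$ comes out $+$ — consistently with the sign-free base case \cref{IdentityBetweenProjections}, which is precisely the instance $I=\{j,j'\}$, $\lambda_{jj'}=1$. Once these are pinned down, the remaining steps are routine.
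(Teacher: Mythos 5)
Your proposal is correct and follows essentially the same route as the paper's proof: both reduce the statement to a degree count on $H^{4k-2}$ (via Hopf--Whitney, resp.\ the surjectivity of $\mathrm{pr}_c^\ast$ on homotopy classes) and then compute that degree by projecting the attaching map $\phi$ onto the $j$th and $j'$th wedge summands, where all other Whitehead brackets die by bilinearity and $[\iota_j,\iota_{j'}]_{\mathrm{Wh}}^{\lambda_{j j'}}$ becomes $\lambda_{j j'}$ times the universal example, so that the comparison map to $S^{2k-1}\times S^{2k-1}$ has degree $\lambda_{j j'}$ on the top cell. The only cosmetic difference is that you package this as the cup-product identity $e_j^\ast\cup e_{j'}^\ast=\lambda_{j j'}\,c^\ast$ and pass through an intermediate two-cell quotient $M'$, whereas the paper works directly with the composite $(\mathrm{pr}_j\times\mathrm{pr}_{j'})\circ\Delta$ and the functoriality of the Puppe sequences.
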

\begin{proof}
By the \emph{Hopf-Whitney theorem} (cf. \cite[(6.19) on p. 244]{Whitehead1978}), it is sufficient to show that both the top and the bottom map in this diagram have the same degree, hence that the degree of the top composite is $\lambda_{j j'}$, in that the pullback of the volume class $1 \in \mathbb{Z} \simeq H^{4k-2}(S^{4k-2}; \mathbb{Z})$ is $\lambda_{j j'}$ times the generator of $H^{4k-2}(M; \mathbb{Z})$.

To that end, consider the following diagram:
$$
  \begin{tikzcd}[
    column sep=25pt
  ]
    S^{4k-3}
    \ar[
      d,
      "{
        \phi
      }"{swap, pos=.4}
    ]
    \ar[
      rrr,
      "{
        \lambda_{j j'}
      }"
    ]
    \ar[
      drrr,
      "{
        \lambda_{j j'}
        \cdot
        [\iota,\iota]_{\mathrm{Wh}}
      }"{sloped, description}
    ]
    & &&
    S^{4k-3}
    \ar[
      d,
      "{ [\iota,\iota]_{\mathrm{Wh}} }"
    ]
    \\
    \bigvee_i S^{2k-1}
    \ar[d]
    \ar[
      rrr,
      "{
        p_{j j'}
      }"{description}
    ]
    &&&
    S^{2k-1} \vee S^{2k-1}
    \ar[d]
    \\
    M
    \ar[
      r
    ]
    \ar[
      dr,
      "{
        \Delta
      }"{yshift=-1pt}
    ]
    \ar[dd]
    &
    M \times M
    \ar[
     rr,
     "{
       \mathrm{pr}_j
       \times
       \mathrm{pr}_{j'}
     }"
    ]
    \ar[
      d,
      shorten = -2pt
    ]
    &&
    S^{2k-1}
    \times
    S^{2k-1}
    \ar[dd]
    \\[-14pt]
    & 
    M \wedge M
    \ar[
      drr,
      "{
        \mathrm{pr}_{j}
        \wedge
        \mathrm{pr}_{j'}
      }"{sloped}
    ]
    \\[-10pt]
    S^{4k-2}
    \ar[
      rrr,
      "{
        \lambda_{j j'}
      }"
    ]
    &
    &&
    S^{4k-2}
    \mathrlap{.}
  \end{tikzcd}
$$
Here:
\begin{enumerate}
  \item
  The vertical parts are the homotopy cofiber sequences (``Puppe sequences'').
  \item The map $p_{j j'}$ denotes the projection onto the $j$ and $j'$th wedge summand, whence the top square commutes by definition of $\phi$ \cref{GeneralizedAttachingMap}.
  \item The middle square commutes by \cref{ProjectionOnCellOfSkeleton}.

  \item By functoriality of the pushout, the bottom map is the suspension of the top map and as such of the same degree $\lambda_{j j'}$, as shown.

\end{enumerate}
But then the top right triangle in the bottom square commutes by definition, whence the bottom left triangle exhibits the degree of the diagonal map as claimed.
\end{proof}

With this in hand, we have an immediate generalization of \cref{FundamentalGroupOfMapsFromTorusToSphere}:
\begin{theorem}
\label[theorem]{FundamentalGroupOfMapsFromCWComplexToSphere}
  With $M$ as in \cref{MoreGeneralCWComplex}, and for $k \in \{1,2,4\}$, 
  we have:
  \begin{equation}
    \label{Pi1OfGeneralizedPointedSpaceInZeroComponent}
    \pi_1\,
    \PointedMapsComponent{0}\bracket({
      M
      ,\,
      S^{2k}
    })
    \simeq
    \bracketmid\langle{
      \bracket({W_i})_{i \in I}
      ,\,
      \zeta
    }{
      \bracket[{W_i, W_j}] 
        = 
      \zeta^{2 \lambda_{i j}}
      ,\,
      \bracket[{W_i, \zeta}] 
        = 
      \mathrm{e}
    }\rangle
    \times T
    \mathrlap{\,,}
  \end{equation}
  where $T$ stands for the same torsion groups as in \cref{IdentifyingFundamentalGroupOfMapsFromTorusToSphere}.
\end{theorem}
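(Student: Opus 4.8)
The plan is to transcribe the proof of \cref{FundamentalGroupOfMapsFromTorusToSphere} almost verbatim, feeding in \cref{SmashOfProjectionMaps} wherever that proof used \cref{IdentityBetweenProjections}. Concretely I would proceed in three steps: (1) identify the underlying set and candidate generators; (2) compute all group commutators; (3) assemble the group from the commutator data.

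\emph{Underlying set and generators.} Because the top cell of $M$ \cref{GeneralizedCWPushout} is attached along a product of Whitehead brackets \cref{GeneralizedAttachingMap}, and Whitehead brackets suspend to zero (\cref{SuspensionOfWhiteheadVanishes}), the suspension $\Sigma M$ stably splits as $\big(\bigvee_{i \in I} S^{2k}\big) \vee S^{4k-1}$, with the coprojections realized by $\Sigma \mathrm{pr}_i$ and $\Sigma\mathrm{pr}_c$, exactly as in \cref{StableSplittingOfSuspendedSquaredSphere}. Mapping into $S^{2k}$ and using the loop--suspension adjunction (\cref{IncaranationsOfFundGrpdOfMappingSpace,PointedMapsOutOfWedgeSum,HomotopyGroupsViaMappingSpace,SomeUnstableHomotopyGroupsOfSpheres}) then exhibits, as a bijection of sets,
\[
  \pi_1\, \PointedMapsComponent{0}\bracket({M,\, S^{2k}})
  \;\simeq\;
  \pi_{2k}(S^{2k})^{|I|} \times \pi_{4k-1}(S^{2k})
  \;\simeq\;
  \mathbb{Z}^{|I|} \times \mathbb{Z} \times T ,
\]
with $T \subseteq \pi_{4k-1}(S^{2k})$ the torsion subgroup, i.e. the groups of \cref{IdentifyingFundamentalGroupOfMapsFromTorusToSphere}. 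Exactly as for the squared sphere I would take as candidate generators $W_i := [\mathrm{id}_{2k} \circ \Sigma\mathrm{pr}_i]$, $\zeta := [h_{\mathbb{K}} \circ \Sigma\mathrm{pr}_c]$ and $R := [r_{\mathbb{K}} \circ \Sigma\mathrm{pr}_c]$ (for $r_{\mathbb{K}}$ a generator of the cyclic group $T$), with adjoints $\widetilde{W}_i = \widetilde{\mathrm{id}_{2k}}\circ\mathrm{pr}_i$ and $\widetilde{\zeta} = \widetilde{h_{\mathbb{K}}} \circ \mathrm{pr}_c$ in $\Omega S^{2k}$. Since $\Sigma\mathrm{pr}_i$ and $\Sigma\mathrm{pr}_c$ are co-$H$ maps (being suspensions), precomposition with them is a group homomorphism, so $\langle \zeta, R\rangle \cong (\Sigma\mathrm{pr}_c)^\ast\,\pi_{4k-1}(S^{2k}) \cong \mathbb{Z}\oplus T$ sits inside (centrally, once centrality is checked below), $\zeta$ has infinite order, and $W_i, \zeta, R$ together generate the set above.

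\emph{Commutators.} The group commutators are computed by the two diagrams \cref{FirstDiagramForDerivingHeisenberg,SecondDiagramForDerivingHeisenberg}. For $i < j$, $[W_i, W_j]$ is represented by the outer rim of \cref{FirstDiagramForDerivingHeisenberg} (with $a,b$ replaced by $i,j$); its left diagonal $M \xrightarrow{\Delta} M\wedge M \xrightarrow{\mathrm{pr}_i \wedge \mathrm{pr}_j} S^{4k-2}$ is now $\lambda_{ij}\cdot\mathrm{pr}_c$ by \cref{SmashOfProjectionMaps} (this is the one place the multiplicities enter), while its right diagonal is $2\,\widetilde{h_{\mathbb{K}}}$ by \cref{SamelsonProductProperty}; the composite gives $[W_i, W_j] = \zeta^{2\lambda_{ij}}$. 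The commutators $[W_i, \zeta]$, $[W_i, R]$ and $[\zeta, R]$ all vanish by the dimension count of \cref{SecondDiagramForDerivingHeisenberg}: the relevant edge factors through $S^{2k-1}\wedge S^{4k-2} = S^{6k-3}$, which is more highly connected than $\dim M = 4k-2$ for $k \geq 1$. Hence $\zeta$ and $R$ are central.

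\emph{Assembling the group.} It remains to see that these relations are complete, which is the only point beyond routine transcription. The central finite-cyclic subgroup $\langle R\rangle \simeq T$ meets $\langle (W_i)_i, \zeta\rangle$ trivially (inside the central copy of $\mathbb{Z}\oplus T$ they split off as the $\mathbb{Z}$- and $T$-summands) and together they generate, so $\pi_1\PointedMapsComponent{0}\bracket({M,\,S^{2k}}) \simeq \langle(W_i)_i,\zeta\rangle \times T$; and $\langle(W_i)_i,\zeta\rangle$ is a central extension $1 \to \mathbb{Z}\langle\zeta\rangle \to \langle(W_i)_i,\zeta\rangle \to \mathbb{Z}^{|I|} \to 1$ --- the quotient being free abelian of rank $|I|$ (it is abelian since the commutators lie in $\langle\zeta\rangle$, is generated by $|I|$ elements, and has rank $|I|$, e.g. by comparison with the rationalized abelianization), with commutator pairing $(e_i,e_j)\mapsto 2\lambda_{ij}$. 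Since $H^2_{\mathrm{grp}}(\mathbb{Z}^{|I|};\mathbb{Z})$ is precisely the group of alternating integral forms on $\mathbb{Z}^{|I|}$ (as $B\mathbb{Z}^{|I|}\simeq T^{|I|}$), such an extension is determined by that pairing, hence is the group presented in \cref{Pi1OfGeneralizedPointedSpaceInZeroComponent}. I do not expect a genuine obstacle here: with \cref{SmashOfProjectionMaps} already established the whole argument runs line-by-line as in \cref{OnTheTheorem}, and the only care needed is the bookkeeping just sketched --- chiefly that $\zeta$ is non-torsion, that the torsion factor $T$ splits off centrally, and that central $\mathbb{Z}$-extensions of free abelian groups are classified by their commutator pairing.
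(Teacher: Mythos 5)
Your proposal is correct and takes essentially the same route as the paper's own proof: the stable splitting of $\Sigma M$ supplies the generators, and the two commutator diagrams — with \cref{SmashOfProjectionMaps} furnishing the left triangle (where the multiplicities $\lambda_{ij}$ enter) and \cref{SamelsonProductProperty} the right — give $[W_i,W_j]=\zeta^{2\lambda_{ij}}$ and the centrality of $\zeta$ and $R$ by the same degree count. Your final ``assembling'' step (checking that $T$ splits off centrally and that the central $\mathbb{Z}$-extension of $\mathbb{Z}^{|I|}$ is determined by its commutator pairing) is left implicit in the paper; it is a welcome bit of added rigor rather than a different approach.
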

\begin{proof}
Since the suspension of Whitehead products is null (by \cref{SuspensionOfWhiteheadVanishes}), it follows from \cref{GeneralizedCWPushout,GeneralizedAttachingMap} that we have a \emph{stable splitting} in generalization of \cref{StableSplittingOfSuspendedSquaredSphere}:
$$
  \begin{tikzcd}[
    column sep=75pt
  ]
  \Sigma M
  \ar[
    r,
    "{
      \bracket({
        (\Sigma\mathrm{pr}_i)_{i \in I}
        ,
        \Sigma \mathrm{pr}_c
      })
    }",
    "{ \sim }"{swap}
  ]
  &
  \bracket({
    \bigvee_i S^{2k}
  })
  \vee
  S^{4k-1}
  \mathrlap{.}
  \end{tikzcd}
$$
This gives the set of generators as claimed in \cref{Pi1OfGeneralizedPointedSpaceInZeroComponent}. 

To see that their group commutators are as claimed, consider the direct analogue of diagram \cref{FirstDiagramForDerivingHeisenberg}, where now the left inner triangle is given by \cref{SmashOfProjectionMaps}:
\begin{equation}
    \begin{tikzcd}[
      column sep=60pt
    ]
      M
      \ar[
        d,
        "{ \Delta }"{swap, pos=.4}
      ]
      \ar[
        rr,
        "{
          \widetilde{%
            [W_j, W_{j'}]%
          }
        }"
      ]
      \ar[
        dr,
        "{
          \lambda_{j j'}
          \cdot
          \mathrm{pr}_c
        }"{description}
      ]
      &&
      \Omega S^{2k}
      \\
      M
      \wedge
      M
      \ar[
        rr,
        downhorup,
        "{
          \widetilde{W}_j
          \,\wedge\,
          \widetilde{W}_{j'}
        }"{description}
      ]
      \ar[
        r,
        "{
          \mathrm{pr}_j
          \wedge 
          \mathrm{pr}_{j'}
        }"{swap}
      ]
      &
      S^{2k-1} 
        \wedge 
      S^{2k-1}
      \ar[
        r,
        "{
          \widetilde{\mathrm{id}_{2k}}
          \,\wedge\,
          \widetilde{\mathrm{id}_{2k}}
        \;}"{swap}
      ]
      \ar[
        ur,
        "{
          2 \, \widetilde{h_{\mathbb{K}}}
        }"{description}
      ]
      &
      (\Omega S^{2k})
      \wedge
      (\Omega S^{2k})
      \mathrlap{\,.}
      \ar[
        u,
        "{
          [-,-]_{\mathrm{Sam}}
        }"{swap}
      ]
    \end{tikzcd}
\end{equation}
Similarly, diagram \cref{SecondDiagramForDerivingHeisenberg} generalizes immediately, which completes the proof.
\end{proof}

It is furthermore straightforward now to pass from \cref{FundamentalGroupOfMapsFromCWComplexToSphere} to the generalization of \cref{FinalTheorem}:
\begin{theorem}
\label[theorem]{GeneralizedFinalTheorem}
  For $M$ as in \cref{MoreGeneralCWComplex}, we have:
  \begin{subequations}
  \begin{align}
    \label{GeneralizedEndResultFork=1}
    \pi_1\,
    \MapsComponent{n}\bracket({
      M
      ,\,
      S^2
    })
    & \simeq
    \bracketmid\langle{
      \bracket({W_i})_{i \in I}
      ,\,
      \zeta
    }{
      \bracket[{W_i, W_{i}}] 
        = 
      \zeta^{2 \lambda_{i j}}
      ,\,
      \bracket[{W_i, \zeta}] 
        = 
      \mathrm{e}
      \,,
      \zeta^{2n} = \mathrm{e}
    }\rangle
    \mathrlap{\,,}
    \\
    \label{GeneralizedEndResultFork=2}
    \pi_1\,
    \MapsComponent{[n]}\bracket({
      M
      ,\,
      S^4
    })
    & \simeq
    \bracketmid\langle{
      \bracket({W_i})_{i \in I}
      ,\,
      \zeta
    }{
      \bracket[{W_i, W_{i}}] 
        = 
      \zeta^{2 \lambda_{i j}}
      ,\,
      \bracket[{W_i, \zeta}] 
        = 
      \mathrm{e}
    }\rangle
    \times 
    \CyclicGroup{12}
    \mathrlap{\,,}
    \\
    \label{GeneralizedEndResultFork=4}
    \pi_1\,
    \MapsComponent{[n]}\bracket({
      M
      ,\,
      S^8
    })
    & \simeq
    \bracketmid\langle{
      \bracket({W_i})_{i \in I}
      ,\,
      \zeta
    }{
      \bracket[{W_i, W_{i}}] 
        = 
      \zeta^{2 \lambda_{i j}}
      ,\,
      \bracket[{W_i, \zeta}] 
        = 
      \mathrm{e}
    }\rangle
    \times 
    \CyclicGroup{120}
    \mathrlap{\,.}
  \end{align}
  \end{subequations}
\end{theorem}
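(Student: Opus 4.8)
The plan is to follow the reduction used for \cref{FinalTheorem}: obtain the unpointed statement from the pointed computation \cref{FundamentalGroupOfMapsFromCWComplexToSphere} via the basepoint-evaluation fibration, treating $k\in\{2,4\}$ and $k=1$ separately. As a preliminary I would note that the generalizations of \cref{OnConnectedComponentsOfTheMappingSpace} and \cref{ConnectedComponentsOfMappingSpaceAreEquiv} hold verbatim for the $M$ of \cref{MoreGeneralCWComplex}: since $M$ has cells only in dimensions $0$, $2k-1$, $4k-2$, maps into the $(2k-1)$-connected space $S^{2k}$ factor through $M/\mathrm{sk}_{2k-1}M\simeq S^{4k-2}$, so $\pi_0\,\Maps\bracket({M,S^{2k}})\simeq\pi_{4k-2}(S^{2k})$; and the pinch map collapsing the unique top cell of $M$ induces an H-action of $\PointedMaps\bracket({S^{4k-2},S^{2k}})$ on $\Maps\bracket({M,S^{2k}})$ by homotopy auto-equivalences acting transitively on $\pi_0$, so that all connected components are homotopy equivalent.

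For $k\in\{2,4\}$ the conclusion is then immediate: in the long exact sequence of $\PointedMapsComponent{[n]}\bracket({M,S^{2k}})\to\MapsComponent{[n]}\bracket({M,S^{2k}})\xrightarrow{\ \mathrm{ev}_{\ast}\ }S^{2k}$ the neighbouring homotopy groups $\pi_2 S^{2k}$ and $\pi_1 S^{2k}$ vanish (as $k\geq 2$), so the forgetful map $\pi_1\,\PointedMapsComponent{[n]}\bracket({M,S^{2k}})\to\pi_1\,\MapsComponent{[n]}\bracket({M,S^{2k}})$ is an isomorphism; combined with the component-independence just established and \cref{FundamentalGroupOfMapsFromCWComplexToSphere} this gives \cref{GeneralizedEndResultFork=2} and \cref{GeneralizedEndResultFork=4}, torsion summand included.

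For $k=1$ the presence of $\pi_2(S^2)\simeq\mathbb{Z}$ blocks this shortcut --- as in \cref{FinalTheorem} it fixes the abstract isomorphism type but not the ``Planck constant'' --- so I would reproduce the argument of \cref{TheShortExactSequence} in the present generality. The CW-structure \cref{GeneralizedCWPushout} gives a homotopy cofiber sequence $S^1\xrightarrow{\ \phi\ }\bigvee_{i\in I}S^1\to M\to M/\mathrm{sk}_1M\simeq S^2$ with $\phi$ the iterated group commutator \cref{GeneralizedAttachingMap}; mapping it into $S^2$ and passing to homotopy groups of the resulting fibration yields a segment
$$
  \pi_1\MapsComponent{n}\bracket({S^2,S^2})
  \longrightarrow
  \pi_1\MapsComponent{n}\bracket({M,S^2})
  \longrightarrow
  \pi_1\Maps\bracket({\textstyle\bigvee_{i\in I}S^1,\,S^2})
  \longrightarrow
  \pi_0\MapsComponent{n}\bracket({S^2,S^2})\,,
$$
whose third term is $\mathbb{Z}^{\lvert I\rvert}$ by the evident extension of \cref{MappingSpacesOutOfWedgeSumOfCirclesIntoS2} to finite wedges (using $\Maps\bracket({\bigvee_i S^1,S^2})\simeq\prod_i\Maps\bracket({S^1,S^2})$ with each evaluation fibration split by constant maps), and whose first term is $\CyclicGroup{2n}$ by \cref{FundamentalGroupsOfMapsS2ToS2}. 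The connecting homomorphisms bracketing this segment vanish, being precomposition with $\phi$ --- an iterated product of group commutators of representatives of elements of the \emph{abelian} groups $\pi_\bullet\Maps\bracket({\bigvee_i S^1,S^2})$ --- so one obtains a central extension $1\to\CyclicGroup{2n}\to\pi_1\MapsComponent{n}\bracket({M,S^2})\to\mathbb{Z}^{\lvert I\rvert}\to 1$. Its kernel is the image of $\pi_1\MapsComponent{n}\bracket({S^2,S^2})$ under precomposition with the collapse $M\to S^2$, hence --- by the construction of the generators via the stable splitting, under which each $W_i$ restricts to a generator of $\pi_1\Maps\bracket({\bigvee_i S^1,S^2})$ while $\zeta$, coming from the top cell, restricts to zero --- exactly the cyclic subgroup $\langle\zeta\rangle$ for the $\zeta$ of \cref{FundamentalGroupOfMapsFromCWComplexToSphere}; thus $\zeta$ has order $2n$, and together with the commutator relations already supplied by \cref{FundamentalGroupOfMapsFromCWComplexToSphere} this yields \cref{GeneralizedEndResultFork=1}.

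The step I expect to be the main obstacle is this last identification in the $k=1$ case: one must be sure that the central $\CyclicGroup{2n}$ produced by the cofiber-sequence argument is generated by exactly the $\zeta$ of the pointed presentation \cref{FundamentalGroupOfMapsFromCWComplexToSphere} --- so that the one new relation introduced is $\zeta^{2n}=\mathrm{e}$ and no further relation hides --- which requires threading the compatibility of the collapse $M\to S^2$, the evaluation fibration, the stable splitting and the forgetful map $\pi_1\PointedMaps\to\pi_1\Maps$, exactly as in \cref{TheShortExactSequence}. A secondary and routine check is that \cref{MappingSpacesOutOfWedgeSumOfCirclesIntoS2} and the H-action argument of \cref{ConnectedComponentsOfMappingSpaceAreEquiv} genuinely generalize from two circles / a squared sphere to an arbitrary finite wedge and the $M$ of \cref{MoreGeneralCWComplex} --- which they do, since $M/\mathrm{sk}_{2k-1}M\simeq S^{4k-2}$ and $M$ carries a single top cell.
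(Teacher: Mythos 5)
Your proposal is correct and follows essentially the same route as the paper, whose proof of \cref{GeneralizedFinalTheorem} consists precisely of the remark that the argument of \cref{FinalTheorem} goes through verbatim with $M$ substituted for $(S^{2k-1})^2$ --- you have simply unpacked what that substitution entails, namely the generalizations of \cref{OnConnectedComponentsOfTheMappingSpace}, \cref{ConnectedComponentsOfMappingSpaceAreEquiv}, \cref{MappingSpacesOutOfWedgeSumOfCirclesIntoS2} and \cref{TheShortExactSequence} that you spell out, together with the identification of the central $\CyclicGroup{2n}$ with $\langle \zeta \rangle$ in the $k=1$ case. The only blemish is the parenthetical identification $\Maps\bracket({\bigvee_i S^1,\, S^2}) \simeq \prod_i \Maps\bracket({S^1, S^2})$, which is false for \emph{unpointed} mapping spaces (the left-hand side is the fiber product over $S^2$ of the basepoint-evaluation maps, not the product); this is harmless, since the split-evaluation-fibration argument you also invoke gives $\pi_1\,\Maps\bracket({\bigvee_i S^1,\, S^2}) \simeq \pi_1\,\PointedMaps\bracket({\bigvee_i S^1,\, S^2}) \simeq \mathbb{Z}^{\lvert I \rvert}$ directly, exactly as in \cref{MappingSpacesOutOfWedgeSumOfCirclesIntoS2}.
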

\begin{proof}
  With \cref{FundamentalGroupOfMapsFromCWComplexToSphere} in hand, the remaining argument is verbatim that of the proof of \cref{FinalTheorem}, subject just to substituting $M$ for $(S^{2k-1})^2$.
\end{proof}

In particular:
\begin{example}
\label[example]{Pi1OfMapsOutOfSurfaceIntoSphere}
  For $k = 1$ and in the case that $M = S^2_g$ is the closed oriented surface  of genus $g$ (\cref{HigherGenusSurface}), \cref{GeneralizedFinalTheorem} gives the higher-dimensional integer Heisenberg groups \cref{UnderlyingSetOfHigherDimIntegerHeis}:
  \begin{equation}
    \pi_1\,
    \MapsComponent{n}\bracket({
      S^2_g
      ,\,
      S^2
    })
    \simeq
    \mathrm{Heis}_{2g+1}(\mathbb{Z}; 2n)
    \mathrlap{\,.}
  \end{equation}
  This is the generality of the situation originally discussed by \parencites{Hansen1974}[Thm. 1]{LarmoreThomas1980}[Prop. 1.5]{Kallel2001}.
\end{example}

\section{Applications}
\label{OnApplications}

Here we expand (following up on the indications in \cref{OnRelationToFQHAnyons}) on how the result of \cref{OnTheTheorem} may be understood as saying that the \emph{quantum observables} of (abelian) \emph{anyons} on the torus (\cref{HeisenbergGroupCommutatorIllustrated}) are equivalently the group algebra of the 2-Cohomotopy of the suspended torus, and in fact also of the 4-Cohomotopy of the suspended squared 3-sphere and of the 8-Cohomotopy of the suspended squared 7-sphere.

By itself, this is just a mathematical fact (\cref{FinalTheorem}, \ref{EndResultFork=1}). But we recall here how there is physical significance in \emph{magnetic flux quantized in Cohomotopy} \cite{SS25-Flux} which, if physically realized (``Hypothesis h'' \parencites{SS25-FQH}, recalled below in \cref{OnOrdinaryFHQAnyonsIn2Cohomotopy}) renders this mathematical fact, instead of a coincidence, one instance of a novel algebro-topological effective theory of anyons --- which makes predictions, potentially relevant for contemporary materials research, not captured by previous theories.

This appears to be a remarkable new opportunity for (low-dimensional) algebraic topology and homotopy theory to interact with cutting-edge experimental and industry-relevant research (quantum materials, topological quantum computing), potentially of impact comparable to what had been hoped would happen with topological data analysis (TDA, where methods of algebraic topology and homotopy theory are hoped to identify hidden structure in complex data sets).

Moreover, we explain (in \cref{On2DimensionalAnyonsVia4Cohomotopy} below, following \cite{SS25-Complete}) how our generalized result of \cref{FinalTheorem} establishes that quantum observables of (abelian) FQH anyons (of the kind that have been observed in experiment in recent years) arise as topological observables in a completion of \emph{11-dimensional supergravity} (11D SuGra) by flux quantization in 4-Cohomotopy (``Hypothesis H'', now with a capital ``H''). 

As before, by itself this is just a mathematical fact (\cref{FinalTheorem}, \ref{EndResultFork=2}), surprising as it may sound. It does, however, call to mind the much-discussed ``holographic'' relation between 11D SuGra and condensed matter theory \parencites{Herzog2007}{Zaanen2015}{Hartnoll2018}{nLab:HolographicCMT} and may be understood as a novel instance of a general idea of \emph{gauge/gravity duality} whereby strongly-coupled quantum systems (for which there is little traditional theory available) are thought to be mapped onto more tractable dynamics of \emph{branes} fluctuating in auxiliary higher-dimensional auxiliary spacetimes. 

While this approach has seen immense activity in the last couple of decades, it is notorious for its dearth of precise definitions and of provable theorems. Our result may open the door to a new form of mathematically substantiated interaction between completed quantum 11D SuGra (``M-theory'' \cite{Duff1999World}) and topological quantum materials (following \parencites{SS23-Defect}{SS25-Seifert}{SS25-Rickles}{SS25-Srni}).

\subsection{Ordinary FQH Anyons via 2-Cohomotopy}
\label{OnOrdinaryFHQAnyonsIn2Cohomotopy}

In view of our theorems in \cref{OnTheTheorem,OnGeneralizations}, here we briefly review the understanding of FQH anyons \parencites{SS25-AbelianAnyons}{SS25-FQH} and of FQAH anyons \parencites{SS25-FQAH}{SS25-Crys} via (magnetic or Berry-)flux quantization in 2-Cohomotopy (further surveyed in \cite{SS25-ISQS29}).

\subsubsection{Ordinary Magnetic Flux}
\label{OnOrdinaryMagneticFlux}

Envision a thin slab of material of the form of a closed oriented surface $S^2_g$ (\cref{HigherGenusSurface}) of some thickness $2\epsilon \in \mathbb{R}_{> 0}$ and penetrated transversally by a magnetic field. The phenomenon of \emph{Dirac charge quantization} (cf. \parencites{Alvarez1985}[\S 16.4e]{Frankel2011}) entails (cf. \parencites[\S 3.1]{SS24-Phase}[Cor. 2.3]{SS25-Complete}) that the topological sectors of solitonic magnetic flux configurations through this material form the set of homotopy classes of maps out of its one-point compactification $(-)_{\cpt}$ (enforcing the vanishing of solitonic flux at infinity) to the classifying space of the (``gauge'') group $\mathrm{U}(1)$:
\begin{equation}
  \label{OrdinaryMagneticFluxSectors}
  \begin{aligned}
    \text{Magnetic flux sectors}
    & 
    =
    \pi_0\,
    \PointedMaps\bracket({
      \bracket({
        S^2_g \times (-\epsilon, +\epsilon)
      })_{\cpt}
      ,
      B \mathrm{U}(1)
    })
    \\
    & \simeq
    \pi_0\,
    \PointedMapsComponent{0}\bracket({
      (S^2_g)_+ \wedge S^1
      ,
      B \mathrm{U}(1)
    })
    \\
    & \simeq
    \pi_1\,
    \MapsComponent{0}\bracket({
      S^2_g
      ,
      B \mathrm{U}(1)
    })
    \mathrlap{\,.}
  \end{aligned}
\end{equation}
This is a fundamental group of a mapping space much as we have been discussing in \cref{OnGeneralizations}, only that here the coefficient is $B \mathrm{U}(1) \simeq \mathbb{C}P^\infty$ instead of $S^2 \simeq \mathbb{C}P^1$. With this ``stable'' coefficient it is immediate to compute the fundamental group to be:
$$
  \begin{aligned}
    \cdots
    & 
    \simeq
    \pi_0\,
    \MapsComponent{0}\bracket({
      S^2_g
      ,
      \Omega B \mathrm{U}(1)
    })
    \\
    & \simeq
    \pi_0\,
    \MapsComponent{0}\bracket({
      S^2_g
      ,
      B \mathbb{Z}
    })
    \\
    & \simeq
    H^1\bracket({
      S^2_g
      ;
      \mathbb{Z}
    })
    \\
    & \simeq
    \mathbb{Z}^{2g}
    \mathrlap{\,.}
  \end{aligned}
$$
This is the abelian base group of which $\mathrm{Heis}_{2g+1}(\mathbb{Z})$ (\cref{IntegerHeisenbergGroupAtLevel2}) is a nonabelian extension.
The topological quantum observables on these magnetic flux sectors form the group algebra of $\mathbb{Z}^2$ (by \parencites[\S 1]{SS24-Obs}[(132)]{SS25-Complete}).

\subsubsection{Anyons as Surplus FQH Flux}
\label{OnSurplusFQHFlux}

Envision next that our slab of material is filled with a very cold electron gas and penetrated by a magnetic field so strong and so fine-tuned that there are exactly some $K$ magnetic flux quanta per electron. This is called a \emph{fractional quantum Hall system} at \emph{filling fraction} $\sfrac{1}{K}$ (cf. \parencites{Stormer99}{Tong2016}). 

On such a backdrop, every surplus magnetic flux quantum appears as the lack of $\sfrac{1}{K}$th of an electron and as such is called a fractional \emph{quasi-hole}. It is these \emph{quasi-holes} that behave as anyons, in that under their movement around each other the quantum state of the entire system picks up a complex \emph{braiding phase} $\zeta$ (cf. \cref{FQHAnyons}).

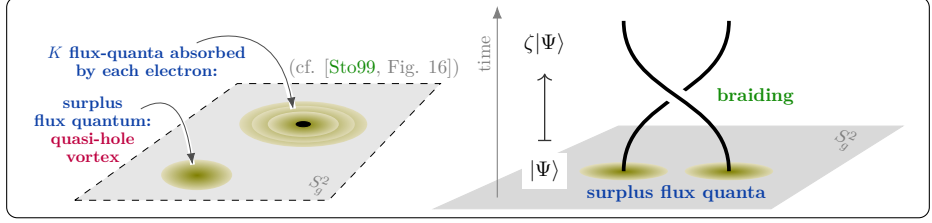
\begin{figure}[htb]
\caption{%
  \label{FQHAnyons}%
  The anyons of fractional quantum Hall systems are vortices in the 2D electon gas induced by surplus magnetic flux quanta on top of an exact rational \emph{filling fraction} of $K$ flux quanta per electron. Under each \emph{braiding} of their worldlines the quantum state $\vert \psi \rangle$ transforms by multiplication with a \emph{braiding phase} $\zeta = \exp(\pi \mathrm{i}/K)$.
}
\centering
\adjustbox{
  rndfbox=4pt
}{
\adjustbox{
  raise=-1.5cm,
  scale=.9,
}{
\hspace{-.5cm}
\begin{tikzpicture}[
  scale=.75
]

\node
  at (.3,.55+.8)
  {
    \adjustbox{
      bgcolor=white,
      scale=.7
    }{
      \color{darkblue}
      \bf
      \def\arraystretch{.9}
      \begin{tabular}{c}
        surplus
        \\
        flux quantum:
        \\
        \color{purple}
        quasi-hole
        \\
        \color{purple}
        vortex
      \end{tabular}
    }
  };

\draw[
  dashed,
  fill=lightgray
]
  (0,0)
  -- (5,0)
  -- (7+.3-.1,2+.3)
  -- (2.8+.3+.1,2+.3)
  -- cycle;

\begin{scope}[
  shift={(2.4,.5)}
]
\shadedraw[
  draw opacity=0,
  inner color=olive,
  outer color=lightolive
]
  (0,0) ellipse (.7 and .3);
\end{scope}

\begin{scope}[
  shift={(4.5,1.5)}
]

\begin{scope}[
 scale=1.8
]
\shadedraw[
  draw opacity=0,
  inner color=olive,
  outer color=lightolive
]
  (0,0) ellipse (.7 and .25);
\end{scope}

\begin{scope}[
 scale=1.45
]
\shadedraw[
  draw opacity=0,
  inner color=olive,
  outer color=lightolive
]
  (0,0) ellipse (.7 and .25);
\end{scope}

\shadedraw[
  draw opacity=0,
  inner color=olive,
  outer color=lightolive
]
  (0,0) ellipse (.7 and .25);

\begin{scope}[
  scale=.2
]
\draw[
  fill=black
]
  (0,0) ellipse (.7 and .25);
\end{scope}

\end{scope}

\draw[
  white,
  line width=2
]
  (1.3, 1.8) .. controls 
  (2,2.2) and 
  (2.2,1.5) ..
  (2.32,.7);
\draw[
  -Latex,
  black!70
]
  (1.3, 1.8) .. controls 
  (2,2.2) and 
  (2.2,1.5) ..
  (2.32,.7);

\node
  at (1.3,2.7)
  {
    \adjustbox{
      scale=.7
    }{
      \color{darkblue}
      \bf
      \def\arraystretch{.9}
      \def\tabcolsep{-5pt}
      \begin{tabular}{c}
        $K$ flux-quanta
        absorbed
        \\
        by each electron:
      \end{tabular}
    }
  };

\draw[
 line width=2.5pt,
  white
]
  (2.4, 3.1) .. controls 
  (2.8,3.3) and 
  (4,3.5) ..
  (4.3,1.8);

\draw[
  -Latex,
  black!70
]
  (2.4, 3.1) .. controls 
  (2.8,3.3) and 
  (4,3.5) ..
  (4.3,1.8);

\node at 
  (5.9,2.6)
  {
   \scalebox{.8}{
     \color{gray}
     (cf. \cite[Fig. 16]{Stormer99})  
   }
  };

\node[
  gray,
  rotate=-20,
  scale=.73
] 
  at (4.8,+.3) {$S^2_{\!g}$};

\end{tikzpicture}
}
\hspace{-1cm}
  \adjustbox{raise=-1.4cm}{
  \begin{tikzpicture}[
    xscale=.7
  ]
    \draw[
      gray!30,
      fill=gray!30
    ]
      (-4.6,-1.5) --
      (+1.8,-1.5) --
      (+1.8+3-.5,-.4) --
      (-4.6+3+.5,-.4) -- cycle;

    \begin{scope}[
      shift={(-1,-1)},
      scale=1.2
    ]
    \shadedraw[
      draw opacity=0,
      inner color=olive,
      outer color=lightolive
    ]
      (0,0) ellipse (.7 and .1);
    \end{scope}

    \draw[
     line width=1.4
    ]
      (-1,-1) .. controls
      (-1,0) and
      (+1,0) ..
      (+1,+1);

  \begin{scope}
    \clip 
      (-1.5,-.2) rectangle (+1.5,1);
    \draw[
     line width=7,
     white
    ]
      (+1,-1) .. controls
      (+1,0) and
      (-1,0) ..
      (-1,+1);
  \end{scope}
  
    \begin{scope}[
      shift={(+1,-1)},
      scale=1.2
    ]
    \shadedraw[
      draw opacity=0,
      inner color=olive,
      outer color=lightolive
    ]
      (0,0) ellipse (.7 and .1);
    \end{scope}
    \draw[
     line width=1.4
    ]
      (+1,-1) .. controls
      (+1,0) and
      (-1,0) ..
      (-1,+1);

  \node[
    rotate=-25,
    scale=.7,
    gray
  ]
    at (3.2,-.58) {
      $S^2_{\!g}$
    };

  \draw[
    -Latex,
    gray
  ]
    (-3.4,-1.35) -- 
    node[
      near end, 
      sloped,
      scale=.7,
      yshift=7pt
      ] {time}
    (-3.4, 1.2);

  \node[
    scale=.7
  ] at 
    (0,-1.3)
   {\bf \color{darkblue} 
   surplus flux quanta};

  \node[
    scale=.7
  ] at 
    (1.5,0)
   {\bf \color{darkgreen} braiding};

  \node[
    fill=white,
    scale=.8
  ] at (-2.5,-1) {$
    \vert \Psi \rangle
  $};

  \node[
    fill=white,
    scale=.8
  ] at (-2.5,.7) {$
    \zeta
    \vert \Psi \rangle
  $};

  \draw[
    |->,
    black!80,
    line width=.5
  ]
    (-2.5,-.6) --
    (-2.5, .3);

  \end{tikzpicture}
  }
}
\end{figure}

This means that the interaction with the electron gas makes surplus magnetic flux in an FQH system \emph{effectively} behave differently than predicted by the ordinary Dirac charge quantization of \cref{OnOrdinaryMagneticFlux}. Therefore it stands to reason that the effective FQH flux is quantized (in the general sense of \emph{flux quantization} \cite{SS25-Flux}) instead by a deformation of the usual classifying space $B \mathrm{U}(1)$. But our \cref{Pi1OfMapsOutOfSurfaceIntoSphere} suggests that this deformation must be its 3-skeleton
\begin{equation}
  \begin{tikzcd}
    S^2 
    \simeq
    \mathbb{C}P^1
    \ar[
      r, 
      hook,
      "{ i }"
    ]
    &
    \mathbb{C}P^\infty
    \simeq
    B \mathrm{U}(1)
    \mathrlap{\,,}
  \end{tikzcd}
\end{equation}
because if we substitute that for $B \mathrm{U}(1)$ in the above computation \cref{OrdinaryMagneticFluxSectors}, then the experimentally observed braiding phase $\zeta$ does appear (by \cref{GeneralizedFinalTheorem}, \ref{GeneralizedEndResultFork=1}) as expected (cf. \parencites[(5.28)]{Tong2016}):
\begin{equation}
  \label{FQHSurplusFluxSectors}
  \begin{aligned}
    \text{FQH surplus flux sectors}
    &
    \defneq
    \pi_1\,
    \PointedMapsComponent{n}\bracket({
      S^2_g
      ,
      S^2
    })
    \\
    & \simeq
    \mathrm{Heis}_{2g+1}(\mathbb{Z}; 2n)
    \mathrlap{\,.}
  \end{aligned}
\end{equation}
This match hence suggests the hypothesis (called ``Hypothesis h'' in \cite{SS25-FQH}) that 2-Cohomotopy is the correct global flux quantization law for effective anyonic FQH surplus flux quanta. 

Apart from neatly reproducing the quantum observables \cref{FQHSurplusFluxSectors} of solitonic FQH flux, this \emph{Hypothesis h} predicts experimentally relevant phenomena such as notably the possible attachment of nonabelian \emph{defect anyons} to superconducting islands inside the 2D electron gas (cf. \cite[Fig. D \& \S 3.8]{SS25-FQH}).

\subsubsection{FQH Anyon Braiding}
\label{OnFQHAnyonBraiding}

To see more concretely the actual braiding of anyons in this description, note that (as indicated in \cref{ThePontrjaginConstruction}) the \emph{Pontrjagin theorem} (cf. \parencites[\S II.16]{Bredon1993}[\S 3.2]{SS23-Mf}) identifies $\pi_0\, \Maps\bracket({S^2_g, S^2})$ with (cobordism classes of) normally framed codim=2 submanifolds of $S^2_g$, hence with \emph{signed points} in $S^2_g$, to be identified with the anyon cores seen in \cref{FQHAnyons}.

\begin{SCfigure}[1.2][htb]
\caption{
  \label{ThePontrjaginConstruction}
  The unstable \emph{Pontrjagin theorem} identifies homotopy classes of maps to the $n$-sphere with cobordism classes of normally framed codim=$n$ submanifolds. We may understand the latter as the cores of flux density quanta obtained as pullback of a \emph{Thom form} $\mathrm{th}_n$ of unit weight supported around $0 \in S^n$. 
}
\adjustbox{
  rndfbox=4pt
}{
\hspace{-.2cm}
\begin{tikzpicture}
\begin{scope}[
  scale=.8,
  shift={(.7,-4.9)}
]

\draw[
  line width=.8,
  ->,
  darkgreen
]
  (1.5,1) 
  .. controls (2,1.6) and (3,2.6) .. 
  (6,1);

\node[
  scale=.7,
  rotate=-18
] at (4.7,1.8) {
  \color{darkblue}
  \bf
  classifying map
};

\node[
  scale=.7,
  rotate=-18
] at (4.5,1.4) {
  \color{darkblue}
  \bf
  $n$
};

  \shade[
    right color=gray, left color=lightgray,
    fill opacity=.9
  ]
    (3,-3)
      --
    (-1,-1)
      --
        (-1.21,1)
      --
    (2.3,3);

  \draw[dashed]
    (3,-3)
      --
    (-1,-1)
      --
    (-1.21,1)
      --
    (2.3,3)
      --
    (3,-3);

  \node[
    scale=1
  ] at (3.2,-2.1)
  {$\infty$};

  \begin{scope}[rotate=(+8)]
  \shadedraw[
    dashed,
    inner color=olive,
    outer color=lightolive,
  ]
    (1.5,-1)
    ellipse
    (.2 and .37);
  \draw
   (1.5,-1)
   to 
    node[above, yshift=-1pt]{
     \;\;\;\;\;\;\;\;\;\;\;
     \rotatebox[origin=c]{7}{
     \scalebox{.7}{
     \color{darkorange}
     \bf
       anyon
     }
     }
   }
    node[below, yshift=+6.3pt]{
     \;\;\;\;\;\;\;\;\;\;\;\;
     \rotatebox[origin=c]{7}{
     \scalebox{.7}{
     \color{darkorange}
     \bf
       worldline
     }
     }
   }
   (-2.2,-1);
  \draw
   (1.5+1.2,-1)
   to
   (4,-1);
  \end{scope}

  \begin{scope}[shift={(-.2,1.4)}, scale=(.96)]
  \begin{scope}[rotate=(+8)]
  \shadedraw[
    dashed,
    inner color=olive,
    outer color=lightolive,
  ]
    (1.5,-1)
    ellipse
    (.2 and .37);
  \draw
   (1.5,-1)
   to
   (-2.3,-1);
  \draw
   (1.5+1.35,-1)
   to
   (4.1,-1);
  \end{scope}
  \end{scope}
  \begin{scope}[shift={(-1,.5)}, scale=(.7)]
  \begin{scope}[rotate=(+8)]
  \shadedraw[
    dashed,
    inner color=olive,
    outer color=lightolive,
  ]
    (1.5,-1)
    ellipse
    (.2 and .32);
  \draw
   (1.5,-1)
   to
   (-1.8,-1);
  \end{scope}
  \end{scope}
  
\end{scope}

\node[
  scale=.73,
  rotate=-27
] at (2.21,-5.21) {
  \color{darkblue}
  \bf
  \def\arraystretch{.9}
  \begin{tabular}{l}
    flux
    \\
    $B_2 =$ 
    \\
    $\;\;n^\ast(\mathrm{th}_2)$
  \end{tabular}
};

\node[
  scale=.73,
] at (2.05,-5) {
  \color{darkblue}
  \bf
};

\node[
  rotate=-140
] at (6,-4) {
  \includegraphics[width=2cm]{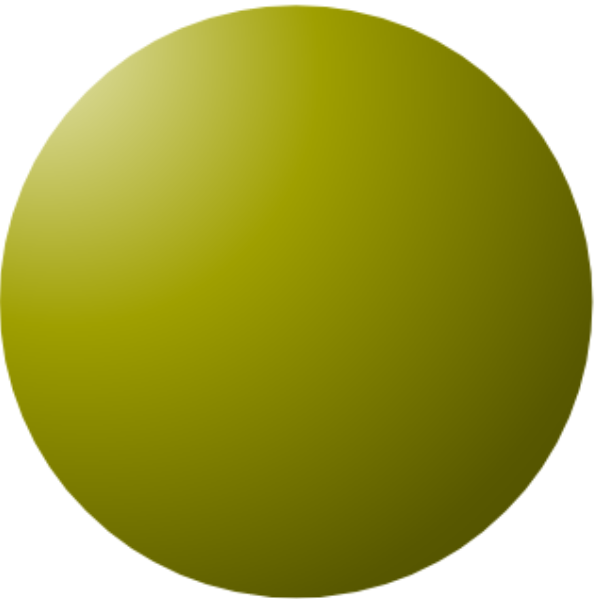}
};

\node[
  scale=.7
] at (6,-5.2) {
  \color{darkblue}
  \bf
  2-sphere $S^2$
};

\node[
  scale=.8,
  rotate=-22
] at (.05,-4.55)
{$\Sigma^2$};

\end{tikzpicture}
\hspace{-.7cm}
}
\end{SCfigure}

Along these lines, careful analysis shows \cite{SS25-AbelianAnyons} that the loop space $\Omega \PointedMaps\bracket({\mathbb{R}^2_{\cpt}, S^2})$ may be identified with the space of \emph{framed links} $L$ (of anyon worldlines) subject to framed link cobordism, and that under this identification the topological equivalence classes of these loops/worldlines are identified with the \emph{writhe} $\#L$ of the corresponding links, being the \emph{total crossing number} of any of their link diagrams: 
\begin{equation}
  \label{LinksToTheirWrithe}
  \begin{tikzcd}[row sep=0pt]
    \Omega\,
    \PointedMaps\bracket({
      \mathbb{R}^2_{\cpt}
      , 
      S^2
    })
    \ar[
      r,
      ->>
    ]
    &
    \pi_1
    \PointedMaps\bracket({
      \mathbb{R}^2_{\cpt}
      , 
      S^2
    })
    \simeq
    \mathbb{Z}
    \\
    L 
    \ar[
      r,
      |->,
      shorten=8pt
    ]
    &
    \#L
    \mathrlap{\,.}
  \end{tikzcd}
\end{equation}
But this says exactly that one power of the \emph{braiding phase} generator $\zeta$ is picked up for every crossing of anyon worldlines as seen in \cref{FQHAnyons} (cf. \cite[\S 3]{SS25-AbelianAnyons}).

\subsubsection{Identifying FQAH Anyons}

While FQH systems (\cref{OnSurplusFQHFlux,OnFQHAnyonBraiding}) thereby constitute the first (and currently only) experimentally verified candidate platform for genuine topological quantum hardware, the extremely low temperatures and strong magnetic fields they require obstruct their practical utility as such. It is therefore remarkable that, very recently, an ``anomalous'' version of fractional quantum Hall systems (FQAH, predicted by \cite{Tang2011,Sun2011,neupert2011fractional}, further developed in \cite{parameswaran2013fractional,roy2014band}, recently reviewed in \cite{ju2024fractional,moralesduran2024fractionalized,zhao2025exploring}) has been experimentally realized in various materials (\cite{Cai2023,Zeng2023,Park2023,Lu2024}): 

In these crystalline FQAH systems the role of the magnetic field in ``real space'' is played instead by an intrinsic property of the ``momentum space'' of the crystal electrons, called the \emph{Berry curvature} (cf. \parencites[\S 2]{Stanescu2020}[Fig. 3]{SS25-FQAH}). Therefore, if anyonic states in FQAH materials could be identified and controlled, this would  open the door to practically viable \emph{room temperature} topological quantum hardware.

While traditional theory has arguably remained inconclusive in identifying the nature and signature of potential FQAH anyons,
at this point the discussion in \cref{OnTheTheorem} applies, since: 
\begin{enumerate}
\item
Due to the translational symmetry of crystal lattices, the crystal momentum space is always a torus --- known as the \emph{Brillouin torus} $\widehat{T}^d$ (cf. \cite[\S 2.1]{Thiang2025}), which is 2-dimensional for the effectively 2-dimensional FQAH systems: $\widehat{T}^2 \simeq T^2$.

\item For the most prominent FQAH systems with 2 relevant \emph{electron bands}, the space of choices of \emph{valence electron states} at each momentum $[k] \in \widehat{T}^2$ is the space of 1D complex subspaces among the 2D space of valence and conduction electron states, hence is the Grassmannian $\mathbb{C}P^1 \simeq S^2$ (cf. \parencites[(8.3-4)]{Sergeev2023}[(4)]{SS25-FQAH}[Lem. 4.1]{SS25-Orient}).
\end{enumerate}
But this means that the moduli space of these crystal's topological parameters is just the kind of mapping space considered in \cref{OnTheTheorem}:
\begin{equation}
  \label{FCIParameterSpace}
  \text{Crystal parameter space}
  \simeq
  \Maps\bracket({
    \widehat{T}^2
    ,
    \mathbb{C}P^1
  })
  =
  \Maps\bracket({
    T^2
    ,
    S^2
  })
  \mathrlap{\,.}
\end{equation}

In particular, the \emph{topological phases} available to the FQAH system constitute the connected components of this space, which is the set of integers known as the \emph{Chern numbers} of the valence bundles (cf. \parencites[(8)]{SS25-FQAH}):
$$
  \begin{tikzcd}
    \pi_0\,
    \Maps\bracket({
      T^2
      ,
      S^2
    })
    \ar[
      r,
      "{\sim}",
      "{ i_\ast }"{swap}
    ]
    &
    \pi_0\,
    \Maps\bracket({
      T^2
      ,
      B \mathrm{U}(1)
    })
    \simeq
    \mathbb{Z}
    \mathrlap{\,,}
  \end{tikzcd}
$$
whence one refers to these topological quantum phases as (fractional) \emph{Chern insulators} (cf. \parencites[\S 8]{Sergeev2023}{neupert2011fractional}).

Furthermore, with \parencites{SS25-FQAH}{SS25-Crys} we may observe now that the \emph{quantum adiabatic theorem} entails (cf. \cref{AdiabaticQuantumTransport}) that anyonic topological order of quantum materials is classified by \emph{local systems} of Hilbert spaces (of gapped ground states) over the system's parameter space, hence here by linear representations of the fundamental groups of the parameter space \cref{FCIParameterSpace}:
\begin{equation}
  \substack{
    \text{Category of anyonic topological orders}
    \\
    \text{in topol. phase with Chern number $C$}
  }
  =
  \bracket({
    \pi_1\, 
    \MapsComponent{C}\bracket({
      T^2,
      S^2  
    })
  })\mathrm{Rep}\,.
\end{equation}
But this is exactly what \cref{FinalTheorem} \cref{EndResultFork=1} applies to, where it says that these anyonic quantum states are representations of the integer Heisenberg group --- just as expected for FQH systems on a torus.

\vspace{-5mm}
\begin{figure}[htb]
\centering
\caption{
  \label{AdiabaticQuantumTransport}
 The adiabatic tuning of classical parameters $p$ along paths $\gamma$ in parameter space induces unitary transformations $U_\gamma$ between corresponding Hilbert spaces $\HilbertSpace$ of gapped ground states. For topological states these transformations depend only on the homotopy class of $\gamma$, exhibiting a \emph{local system} or \emph{flat bundle} of Hilbert spaces over the parameter space. These are equivalently linear representations of the fundamental groups of parameter loops $\ell$ at each base point, reflecting the \emph{topological order} of the system in any topological phase.
}
\centering
\adjustbox{
  rndfbox=4pt
}{
$
\begin{tikzcd}[
  decoration=snake,
]
  &
  \HilbertSpace_{p_2}
  \ar[
    dr,
    shorten=-2pt,
    "{ U_{\gamma_{{}_{23}}} }"{sloped}
  ]
  \\[-11pt]
  \HilbertSpace_{\mathrlap{p_{{}_{1}}}}
  \ar[
    rr,
    shorten >=-1pt,
    black,
    "{ U_{\gamma_{{}_{13}}} }"
  ]
  \ar[
    ur,
    shorten >=-2pt,
    "{
      U_{\gamma_{{}_{12}}}
    }"{sloped}
  ]
  &&
  \HilbertSpace_{p_{{}_3}}
  &
  \HilbertSpace_{p_{{}_0}}
  \ar[
    in=45+90,
    out=-45+90,
    looseness=4.7,
    shift left=2pt,
    shorten <=-2pt,
    "{ U_\ell }"
  ]
  \\[-20pt]
  &
  p_2
  \ar[
    dr,
    decorate,
    shorten <=-2pt,
    shorten >=-4pt,
    "{ \gamma_{{}_{23}} }"{yshift=2pt, sloped}
  ]
  &
  \\[-11pt]
  p_1
  \ar[
    rr, 
    decorate,
    shift right=1pt,
    shorten <=-2pt,
    shorten >=-3pt,
    "{
      \gamma_{{}_{13}}
    }"{yshift=2pt}
  ]
  \ar[
    ur,
    decorate,
    shorten <=-2pt,
    shorten >=-2pt,
    "{ 
      \gamma_{{}_{12}} 
    }"{yshift=2pt, sloped}
  ]
  &&
  p_{{}_3}
  &
  p_{\mathrlap{{}_0}}
  \ar[
      in=52+90,
      out=-52+90,
      looseness=5,
      shift left=4pt,
      decorate,
      shorten <=-1pt,
      shorten >=-4pt,
      "{ \ell }"
  ]
\end{tikzcd}
$
}
\end{figure}
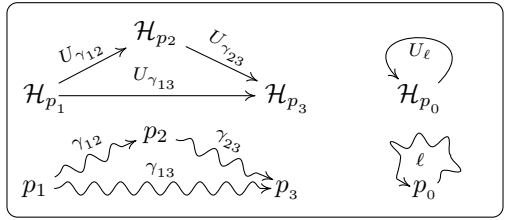

In conclusion, \cref{FinalTheorem} \cref{EndResultFork=1}, applied to the adiabatic monodromy (\cref{AdiabaticQuantumTransport}) in the topological parameter space \cref{FCIParameterSpace} of 2D 2-band (fractional) Chern insulators,
implies that the sought-after FQAH anyons are going to appear like the already observed anyons in FQH systems, but now \emph{localized in momentum space}. Further details on experimental signatures predicted by this result are discussed in \cite{SS25-Crys}.

\subsection{2-Dimensional Anyons via 4-Cohomotopy}
\label{On2DimensionalAnyonsVia4Cohomotopy}

In the above discussion of FQ(A)H anyons (\cref{OnOrdinaryFHQAnyonsIn2Cohomotopy}), the appearance of the 2-sphere and hence of 2-Cohomotopy as a quantization law is tied to the fact that both magnetic flux density and Berry curvature are differential 2-forms. For an analogous interpretation of the 4-sphere coefficient appearing in \cref{FinalTheorem} \cref{EndResultFork=2} and \cref{GeneralizedFinalTheorem} \cref{GeneralizedEndResultFork=2} one needs to understand 4-Cohomotopy as a flux quantization law of a kind of higher gauge field whose magnetic flux density is given by a differential 4-form.

\subsubsection{The C-field in 11D Supergravity}

A higher gauge field (the ``C-field'') with just such a higher 4-form flux density famously exists in 11-dimensional supergravity theory (11D SuGra, cf. \parencites{MiemiecSchnakenburg2006}{FreedmanVanProeyen:2012}[\S 3]{GSS24-SuGra}), where its electric Gau{\ss} law implies that 4-Cohomotopy is a valid choice (``Hypothesis H'', now with capital ``H'') for its flux quantization.

This means%
\footnote{
  Generally, if the tangent bundle of the spacetime does not trivialize, the full Hypothesis H demands that plain 4-Cohomotopy be replaced by \emph{tangentially twisted 4-Cohomotopy}, represented not by plain maps to the 4-sphere but by sections of a 4-spherical fibration over spacetime. We may disregard this further refinement here, since the example spacetimes relevant here have trivializable tangent bundles, mainly owing to the fact that so does $S^3 \simeq \mathrm{SU}(2)$ and with it the product space $S^3 \times S^3$.
}
that when the gauge field content of 11D SuGra is globally completed according to Hypothesis H, then the solitonic topological observables on a globally hyperbolic spacetime $X^{1,10} \simeq \mathbb{R}^{1,1} \times X^{9}$  are spanned by
\begin{equation}
  \label{CFieldTopologicalPhaseSpace}
  \pi_0\,
  \PointedMaps\bracket({
    X^9_+
    \wedge
    \mathbb{R}^1_{\cpt}
    ,
    S^4
  })
  \simeq
  \pi_1\,
  \MapsComponent{0}\bracket({
    X^9
    ,
    S^4
  })  
  \mathrlap{\,.}
\end{equation}

\subsubsection{Near-horizons of intersecting M2/M5-branes }

In order that \cref{FinalTheorem} \cref{EndResultFork=2} applies, we need to look for solutions of 11D SuGra where the spacetime topology contains the factor topologically of the form $S^3 \times S^3$.

Such solutions do exist in the form (cf. \parencites[\S 2.2]{BoonstraPeetersSkendris1998}{GauntlettMyersTownsend1998})
\begin{equation}
  \label{TheAdS3Solution}
  X^{1,10}
  \simeq
  \mathrm{AdS}_3 
    \times 
  S^3 
    \times 
  S^3
    \times
  \mathbb{R}^2
\end{equation}
reflecting the near-horizon geometry of suitable M2/M5-brane intersections. (More generally, the $\sfrac{1}{2}$ BPS solutions of 11D SuGra of this form are products of $\mathrm{AdS}_3 \times S^3 \times S^3$ warped over a surface $\Sigma^2$, by \cite{DHokerEstesGutperleKryn2008a,DHokerEstesGutperleKryn2008b,DHokerEstesGutperleKryn2009}.)

\subsubsection{Anyonic monodromy in 11d SuGra}
\label{OnAnyonicMonodromyIn11DSuGra}

If we restrict attention to a bulk causal diamond $D^{1,2} \subset\mathrm{AdS}_3$ and take solitonic flux to be localized along one of the two factors of $\mathbb{R}^2$, the monodromy group \cref{CFieldTopologicalPhaseSpace} becomes
$$
  \begin{aligned}
  \pi_1\, \MapsComponent{0}\bracket({
    X^9
    ,
    S^4
  })
  & \simeq
  \pi_1\,
  \MapsComponent{0}\bracket({
    D^{1,2} \times \mathbb{R}^1
    \times 
    S^3 \times S^3
    ,\,
    S^4
  })
  \\
  &\simeq
  \pi_1\,
  \MapsComponent{0}\bracket({
    S^3 \times S^3
    ,\,
    S^4
  })
  \\
  & 
  \simeq
  \mathrm{Heis}_3\bracket({
    \mathbb{Z}, 0
  })
  \times
  \CyclicGroup{12}
  \mathrlap{\,,}
  \end{aligned}
$$
where in the first step we used that $D^{1,2} \times \mathbb{R}^1$ is contractible, and in the last step we applied our new \cref{FinalTheorem} \cref{EndResultFork=2}.

But this means (following \cite[\S 3.3]{SS25-Complete}) that the topological quantum states on cohomotopically quantized C-field flux over such 11D SuGra backgrounds \cref{TheAdS3Solution} are of just the form (\cref{OnOrdinaryFHQAnyonsIn2Cohomotopy}) expected for FQ(A)H anyons on a torus!

This is a novel realization of anyon statistics in quantum 11D SuGra. 
While the quantum observables are just those of the experimentally observed FQH anyons, these anyonic C-field flux quanta are, at face value, higher-dimensional objects, braiding in a higher-dimensional ambient space:%
\footnote{
 Alternatively, the usual point-like anyons in the cohomotopical guise of \cref{OnSurplusFQHFlux} may be ``geometrically engineered''  in 11D SuGra by realizing them on M5-probe worldvolumes (which in turn are embedded into 11D SuGra backgrounds, cf. \parencites{HoweSezgin1997}{GSS25-M5}), cf.  \parencites{SS25-Seifert}{SS25-WilsonLoops}, exposition in \parencites{SS25-Rickles}{SS25-Srni}.
}

\subsubsection{2-Dimensional anyons}

The analogue of the argument in \cref{OnFQHAnyonBraiding}, around \cref{ThePontrjaginConstruction}, via the Pontrjagin theorem shows that the anyonic C-field flux quanta identified above in \cref{OnAnyonicMonodromyIn11DSuGra} are 2-dimensional inside $S^3 \times S^3$ (hence are $2 + \mathrm{dim}(\mathrm{AdS}_3 \times \mathbb{R}^1) = 6$-dimensional inside $X^{1,10}$, hence are anyonic 5-branes of sorts). 
Hence, the analogue of \cref{LinksToTheirWrithe} says that the ``worldvolume'' traced out as these
higher-dimensional anyons propagate are 3-dimensional links inside $S^3 \times S^3$. 

Similar ``$d$-dimensional anyons'' have previously been discussed mostly for $d = 1$ as linear representations of \emph{loop braid groups} (cf. \cite{nLab:LoopBraidGroup}) potentially realizable in 3-dimensional quantum materials. The role of $d=2$-dimensional anyons in 6 ambient dimensions remains to be understood.

\appendix

\section{Some Topology}

For reference in the main text and to establish our notation, we briefly recall some notions and facts from basic topology.

We work in the category of compactly generated topological spaces, to be denoted $\mathrm{TopSp}$, and $\mathrm{TopSp}^\ast$ for the pointed version.
For $X, Y \in \mathrm{TopSp}$ we write
\begin{equation}
  \label{MappingSpace}
  \Maps\bracket({
    X, Y
  })
  \in 
  \mathrm{TopSp}
\end{equation}
for their \emph{mapping space}, and for $(X, x), (Y,y) \in \mathrm{TopSp}^\ast$ we write
\begin{equation}
  \label{PointedMappingSpace}
  \PointedMaps\bracket({
    X, Y
  })
  \subset
  \Maps\bracket({
    X, Y
  })
  \in 
  \mathrm{TopSp}
\end{equation}
for their \emph{pointed mapping space}.

For example, the circle is naturally a pointed space when realized as
$$
  S^1 
    \simeq 
  \frac{
    [0,1]
  }{
    \{0,1\}
  }
$$
and the \emph{based loop space} of a pointed space is
\begin{equation}
  \label{BasedLoopSpace}
  \Omega X
  :=
  \PointedMaps(S^1, X)
  \mathrlap{\,.}
\end{equation}

Given a pair of pointed spaces $(X,x), (Y,y) \in \mathrm{TopSp}^\ast$, their \emph{smash product} is
\begin{equation}
  \label{SmashProduct}
  X \wedge Y
  :=
  \frac{
    X \times Y
  }{
    X \!\times\! \{y\}
    \,\cup\,
    \{x\} \!\times\! Y
  }
  \mathrlap{\,.}
\end{equation}
Notably
\begin{equation}
  S^{n_1} \wedge S^{n_2}
  \simeq
  S^{n_1 + n_2},
\end{equation}
and generally 
\begin{equation}
  \label{Suspension}
  \Sigma X 
  :=
  S^1 \wedge X
\end{equation}
is the \emph{suspension} of a pointed space.

While the smash product is not Cartesian, it does have a natural diagonal map induced by the Cartesian diagonal, which we thus denote by the same symbol:
\begin{equation}
  \label{SmashDiagonal}
  \begin{tikzcd}[
    row sep=2pt, column sep=30pt
  ]
    X 
    \ar[
      rr,
      "{ \Delta }"
    ]
    \ar[
      dr,
      "{ 
        \Delta 
      }"{swap, sloped}
    ]
    &&
    X \wedge X
    \mathrlap{\,.}
    \\
    & 
    X \times X
    \ar[
      ur,
      ->>
    ]
  \end{tikzcd}
\end{equation}

The \emph{hom-adjunction} is the natural homeomorphism which regards a map of two arguments as a function of the first with values in maps of the second argument:
\begin{equation}
  \label{PointedHomAdjunction}
  \begin{tikzcd}
  \PointedMaps\bracket({
    X \wedge Y
    ,\,
    Z
  })
  \ar[
    rr,
    <->,
    "{ \widetilde{(-)} }",
    "{ \sim }"{swap}
  ]
  &&
  \PointedMaps\bracket({
    X
    ,\,
    \PointedMaps\bracket({
      Y, Z
    })
  })
  \mathrlap{.}
  \end{tikzcd}
\end{equation}
Notably the suspension operation \cref{Suspension} is left-adjoint to forming based loop spaces \cref{BasedLoopSpace}:
\begin{equation}
  \label{SuspensionLoopAdjunction}
  \begin{tikzcd}
    \PointedMaps\bracket({
      \Sigma X
      ,\,
      Y
    })
    \ar[
      r,
      "{ \widetilde{(-)} }",
      "{ \sim }"{swap}
    ]
    &
    \PointedMaps\bracket({
      X
      ,\,
      \Omega Y
    }).
  \end{tikzcd}
\end{equation}

Moreover, since the coproduct of pointed spaces is the \emph{wedge sum}
\begin{equation}
  X \vee Y
  :=
  \frac{ X \sqcup Y }{
    \{x,y\}
  }
  \mathrlap{.}
\end{equation}
we have
\begin{equation}
  \label{PointedMapsOutOfWedgeSum}
  \PointedMaps\bracket({
    X \vee Y
    ,\,
    Z
  })
  \simeq
  \PointedMaps\bracket({X, Z})
  \times
  \PointedMaps\bracket({Y, Z}).
\end{equation}

The \emph{homotopy groups} of $(X,x) \in \mathrm{TopSp}^\ast$ are
\begin{equation}
  \label{HomotopyGroupsViaMappingSpace}
  \pi_n(X,x)
  \simeq
  \pi_0
  \,
  \PointedMaps(S^n, X)
  \,.
\end{equation}

\section{Some Homotopy Theory}
\label{OnSomeHomotopyTheory}

For reference in the main text and to establish our notation, we briefly recall some notions and facts from basic homotopy theory.

A \emph{homotopy} is a path in a mapping space. The \emph{homotopy classes} of maps are the connected components of the mapping space \cref{PointedMappingSpace}:
\begin{definition}
\label[definition]{PointedHomotopyCategory}
The \emph{pointed homotopy category} $\mathrm{Ho}(\mathrm{TopSp}^\ast)$ has as objects the pointed topological spaces which admit CW-complex structure, and as hom-sets the connected components of mapping spaces between these:
$$
  \mathrm{Ho}(\mathrm{TopSp}^\ast)(X,Y)
  \simeq
  \pi_0 
  \,
  \mathrm{Map}(X,Y)
  \mathrlap{\,.}
$$
\end{definition}

In particular, the \emph{homotopy groups} of a pointed space $X$ are
\begin{equation}
  \label{HomotopyGroups}
  \pi_n(X)
  \defneq
  \pi_0\, \PointedMaps\bracket({
    S^n 
    ,\,
    X
  }).
\end{equation}
\begin{example}
Some well-known unstable homotopy groups of spheres in low degrees:
\begin{equation}
  \label{SomeUnstableHomotopyGroupsOfSpheres}
  \begin{aligned}
  \forall_{k < n \in \mathbb{N}}
  \;\;
  \pi_k(S^n) 
  &
  \simeq 
  \;\,0
  \mathrlap{\,,}
  \\
  \forall_{n \in \mathbb{N}_{\geq 1}}
  \;\;
  \pi_n(S^n) 
  &
  \simeq 
  \underbrace{\mathbb{Z}}_{
    \langle [\mathrm{id}] \rangle
  }
  \mathrlap{\,,}
  \end{aligned}
  \;\;\;\;
  \begin{aligned}
  \pi_3(S^2)
  & \simeq 
  \underbrace{
    \mathbb{Z} 
  }_{ \langle [h_{\mathbb{C}}] \rangle }
  \\
  \pi_7(S^4) 
  & \simeq 
  \underbrace{
    \mathbb{Z} 
  }_{ \langle [h_{\mathbb{H}}] \rangle }
  \times 
  \underbrace{
  \CyclicGroup{12}
  }_{
    \langle r_{\mathbb{H}} \rangle
  }
  \\
  \pi_{15}(S^8) 
  & \simeq 
  \underbrace{
    \mathbb{Z} 
  }_{ \langle [h_{\mathbb{O}}] \rangle }
    \times
 \underbrace{
   \CyclicGroup{120}
 }_{ \langle r_{\mathbb{O}} \rangle }
 \mathrlap{\,,}
  \end{aligned}
  \;\;\;\;
  \begin{aligned}
    \pi_6(S^4) 
    & \simeq \CyclicGroup{2}
    \\
    \pi_{14}(S^8) 
    & \simeq \CyclicGroup{2}
    \mathrlap{\,.}
  \end{aligned}
\end{equation}
where 
$[h_{\mathbb{K}}]$ denotes the homotopy class of the $\mathbb{K} \in  \{\mathbb{C}, \mathbb{H}, \mathbb{O}\}$-Hopf fibration $h_{\mathbb{K}}$ (cf. \cite[\S 3.2.3]{SS25-Orient}) (and $r_{\mathbb{K}}$ represents an unstable \emph{remainder} class).
\end{example}

Here the group structure on the homotopy groups is induced by the \emph{H-cogroup structure} on the spheres (cf. \cref{GroupStructureOnComponents} below):

\begin{definition}[{cf. \cite[\S 2.2]{Arkowitz2011}}]
\label[definition]{HGroup}
  An \emph{H-group} is a group internal to the pointed homotopy category (\cref{PointedHomotopyCategory}), hence a pointed topological space (with CW-structure) equipped with a binary operation and inverse which satisfies the group axioms up to (unspecified) homotopy. 

  Dually, an \emph{H-cogroup} is a group object internal to the opposite of the pointed homotopy category.
\end{definition}
\begin{example}
\label[example]{LoopSpaceAsHGroup}
For $(X,x) \in \mathrm{TopSp}^\ast$:
\begin{enumerate}
 \item The loop space $\Omega X$ \cref{BasedLoopSpace} is an H-group (\cref{HGroup}) with binary operation given by concatenation of loops,
 \begin{equation}
   \label{LoopConcatenation}
   \begin{tikzcd}[
     ampersand replacement =\&,
    row sep=-3pt, column sep=0pt
   ]
     \Omega X \times \Omega X
     \ar[
       rr,
       "{ \star }"
      ]
     \&\&
     \Omega X
     \\
     (\ell_1, \ell_2)
     \&\longmapsto\&
     \bracket({
     s 
       \mapsto
     \begin{cases}
       \ell_1(2s) & \text{if } s \leq 1/2
       \\
       \ell_1(2s-1) & \text{if } s \geq 1/2
     \end{cases}
     })
     \mathrlap{,}
   \end{tikzcd}
 \end{equation}
 and inverses given by reversal of loops:
 \begin{equation}
   \begin{tikzcd}[row sep=-2pt, column sep=0pt]
     \Omega X 
     \ar[
       rr,
       "{ \overline{(-)} }"
     ]
     &&
     \Omega X
     \\
     \ell 
     &\longmapsto&
     \bracket({
       s 
         \mapsto 
       \ell(1-s)
     })
     \mathrlap{.}
   \end{tikzcd}
 \end{equation}
 
 \item The suspension $\Sigma X$ \cref{Suspension} is an H-cogroup with binary cooperation
 \begin{equation}
   \begin{tikzcd}[
     ampersand replacement =\&,
    row sep=-3pt, column sep=0pt
   ]
     \Sigma X
     \ar[rr]
     \&\&
     \Sigma X
     \vee
     \Sigma X
     \\
     (s,x)
     \&\longmapsto\&
     \begin{cases}
       \bracket({
         (2s,x), \ast
       })
       & \text{if } s \leq 1/2
       \\
       \bracket({
         \ast,
         (2s-1)
       })
       & \text{if } s \geq 1/2\,,
     \end{cases}
   \end{tikzcd}
 \end{equation}
 and coinverses given by
 \begin{equation}
   \begin{tikzcd}[row sep=-2pt, column sep=0pt]
     \Sigma X
     \ar[
       rr
     ]
     &&
     \Sigma X
     \\
     (s,x) 
       &\longmapsto&
     (1-s, x)
     \mathrlap{\,.}
   \end{tikzcd}
 \end{equation}
\end{enumerate}
\end{example}

\begin{example}
\label[example]{HGroupStructureOnMappingSpace}
  More generally, the mapping spaces \cref{SuspensionLoopAdjunction} out of a suspension or into a loop space
  inherit H-group structure by pointwise group operation, hence where the binary operation on a pair of maps
  $
   \inlinetikzcd{
     f,g
     :
     X \ar[r] \& \Omega S
   }
  $
  is given by the following composite 
  with the smash diagonal \cref{SmashDiagonal} on the left and loop concatenation \cref{LoopConcatenation} on the right:
  $$
    \begin{tikzcd}
      X
      \ar[
        d,
        "{ 
          \Delta 
        }"{swap,pos=.42}
      ]
      \ar[
        rr,
        "{ f \star g }"
      ]
      &&
      \Omega A
      \\
      X \times X
      \ar[
        rr,
        "{ f \times g }"
      ]
      &&
      \Omega A \times \Omega A
      \mathrlap{\,.}
      \ar[
        u,
        "{ \star }"
      ]
    \end{tikzcd}
  $$
\end{example}

\begin{remark}
\label[remark]{GroupStructureOnComponents}
  Under passage to homotopy classes of maps, H-(co)group structure induces genuine group structure: For $(X,x), (A,a) \in \mathrm{TopSp}^\ast$ the H-group structure on the mapping space (\cref{HGroupStructureOnMappingSpace}) induces actual group structures
  \begin{equation}
    \label{IncaranationsOfFundGrpdOfMappingSpace}
    \begin{aligned}
    \pi_0\,
    \PointedMaps\bracket({X, \Omega A})
    & \simeq
    \pi_0\, 
    \PointedMaps\bracket({\Sigma X, A})
    \\
    & \simeq
    \pi_1\, 
    \PointedMaps\bracket({X,A})
    \;\in\;
    \mathrm{Grp}
    \mathrlap{\,.}
    \end{aligned}
  \end{equation}
  In particular, the group structure on homotopy groups \cref{HomotopyGroups} in positive degree, 
  $$
    \pi_{d+1}(A)
    \simeq
    \pi_0\,
    \PointedMaps(
      \Sigma S^d
      ,\,
      A
    )
    \mathrlap{\,,}
  $$
  arises this way.  
\end{remark}

Further in this vein:
\begin{example}
\label[example]{PinchingHCoaction}
  For $X^d$ a CW-complex of dimension $d \geq 1$, and $\inlinetikzcd{ S^{d-1} \ar[r, hook ] \& D^d \ar[r, hook] \&  X}$ an embedding of the boundary of a $d$-ball into the interior of a $d$-cell of $X^d$, then the \emph{pinch map} $\phi$ given by the pushout
  \begin{equation}
    \label{PinchMapPushout}
    \begin{tikzcd}[row sep=12pt, column sep=45pt]
      S^{d-1}
      \ar[
        d,
        hook
      ]
      \ar[r]
      \ar[
        dr,
        phantom,
        "{ \ulcorner }"{pos=.9}
      ]
      &
      \ast
      \ar[d]
      \\
      X^d 
      \ar[
        r,
        "{ \phi }"
      ]
      &
      X^d \vee S^d
    \end{tikzcd}
  \end{equation}
  exhibits an ``H-coaction'' (a coaction up to unspecified homotopies) of the H-cogroup $S^d \simeq \Sigma S^{d-1}$ (from \cref{LoopSpaceAsHGroup}). For any $A \in \mathrm{TopSp}^\ast$ this entails an actual action of the homotopy group $\pi_d(A) \simeq \pi_0\PointedMaps(S^{d}, A)$ (from \cref{GroupStructureOnComponents}) on the homotopy classes of maps from $X^d$ to $A$:
  \begin{equation}
    \label{CoPinchAction}
    \begin{tikzcd}[column sep=16pt]
      \pi_d(A)
      \times
      \pi_0\,
      \PointedMaps(
        X^d
        ,\,
        A
      )
      \ar[
        r,
        "{ \sim }"
      ]
      &
      \pi_0\,
      \PointedMaps(
        X^d \vee S^d
        ,\,
        A
      )
      \ar[
        r,
        "{ \phi^\ast }"
      ]
      &
      \pi_0\,
      \PointedMaps(
        X^d
        ,\,
        A
      )
      \mathrlap{\,.}
    \end{tikzcd}
  \end{equation}
  This is used in the proof of \cref{ConnectedComponentsOfMappingSpaceAreEquiv} above. 
\end{example}


\printbibliography

\end{document}